\begin{document}

\title{Honeybee: Byzantine Tolerant Decentralized Peer Sampling with Verifiable Random Walks}


\author{Yunqi Zhang}
\affiliation{%
  \institution{The Ohio State University}
  \city{Columbus}
  \country{USA}}
\email{zhang.8678@osu.edu}
\orcid{0000-0002-3363-9834}

\author{Shaileshh Bojja Venkatakrishnan}
\affiliation{%
  \institution{The Ohio State University}
  \city{Columbus}
  \country{USA}}
\email{bojjavenkatakrishnan.2@osu.edu}
\orcid{0000-0001-7355-634X}






\newcommand{\OA}{Honeybee}
\newtheorem{property}{Property}

\begin{abstract}
Popular blockchains today have hundreds of thousands of nodes and need to be able to support sophisticated scaling solutions---such as sharding, data availability sampling, and layer-2 methods. 
Designing secure and efficient peer-to-peer (p2p) networking protocols at these scales to support the tight demands of the upper layer crypto-economic primitives is a highly non-trivial endeavor. 
We identify decentralized, uniform random sampling of nodes as a fundamental capability necessary for building robust p2p networks in emerging blockchain networks. 
Sampling algorithms used in practice today (primarily for address discovery) rely on either distributed hash tables (e.g., Kademlia) or sharing addresses with neighbors (e.g., GossipSub), and are not secure in a Sybil setting. 
We present Honeybee, a decentralized algorithm for sampling nodes that uses verifiable random walks and table consistency checks. 
Honeybee is secure against attacks even in the presence of an overwhelming number of Byzantine nodes (e.g., \(\geq50\)\% of the network). 
We evaluate Honeybee through experiments and show that the quality of sampling achieved by Honeybee is significantly better compared to the state-of-the-art. Our proposed algorithm has implications for network design in both full nodes and light nodes.
\end{abstract}

\begin{CCSXML}
<ccs2012>
   <concept>
       <concept_id>10003033.10003068</concept_id>
       <concept_desc>Networks~Network algorithms</concept_desc>
       <concept_significance>500</concept_significance>
       </concept>
   <concept>
       <concept_id>10003752.10003809.10010172</concept_id>
       <concept_desc>Theory of computation~Distributed algorithms</concept_desc>
       <concept_significance>300</concept_significance>
       </concept>
   <concept>
       <concept_id>10003033.10003039.10003051.10003052</concept_id>
       <concept_desc>Networks~Peer-to-peer protocols</concept_desc>
       <concept_significance>500</concept_significance>
       </concept>
   <concept>
       <concept_id>10003033.10003079</concept_id>
       <concept_desc>Networks~Network performance evaluation</concept_desc>
       <concept_significance>300</concept_significance>
       </concept>
   <concept>
       <concept_id>10003033.10003083</concept_id>
       <concept_desc>Networks~Network properties</concept_desc>
       <concept_significance>300</concept_significance>
       </concept>
   <concept>
       <concept_id>10003033.10003106.10003114.10003115</concept_id>
       <concept_desc>Networks~Peer-to-peer networks</concept_desc>
       <concept_significance>500</concept_significance>
       </concept>
   <concept>
       <concept_id>10010147.10010178.10010219.10010220</concept_id>
       <concept_desc>Computing methodologies~Multi-agent systems</concept_desc>
       <concept_significance>300</concept_significance>
       </concept>
 </ccs2012>
\end{CCSXML}

\ccsdesc[500]{Networks~Network algorithms}
\ccsdesc[300]{Theory of computation~Distributed algorithms}
\ccsdesc[500]{Networks~Peer-to-peer protocols}
\ccsdesc[300]{Networks~Network performance evaluation}
\ccsdesc[300]{Networks~Network properties}
\ccsdesc[500]{Networks~Peer-to-peer networks}
\ccsdesc[300]{Computing methodologies~Multi-agent systems}

\keywords{blockchain, peer-to-peer, node sampling}


\maketitle

\section{Introduction}\label{sec:intro}
Blockchains are steadily maturing into ecosystems that support decentralized applications (dapp) in diverse domains---finance, payments, storage, games, healthcare etc.---and used by millions of clients for conducting billions of dollars of transactions each day~\cite{coinmarketcap}. As demand for dapp usage increases, it is important that the blockchains can handle the high rate of transaction requests. 
Today major blockchains such as Bitcoin and Ethereum can process at most a few ten transactions per second on average which has resulted in unacceptably high transaction fees during periods of high demand~\cite{ethertxnfee}. 

To improve the scalability of blockchains, solutions such as state sharding, data sharding, and layer-2 methods (e.g., rollups) have been proposed. 
As these solutions are integrated into mainstream blockchains (e.g., Ethereum, Cardano, etc.), we are also seeing a big surge in the scale of the blockchain networks. 
For instance, today the Ethereum blockchain contains close to a million validator nodes.  
At these scales, designing secure and effective peer-to-peer networking protocols to support the complex upper-layer crypto-economic primitives is a significant challenge. 
Blockchains today are overtly reliant on random graph topologies, distributed hash tables (notably Kademlia), and p2p publish-subscribe protocols (such as GossipSub) for their networking needs. 
However, even the most basic problem of how to design a p2p network that eliminates eclipse attacks for a proof-of-work or proof-of-stake chain has received little attention in the literature. 
While heuristic approaches have offered an acceptable network performance historically, a growing network scale together with complex messaging primitives (beyond just broadcast) creates an urgent need to research principled approaches for p2p network design. 

In particular, we consider the ability of a node to sample peers uniformly at random from the network without reliance on centralized entities---especially when the network contains Byzantine or Sybil nodes---as a crucial capability necessary for emerging large-scale blockchain networks. 
A decentralized peer-sampling service provides several benefits. 
It enables peers to construct (and maintain) a robust overlay that is secure against eclipsing and network partitioning attacks. 
Even under a high fraction of dishonest nodes (e.g., 50\%), by randomly sampling and connecting to peers, honest nodes can ensure they form a connected subgraph. 
A tightly connected subgraph of honest nodes helps provide strong safety and liveness guarantees for the consensus algorithm running on the network. 
E.g., a ``balancing attack'' in proof-of-stake Ethereum becomes possible when adversaries form connections such that honest nodes have a lower network latency to adversarial nodes than to other honest nodes~\cite{neuder2021low,schwarz2022three,neu2022two}. 
This allows attackers to carefully time and release votes and blocks to honest nodes slowing down consensus.
A balancing attack is also possible if the adversary can obtain accurate statistics on the message propagation latencies in the network. 
If peers continuously obtain fresh address samples and regularly update the overlay, such a strategy will not work. 

A random peer sampling service also helps with sharding and data availability sampling. 
E.g., to address the scaling problem, the Ethereum community is focusing on developing a data-sharding design in which (1) block sizes are increased from the current 80 kB (average) to as big as 30 MB~\cite{dankshardingblocksize} and erasure coded, (2) each validator node stores only a small chunk of each coded block. 
Unlike Ethereum 1.0 which used a simple broadcast gossip primitive to disseminate transactions and blocks in the network or Ethereum 2.0 which uses a publish-subscribe model to disseminate different types of messages (e.g., attestations, blocks, transactions) over different subnets, a node in a data-sharded Ethereum must additionally be able to check the availability of a full block by sampling random chunks from other nodes~\cite{al2018fraud}. 
This requires a requesting node to be able to contact randomly selected nodes in the network and download chunks. Having the ability to sample random chunks is important not only for full functional nodes, but also for light nodes (e.g., a wallet running on a smart phone) which can number in the millions compared to the thousands of full nodes available today~\cite{ethmainnetstat}.
Peer sampling also helps blockchains employing full sharding. 
During a shard reconfiguration event, a migrating node must quickly discover nodes and download the latest state from the destination shard to minimize disruptions in the consensus. 
Last, peer sampling can help with server discovery in large-scale decentralized applications. 
E.g., in Livepeer a user selects a media server for encoding and distributing their video streams. 
A capability to select servers randomly not only helps with load balancing but can also add to security. 
The emerging trend of modular blockchains advocates for a separate data availability layer in the blockchain stack, the implementation of which can make use of peer sampling services as well~\cite{celestia,eigenda}. 



In an open, permissionless and decentralized setting that blockchains operate on, developing an algorithm by which a node can uniformly sample other nodes in the p2p network is far from straightforward. 
At present, the two common approaches for peer discovery are using (1) the Kademlia distributed hash table (DHT) protocol~\cite{maymounkov2002kademlia}, or (2) p2p exchange of addresses between neighbors in the overlay (e.g., used by GossipSub).  
To discover a random node in the network with Kademlia, all a node has to do is issue a query for a randomly selected target identifier and receive the IP address information of the node that is closest to the target identifier (in Kademlia’s XOR distance sense)~\cite{discv5,discv4}. 

In this paper we argue that in the presence of adversarial nodes (or Sybils), neither of the currently followed approaches achieve uniform sampling and can cause a requesting node to become eclipsed. 
As a potential replacement, in this paper we propose Honeybee which is a fully decentralized p2p algorithm for performing uniform node sampling even in the presence of adversarial nodes. A Honeybee node achieves sampling by participating in several random walks over the p2p overlay. Each node in Honeybee maintains an address table containing addresses of the most recently sampled peers which are also used to progress random walks of other peers visiting the node. To protect against adversarial attacks, Honeybee uses a verifiable public source of  randomness (e.g., can be derived from the blockchain) 
to perform the walks. Additionally, Honeybee nodes also perform peer-to-peer reconciliation of node address tables to identify and expose attackers engaging in equivocating their address tables.  

Sampling nodes in a distributed system using random walks has historically been well-studied in the context of applications such as overlay monitoring, design of expanders, search, routing, resource management etc~\cite{awan2006distributed}. However, prior works in this space consider models that do not simultaneously satisfy our requirements: (1) the algorithm must be decentralized, (2) there can be a large number (e.g., constant fraction) of adversarial nodes, and (3) adversarial nodes can exhibit arbitrary Byzantine behavior (e.g., message insertions, deletions during gossip). E.g., Anceaume et al.~\cite{anceaume2013power,anceaume2013uniform} consider achieving sampling through streaming messages between neighbors, but assume there are no Sybil attacks. Augustine et al.~\cite{augustine2015enabling} consider a dynamic p2p network model where the attacker decides how the network churns from round to round. However, within each round the assumption is that gossiping happens without any message loss. The works Awan et al.~\cite{awan2006distributed}, Gkantsidis et al.~\cite{gkantsidis2004random}, propose distributed algorithm for sampling in unstructured p2p networks, but do not consider adversarial node behavior. 

We do not attempt to present a full-fledged p2p network design for the applications outlined above (broadcast, data availability sampling, sharding, server discovery etc.) in this paper. 
Rather we posit that a uniform node sampling capability will have a central role to play in the eventual overall network design for these applications. 

We evaluate Honeybee through a custom simulator we have built.\footnote{Code will be made open source.} Compared to the baseline algorithms GossipSub and Kademlia, Honeybee achieves the same level of near-uniform sampling when all nodes in the network adhere to the protocols. However, when the network contains adversarial nodes, Honeybee outperforms GossipSub and Kademlia by 4-63\% in terms of sampling adversarial peer ratio. We define that an algorithm achieves \(\epsilon\)-uniform sampling when the sampling adversarial peer ratio from the algorithm is bounded from above by the sum of the true adversarial nodes ratio in the network and \(\epsilon\). Under such standard, Honeybee consistently achieves 0.03-uniform sampling with 5-70\% adversarial nodes in the network.

\section{System Model}\label{systemModel} 

\subsection{Network and  Security Model} \label{net_model}

We consider a network comprising of $n$ nodes, out of which a fraction $f$ of the nodes are adversarial (dishonest). 
Nodes that are not adversarial are called honest. 
We denote the set of honest and adversarial nodes by $V_h$ and $V_a$ respectively. 
$V$ denotes the set of all nodes.
Each node has a unique network address (i.e., IP address, port). A node can connect and communicate with another node if it knows the latter’s network address. Each node has a small memory of size $k$, for storing information about $k$ other nodes in the network.\footnote{We refer to the memory space also as an address table.} Apart from the network address, we assume a node also has a public, private key pair used for signing messages and issuing commitments. Since the memory space is small, an honest node cannot know the network addresses of the entire network. However, adversarial nodes can pool together the addresses they know of to conduct attacks.  

We consider Byzantine adversaries in that they can arbitrarily deviate from our proposed protocol. Examples of Byzantine behavior include not responding to or arbitrarily delaying requests or sending malicious messages to victims. We assume the network connection between nodes is reliable  (i.e., a synchronous model), and do not model message loss or delays.     

When a node first joins the network, it contacts a bootstrapping server from which it receives information (network address, public key) about $k$ random nodes in the network. It is common for practical p2p networks (e.g., Ethereum’s network) to use bootstrapping servers with hardcoded IP address to help new peers join the network~\cite{eth2netlayer}. Once a node joins the network, it must run its own discovery protocol and cannot query the bootstrapping server for fresh addresses. Note that the initial set of addresses the bootstrapping server provides can include adversarial nodes as well. 

We assume each IP address has an associated unique public key. 
The public key to IP address binding of a node is attested by the bootstrapping server through a certificate signed by the server. While it is possible to use a decentralized public-key infrastructure for this purpose~\cite{shi2022blockchain}, we consider the bootstrapping servers as the trusted certificate authority in our setting for simplicity. 

Time is divided into rounds $t=0,1,2,\ldots$. During a round, a node can send or receive messages of total size $l$, where $l$ is again a small constant that models the bandwidth constraints of the node. Lastly, we assume nodes have access to a fresh public random number each round. In practice, a new block is produced in Ethereum every 12 seconds. Thinking of 12 seconds as a round, a random number can be derived each round from the header of the block for that round. Note that we require nodes to only download the block header to compute the randomness, and not the entire block, which is particularly useful for light nodes. 
Alternative public and trusted sources of randomness may also be used as we will discuss in \S\ref{practical_consider}.


\subsection{Problem Statement} 
\label{s:probstmt}

For any node $v \in V$, let $M_v(t)$ denote the contents of the address table (i.e., memory) of node $v$ at time $t$. For any $i \in \{1, 2, \ldots, k\}$, let $M_v^i(t)$ be the $i$-th address in $M_v(t)$. 
Our objective is to design a decentralized algorithm  by which an honest node can continuously sample nodes uniformly at random from the network. Specifically, for any node $v \in V_h$, time $t$ and index $ i \in \{1, 2, \ldots, k\}$, we want 
\begin{align}
P(M_v^i(t) \in V_h) &\geq 1-f\\
P(M_v^i(t) = u) &= P(M_v^i(t) = u’)
\end{align}
for any honest nodes $u, u’ \in V_h$. The reason we lower bound the probability is achieving perfect uniform sampling would be impossible if the adversarial nodes do not participate in the protocol. On the other hand if adversarial nodes all behave honestly perfectly uniform random sampling must be possible. 

The requirements outlined above can be trivially satisfied if a node downloads random addresses from the bootstrapping servers once initially and does not update the addresses afterward. Therefore, we qualify our objective by additionally requiring that a node must add at least one fresh sample to its address table every $\Delta > 0$ rounds, i.e., 
\begin{align}
M_v(t+\Delta) \backslash M_v(t) \neq \{ \}, 
\end{align}  
for all $t > 0$. In addition, since it is infeasible for the bootstrapping servers to handle the overhead as network size increases, we preclude the trivial solution of refreshing addresses by periodically downloading random addresses from the bootstrapping servers. Furthermore, for any time $t$, let $R_v(t)$ be the most recently added address to $v$’s address table. We want the newly sampled node to be independent of the past samples, i.e., 
\begin{align}
P(R_v(t) = u | M_v(0), M_v(1), \ldots, M_v(t - \Delta)) = P(R_v(t) = u),  
\end{align} 
for all $u \in V_h$ and $t > \Delta$.

\section{\OA} \label{OA}
\subsection{Algorithm Overview}
We propose \OA. \OA~is a fully decentralized p2p algorithm that tackles the node sampling problem by conducting secure near-uniform sampling of peers from the entire network.\footnote{Note that near-uniform is defined for peer sampling in \S\ref{sec:intro} and \S\ref{s:probstmt}.}
In \OA, a node samples an address by carrying out a random walk of sufficient length on the network.
To ensure the samples are uniformly random, the graph on which the random walk is conducted is designed to be regular resulting in a doubly-stochastic probability transition matrix with a uniform stationary distribution for the Markov chain~\cite{aldous1983random,lovasz1993random}.
The graph on which the random walks are executed is different from the network's communication overlay, which can have an arbitrary topology per the application's specifications. 
To ensure that the regularity of the random walk graph is maintained and random walks are carried out correctly, \OA~relies on two key ideas: (1) random walk paths (and their lengths) are decided pseudo-randomly using verifiable random functions and public randomness source (details in \S\ref{vrw}); (2) addresses sampled by a node are independently and secretly cross-verified by other nodes. If an inconsistency is discovered, a fraud proof is issued to the network (details in \S\ref{tcc}). 

\subsection{Address Table Design}
 
A \OA~node $v$ maintains three tables: an address table, an encounter table, and a connection table.
The address table holds the addresses of the nodes most recently sampled by $v$. 
Entries in the address table are regularly replaced by fresh samples through \OA's sampling algorithm.
The encounter table holds addresses of nodes most recently visited by $v$ as part of \OA's random walks. 
Addresses in the encounter table help to bootstrap random walks when the address table does not contain sufficient addresses. 
A node's connection table contains peers with whom the node communicates.
The connection table can be a subset of addresses from the current or past address table. 

When a node joins the \OA~network, it first contacts a bootstrapping server to receive an initial set of random peer addresses for its address table. 
Bootstrapping servers are a small group of trustworthy nodes that assist other nodes with their initial configurations. 
We assume the bootstrapping server has the capability to deliver uniform peer samples to requesting nodes. 
Such a capability can be achieved by letting the bootstrapping nodes also perform periodic peer sampling, or by other means. 

The address table for node $v$ consists of two sub-tables: an outgoing address table $\Gamma^v_{out}$ that contains data for at most \(n_{out} = k/2\) peers and an incoming address table $\Gamma^v_{in}$ that contains data for at most \(n_{in} = k/2\) peers.
The outgoing address table contains node addresses that $v$ has sampled, while the incoming address table contains the addresses of nodes which have sampled $v$.  In \OA, for two nodes $u, u'$, if $u' \in \Gamma^u_{out}$ then $u \in \Gamma^{u'}_{in}$ and vice-versa.  
To ensure this property, if a node \(u\) samples a node $u'$, it first sends a request to $u'$. 
If node $u'$ accepts the request, \(u\) adds $u'$ to the outgoing address table of \(u\), while \(u'\) adds \(u\) to the incoming address table of \(u'\).
This process involves $u$ and $u'$ signing a `peering agreement' with an expiration time (e.g., $k\Delta$). 
If the peering agreement is not renewed before its expiry, then $u$'s address in the address table of $u'$ (and vice-versa) is considered invalid. 
Address table updates in \OA~are thus bilateral. 
Conversely, when $u$ updates its address table to replace $u'$ by another address, $u$ communicates the update to $u'$ so that $u'$ can remove $u$ from its address table. 
Addresses that are not removed cooperatively from a node's address table naturally time out and become verifiably invalid.

If \(u\) has \(u'\) in its address table, \(u\) stores the following data for $u'$ in the address table: public key of \(u'\) (certified by a bootstrapping server, or by a decentralized certificate authority), a snapshot of the address table of \(u'\), IP and port address of \(u'\), and a peering agreement signed by both \(u\) and \(u'\) with an expiry timestamp. 
A node's address table snapshot contains everything in its address table except for its peers' address table snapshots. 
These snapshots help to verify the correctness of random walks as discussed next.\footnote{We refer to the shared copy of an address table as a snapshot because, while it may be susceptible to modification or tampering, it can also serve as evidence against the table's owner later, as it is digitally signed.} 

\begin{algorithm}[!tbp]
\DontPrintSemicolon
\SetKwInOut{Input}{input}\SetKwInOut{Output}{output}
\Input{peers $\Gamma^v_\text{curr}$ in outgoing address table of current epoch; 
shared pseudo-randomness seed $\mathcal{R}$;
$v$'s secret key $SK_v$}
\Output{updated set of peers $\Gamma^v_\text{next}$ for next epoch}
\If {$\mathcal{R}$ indicates $v$ is a currently eligible node} {
\tcc{$v$ infers the random walk path length $p$ from $\mathcal{R}, v, SK_v$.}
$p_c \leftarrow 0 $ \tcc{Hop counter of the random walk}
$u^* \leftarrow v $ \tcc{$u^*$ stores node on current hop of the random walk} 
\tcc{$d$ stores the index of node on first hop of the random walk} 
\While {$p_c < p$} {
\tcc{$v$ performs one hop of the iterative verifiable random walk and updates $p_c, u^*$; $v$ issues a fraud proof against $u^*$ when snapshots from VRW or TCC for the same node $u^*$ differs materially (more details in Appendix \ref{alg1_details})}
}
$\Gamma^v_\text{next} \leftarrow \Gamma^v_\text{curr} \backslash \{ \Gamma^v_\text{curr}(d) \} \cup \{ u^* \}$\; 
}
\caption{Algorithm outline for updating entries of the outgoing address table of node $v$ in each epoch.}
\label{alg1}
\end{algorithm}

In Algorithm \ref{alg1}, we present the overall template for outgoing address table update of \OA. 

\subsection{Verifiable Random Walks (VRW)} \label{vrw}

In each round of \OA, a fraction $\eta$ of nodes in the network attempt to obtain a fresh sample for their outgoing address table. 
Restricting address table updates to only a fraction of the nodes avoids excessive churn in the address tables which in turn helps maintain the stationary distribution of the random walks to be uniform.  
A node determines its eligibility to conduct a random walk during a round by computing a verifiable random function (VRF) over the shared public randomness for the round.   
When a node $u$ becomes eligible during a round, it performs a random walk to replace one of its outgoing table peers. 

A key challenge in performing a random walk is adversarial routing: 
if a random walk from an honest node meets a dishonest node, the dishonest node can route the random walk to another dishonest node and eventually guide the random walk to a dishonest destination. 
To avoid adversarial routing, in \OA~the length $l$ and path $p$ of a node $u$'s random walk are determined pseudo-randomly using the round's public randomness and the VRF.
The input to the VRF includes the current round number and public randomness value. 
The path $p$ of the random walk is implicitly specified as a sequence $x_1, x_2, \ldots, x_l$ of indices where each $x_i \in \{1, 2, \ldots, k\}$ for all $i$.
To conduct a random walk, the node $u$ visits each node of $p$ sequentially. 
Supposing node $u$ visits node $u'$ for hop $i-1$, and moves to node $u''$ for hop $i$. 
Recall that $u'$ has a snapshot of the address table of $u''$, and vice-versa. 
When $u$ contacts $u''$, node $u''$ first verifies that the most recent hop was executed correctly, i.e., the $x_{i-1}$-th entry in the address table of $u'$ is $u''$. 
Node $u''$ also verifies whether the current hop number for $u$ is not greater than $l$. 
To do this, $u$ submits snapshots of the address tables from each of the $i-1$ nodes visited so far in the walk to $u''$. 
The address table snapshots collected by $u$ are signed by their respective owners. 
A table consistency check algorithm (\S\ref{tcc}) verifies that the signed address table snapshots are valid, and thus verifies the random walk length.  
The walker $u$ then requests node $u''$ to provide the IP address of the $x_i$-th address within the address table of $u''$. 
To verify whether the received address is correct, $u$ checks the address received from $u''$ against the snapshot of the address table of $u''$ received from $u'$. 
If the address is verified to be correct, node $u$ then proceeds to the $i+1$-th step of the walk. 
Finally, when $u$ reaches the last node of the walk it requests the node to be added as a new sample within $u$'s table. 
If the requested node accepts the request, the initiator node $u$ and the destination node add each other to their address tables and establish a peering agreement.
A snapshot of $u$'s updated address table is shared with all peers in its address table. 
For a node, we call the time between successive roles as an eligible node as an epoch. 

\begin{figure}[!tbp]
  \centering
  \subfloat[Honeybee Random Walk]{\includegraphics[width=0.36\textwidth]{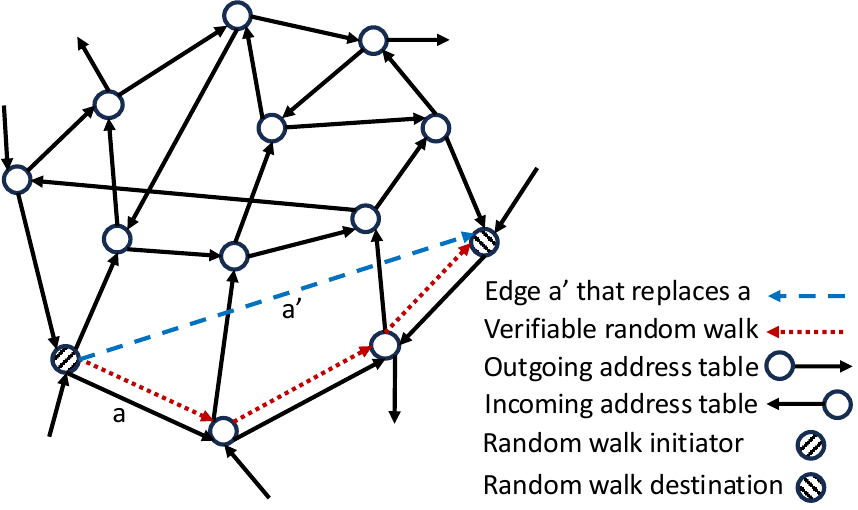}}
  \hfill
  \subfloat[Honeybee VRW]{\includegraphics[width=0.19\textwidth]{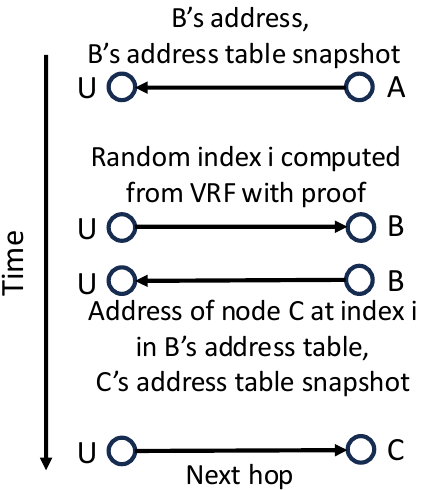}}
  \hfill
  \subfloat[Honeybee TCC]{\includegraphics[width=0.27\textwidth]{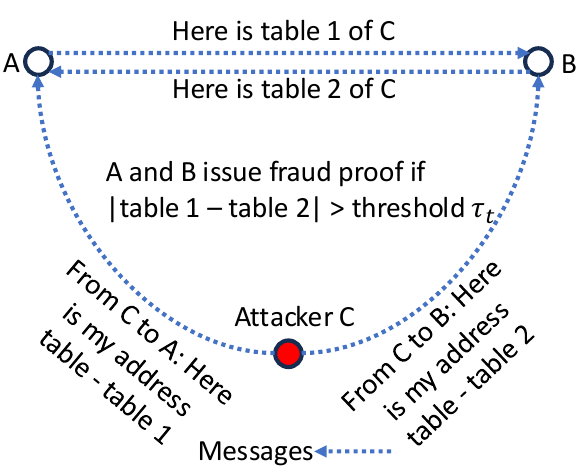}}
  \caption{Honeybee illustrations: (a) an example of \OA~verifiable random walk (VRW) sampling with three hops; (b) how VRW works from hop to hop; (c) how table consistency check (TCC) works among nodes.}
  \label{fig:diagrams}
\end{figure}

 In Fig. \ref{fig:diagrams}a, we give an example of \OA~verifiable random walk sampling with three hops, and the destination node replaces the node on the first hop in the outgoing address table of the initiator node at the end of the sampling. Fig. \ref{fig:diagrams}b shows how a verifiable random walk works.

\subsection{Table Consistency Checks (TCC)} \label{tcc}

A dishonest node can store multiple address tables, and use different address tables to handle different requests from different nodes for various purposes (e.g., traffic attraction, adversarial routing, etc.). 
We refer to the situation where a node uses more than one address table as ``equivocal tables.'' To prevent equivocal tables, we employ table consistency checks (TCC) in \OA. 
When a node $u$ visits a node $u'$ during a random walk, $u$ and $u'$ determine whether there are any addresses common to the address and encounter tables of $u$ and $u'$. 
If an overlapping node \(u''\) is found in the tables of both \(u\) and \(u'\), subsequently \(u\) and \(u'\) will further compare the address table snapshots of node \(u''\) as stored by $u$ and $u'$ to check if there is any difference. 
If the difference is higher than a threshold \(\tau_t\) (i.e., more than $\tau_t$ addresses are different in the snapshot versions of $u$ and $u'$), then \(u\) and \(u'\) will issue a fraud proof against node \(u''\) and the system may choose to slash node \(u''\). 
\(\tau_t\) is predefined by the system, and \(\tau_t\) increases as \(t\) (time difference between two snapshot versions) increases. \(t\) is bounded from above by a system-defined parameter \(t_{max}\) such that nodes only need to store a limited history of their past random walks.\footnote{Raising \(t_{max}\) improves system security but increases storage demands on nodes. In the analysis (Appendix~\ref{theory}), we use \(t_{max}=\frac{k}{2\eta}\).}
If the ratio of different peers in two address table snapshots of node \(u''\) within time \(t\) exceeds \(\tau_t\), then we consider the evidence is significant enough for nodes to issue a fraud proof for node \(u''\). Node \(u''\) can refute the TCC fraud proof against it only if it can provide legitimate random walk history that can explain the difference. Fig. \ref{fig:diagrams}c gives an example of a table consistency check. 

\section{Evaluation} \label{eva}
In this section, we evaluate \OA~and the baseline algorithms and compare their performances. In \S\ref{setup}, we introduce the baseline algorithms and present the experimental setup. We then evaluate \OA~and the baseline algorithms from different perspectives and provide the results in \S\ref{res}.

To understand the effectiveness of \OA~theoretically, we conduct an analysis of \OA~and include the results in Appendix~\ref{theory}.

\subsection{Experimental Setup} \label{setup}
Here, we describe our experimental setup, which includes the baseline algorithms, network layouts, adversary configurations , and adversary strategies.

\subsubsection{Baselines} \label{baselines}
Since the main goal of \OA~is to conduct near-uniform peer sampling in p2p networks, we compare \OA~with two arguably most potential candidate algorithms under such setting \cite{Neu_2022} - Kademlia \cite{maymounkov2002kademlia} and GossipSub \cite{vyzovitis2020gossipsub}. We briefly address them below.
\begin{itemize}
    \item Kademlia: Kademlia is one of the most popular p2p protocol in today's Internet. It is used in various systems including Ethereum, Swarm, Storj, IPFS, etc. A Kademlia node has a binary node ID that is randomly assigned when the node joins the network. To route messages, each Kademlia node has a routing table consists of sub-tables called ``k-buckets,'' and the number of k-buckets it has is equal to the length of its node ID. The \(i\)-th k-bucket of node A stores peers with node IDs that share the first \(i-1\) bits with node A. A Kademlia node discovers new peers mainly with lookups. When node \(A\) performs a lookup on node ID \(B\), \(A\) sends the lookup message to the neighbor(s) whose node ID is closest to \(B\) in terms of XOR distance. The neighbor(s) and the nodes on the subsequent hops repeat this procedure until they find the closest node(s) to \(B\). For more details about Kademlia, we refer the reader to the Kademlia paper \cite{maymounkov2002kademlia}.

    In our evaluation, the Kademlia nodes each has 14 k-buckets and each k-bucket has size of 3 (note that some buckets may never have enough peers to reach the size limit) such that the baseline algorithms have similar memory space sizes, and we use \(\alpha=3\) for all lookups to make Kademlia more efficient (i.e., we send out three parallel queries to perform efficient lookups).

    \item GossipSub: GossipSub is arguably the most renowned publish-subscribe gossip network in today's Internet. It is used in libp2p, an open source project from IPFS. A GossipSub node has its mesh connections and gossip connections. The mesh connections are bidirectional connections, and nodes use them to send full messages. The gossip connections are unidirectional, and nodes use them to send metadata only. A GossipSub node discovers new peers mainly with peer exchanges, in which a node shares the information of some of the peers it knows with others. For more details about GossipSub, we refer the reader to the GossipSub paper \cite{vyzovitis2020gossipsub}.
    
    In our evaluation, we simulate peer discovery in GossipSub with peer exchanges. We simplify the scoring function by giving every peer the same score to fit GossipSub into the peer sampling background. The scoring function from GossipSub may not help honest nodes in the peer sampling setting. The scoring function uses parameters such as time in mesh, first message deliveries, and mesh message delivery rates. With these parameters, dishonest nodes can strategically exploit the scoring function and behave very well in terms of scores before eclipsing an honest node. We use \(D\_high=12\), \(D=8\), and \(D\_low=6\) for GossipSub nodes, which are the same as the values used in the GossipSub paper.\footnote{\(D\_high\), \(D\), and \(D\_low\) refer to the admissible mesh degree bounds~\cite{vyzovitis2020gossipsub}.}
\end{itemize}

\subsubsection{Network} \label{network_layout}
We built a discrete-event network simulator using Python based on the model description in \S\ref{net_model}. We simulate three p2p networks - the Kademlia network, the GossipSub network, and the \OA~network. Each of these networks consists of 16,384 (\(=2^{14}\)) nodes, and 17 of them (\(\approx0.1\%\)) are the bootstrap nodes. The Kademlia network simulates the Kademlia network as discussed in \S\ref{baselines}, and the GossipSub network simulates one overlay of the GossipSub network as discussed in \S\ref{baselines}. The \OA~network simulates \OA~as discussed in \S\ref{OA}. Similar to the baselines, the \OA~nodes each uses an address table size of 24. The address table of a \OA~node consists of the outgoing address table and the incoming address table, each contains at most 12 nodes. In all three networks, a node's routing table is defined as the peers whom it knows and has access to (i.e., stores IP addresses), and we assume there is no churn from the joining and leaving of nodes. 

\subsubsection{Adversary} \label{attack_config}
For each of the three p2p networks, we simulate scenarios in which attackers control 5\%, 10\%, 20\%, 30\%, 40\%, 50\%, 60\%, 70\%, and 80\% of the total nodes. The attacker-controlled dishonest nodes (e.g., Sybil nodes) are randomly chosen from all the nodes at the beginning of the simulation.\ To disturb the network and compromise potential victim node(s), the dishonest nodes employ strategies addressed in \S\ref{attack_strategy} during the simulation. Attackers have full control of their dishonest nodes and are able to build a desired topology with their dishonest nodes. We consider two types of initial dishonest node layouts: mixed layout and clustered layout.
\begin{itemize}
    \item Mixed layout: the dishonest nodes are mixed into the p2p network. All the nodes in the network are connected to each other in a random way, and the dishonest nodes are simply among them. We consider this layout the most natural initial attacker layout. 
    \item Clustered layout: the dishonest nodes form one or more clusters. Most (\(\approx\)98\%) of the dishonest nodes only connect to each other while 2\% of the dishonest nodes have connections to the honest cluster. The dishonest nodes whose connections include a connection to the honest nodes are named gateway Sybil nodes. The dishonest nodes who do not have a connection with the honest nodes are named trap Sybil nodes. The two types of dishonest nodes may switch their roles during the simulation.
\end{itemize}

\begin{figure}[!tbp]
  \centering
  {\includegraphics[width=0.38\textwidth]{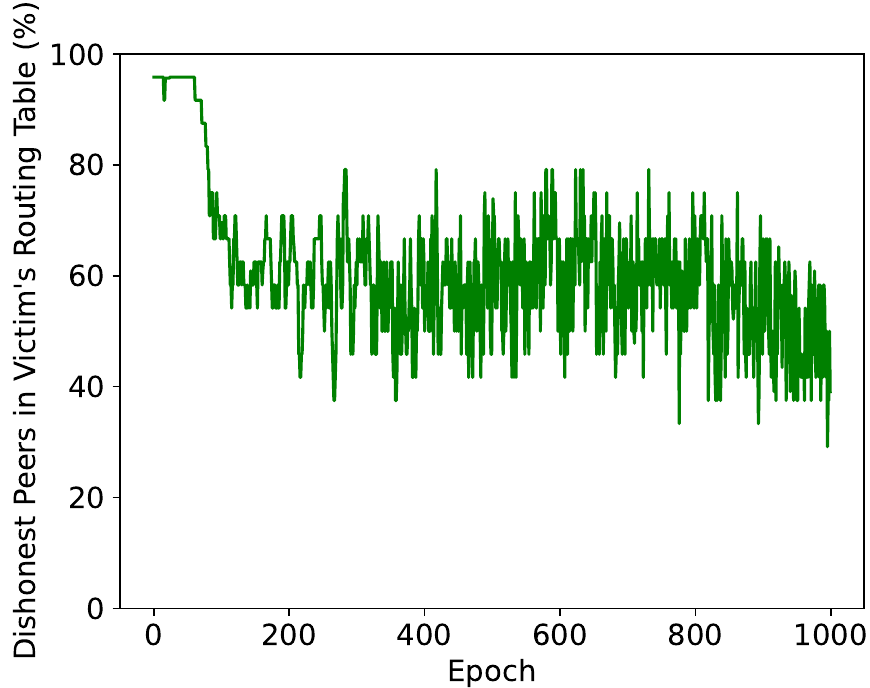}}
  \caption{In a Honeybee network with 50\% dishonest nodes, all dishonest nodes cluster to attack a victim whose initial address table contains only one honest outgoing neighbor (an unlikely adverse scenario). We conduct five separate such experiments and present the worst-case outcome for the victim.}
  \label{fig:cluster_attack}
\end{figure}

At first glance, a lethal attack in the clustered layout setting in which an honest victim node whose random walks venture into the dishonest nodes' cluster is completely trapped (eclipsed) within the dishonest node cluster appears feasible.  
However, in Fig. \ref{fig:cluster_attack}, we show that such an attack is unlikely to succeed if the number of dishonest nodes is not significantly greater than the number of honest nodes (the settings considered in our experiments). 
If the victim node has even a single outgoing neighbor in the honest cluster (i.e., is not totally eclipsed), incoming connections requests from the honest cluster quickly increase the fraction of honest peers within the victim's address table. 
Moreover, with verifiable random walks, attracting a target victim node's walks inside the adversary's cluster is also not easy. 
In our evaluations, we therefore focus on the mixed layout as 
the more damaging setting to the honest nodes. 

We focus on two types of victims: single victim node and multiple victim nodes. In the first case, dishonest nodes target a random honest node throughout the simulation (e.g., attackers attempt to compromise a client). In the second case, dishonest nodes target all honest nodes throughout the simulation (e.g., attackers attempt to compromise an organization or a company). In both cases, the goal of attackers is to eclipse the victim(s) by inserting as many dishonest nodes into the address tables of the victim(s) as possible.\footnote{In this paper, an eclipse refers to a full eclipse---a situation where all the neighbors of one or more victims are dishonest.}

\subsubsection{Adversary Strategies} \label{attack_strategy}
In our evaluation, attackers employ a wide range of adversary strategies using the nodes they control. We categorize the adversary strategies into two types: active strategies and passive strategies. 
The active adversary strategies include request flood, adversarial routing, adversarial peer selection, and equivocal table defined as follows: 
\begin{itemize}
  \item Request flood: attackers can repeatedly send requests (e.g., connection requests) to the victim node(s) from their pseudonymous identities until the victim node(s) get eclipsed or the attackers achieve their goal through other means. It is difficult for the honest nodes to detect and defend against this strategy since, in a permissionless p2p network, attackers may own a large number of pseudonymous identities and can send requests from different identities. It is challenging to distinguish the dishonest nodes from the honest nodes before attackers cause actual damage to the victim node(s).
  \item Adversarial routing: once a request reaches a dishonest node, the dishonest node may route the request in its favor to achieve certain goals. The goals include but are not limited to: preventing the request from reaching its intended destination, guiding the request to a dishonest destination, causing overhead (e.g., delays) for the request initiator, and causing inaccurate judgment (e.g., inaccurate neighbor scoring) for the request initiator. 
  \item Adversarial peer selection: an honest node should select its peers in a random way or based on certain bona fide rules. An honest node should not make its peer selection decisions based on the identity of the candidate peers (and it should not be able to). However, a dishonest node may make such decisions based on the identity of the candidate peers. For example, a group of dishonest nodes may choose to add each other to their routing tables to form a dishonest cluster. 
  \item Equivocal table: a dishonest node can keep multiple copies of routing tables, each storing different peers, and use different routing tables to route different requests. This strategy can prevent the request from reaching its intended destination, guide the request to a dishonest destination, help other dishonest nodes with load-balancing, cause overhead (e.g., delays) for the request initiator, and cause inaccurate judgment (e.g., inaccurate neighbor scoring) for the request initiator. Equivocal table attacks are dangerous because they are important means to further attacks. 
\end{itemize}

The passive adversary strategies include selective request acceptance, adversarial recommendation, and black hole defined as follows: 
\begin{itemize}
  \item Selective request acceptance: an honest node should accept/reject a connection request in a random way or based on certain bona fide rules. An honest node should not make its acceptance/rejection decisions based on the identity of the request initiator (and it should not be able to). However, a dishonest node may make such decisions based on the identity of the request initiator. For example, attackers and their pseudonymous identities may choose to reject requests from all the honest nodes except for its targeted victim node(s). 
  \item Adversarial recommendation: upon request, an honest node should recommend (i.e., share the information of) a peer to another node in a random way or based on certain bona fide rules. An honest node should not make its recommendation decisions based on the identity of the request initiator and (and it should not be able to). However, a dishonest node may make such decisions based on the identity of the request initiator. For example, an honest victim's dishonest neighbor may choose to recommend other dishonest nodes to the victim. This strategy is similar to request flood but in a passive manner. 
  \item Black hole: once a victim request reaches a dishonest node, the dishonest node may ignore the request completely (i.e., being unresponsive to the request). This strategy can cause overhead (e.g., delays) and inaccurate judgment (e.g., inaccurate neighbor scoring) for the request initiator. In comparison with adversarial routing, this strategy is less harmful because an honest node can simply drop its random walk. Modern p2p networks use reputation or credit systems to help mitigate black hole attacks ~\cite{9443953, dixit2015review}.
\end{itemize}

\subsection{Results} \label{res}
\begin{figure}[!tbp]
  \centering
  \subfloat[Honeybee]{\includegraphics[width=0.33333\textwidth]{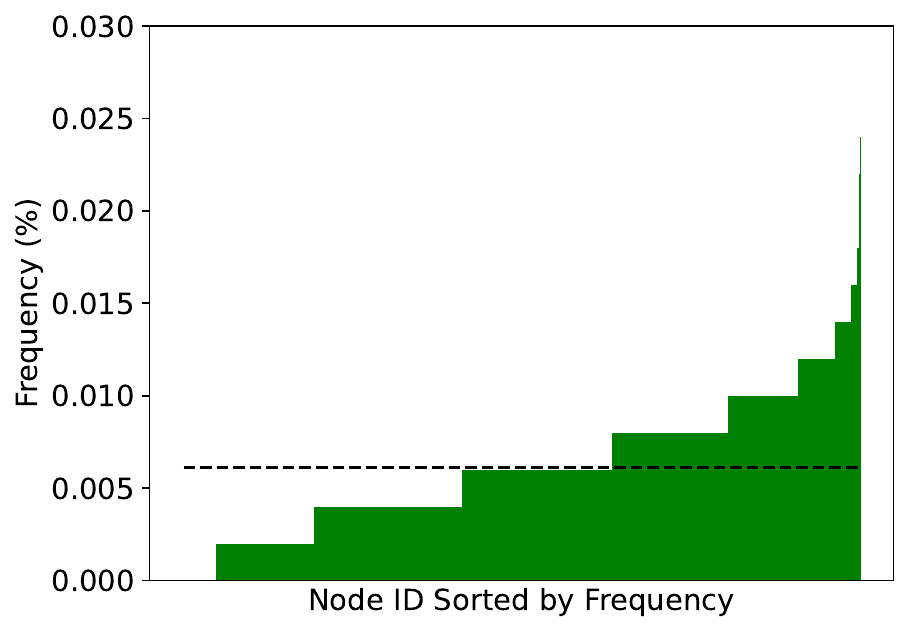}}
  \hfill
  \subfloat[GossipSub]{\includegraphics[width=0.33333\textwidth]{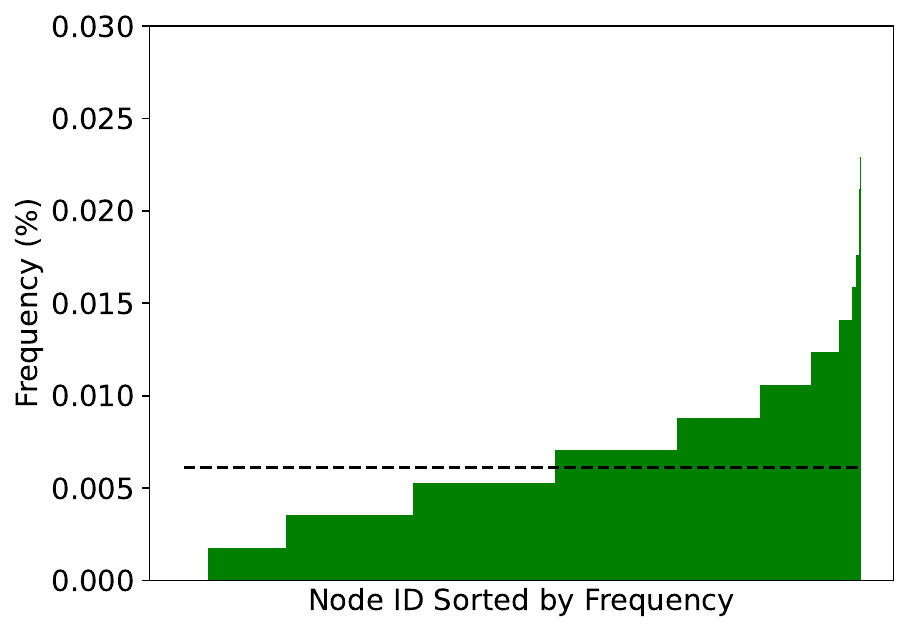}}
  \subfloat[Kademlia]{\includegraphics[width=0.33333\textwidth]{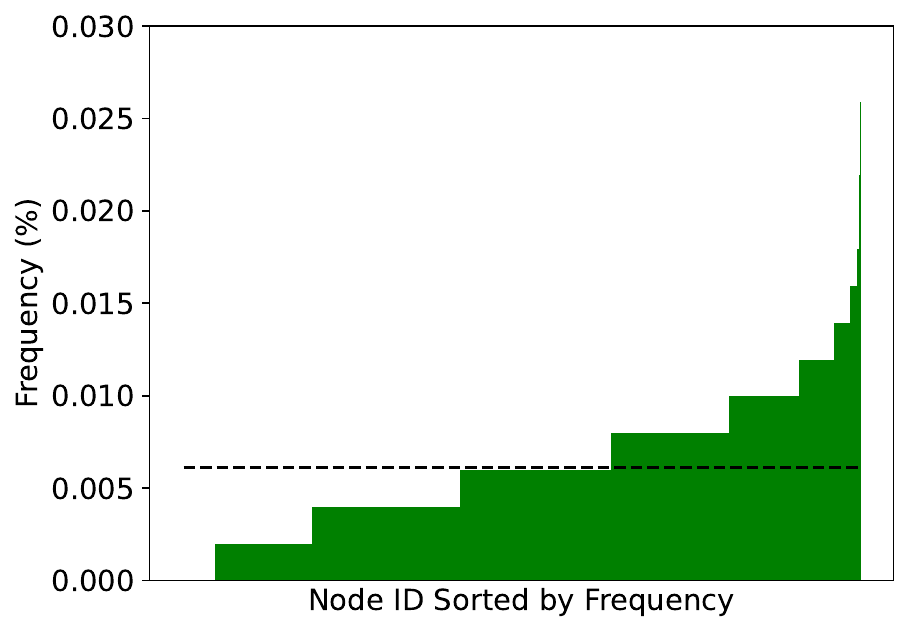}}
  \caption{Nodes adhere to protocol: Sampling distribution from a random observation node in 100,000 epochs. Every node follows its protocol. Node IDs are sorted by frequency in ascending order. True uniform sampling distribution is shown as a dashed line.}
  \label{fig:sample_dist_honest}
\end{figure}
\subsubsection{Sampling Distributions} Fig. \ref{fig:sample_dist_honest}a, \ref{fig:sample_dist_honest}b, and \ref{fig:sample_dist_honest}c show the sampling distributions for an arbitrarily chosen observation node with \OA, GossipSub, and Kademlia in a network of 16,384 honest nodes for 100,000 epochs. In each setting, all nodes start (i.e., epoch 0) with random routing table configurations, and all nodes behave according to their protocol. In the figures, the true uniform sampling distribution (i.e., all nodes except for the observation node are sampled with equal probability) is shown as a dashed line. We observe that \OA~and the two baseline algorithms have similar sampling distributions when all nodes adhere to their protocol. 

\begin{figure}[!tbp]
  \centering
  \subfloat[Total variation distances]{\includegraphics[width=0.3333\textwidth]{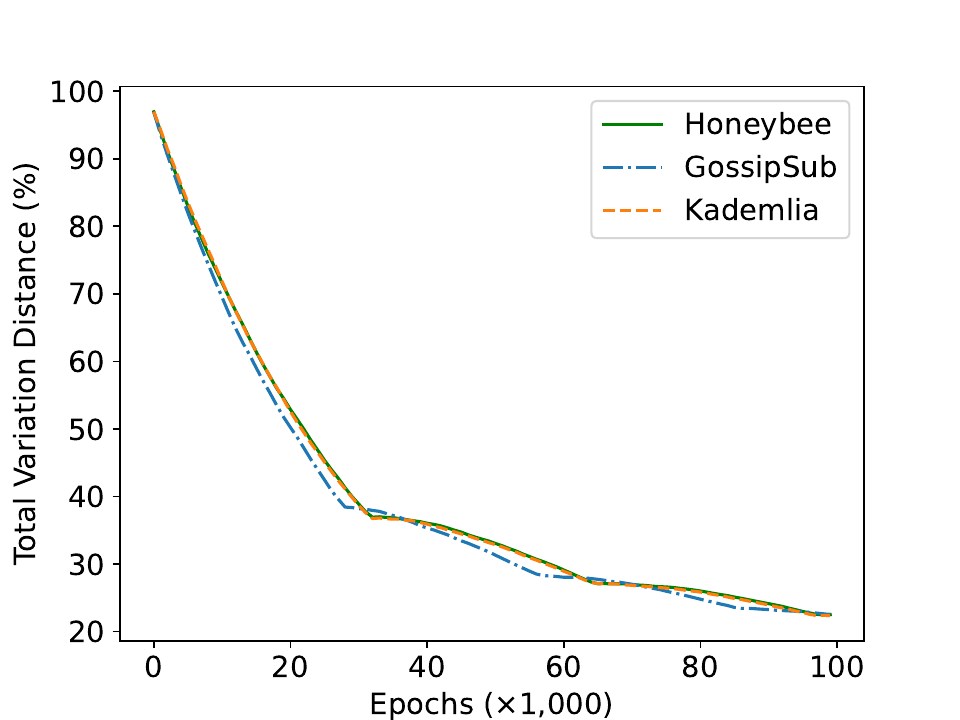}}
  \hfill
  \subfloat[\(\chi^2\) test results]{\includegraphics[width=0.3333\textwidth]{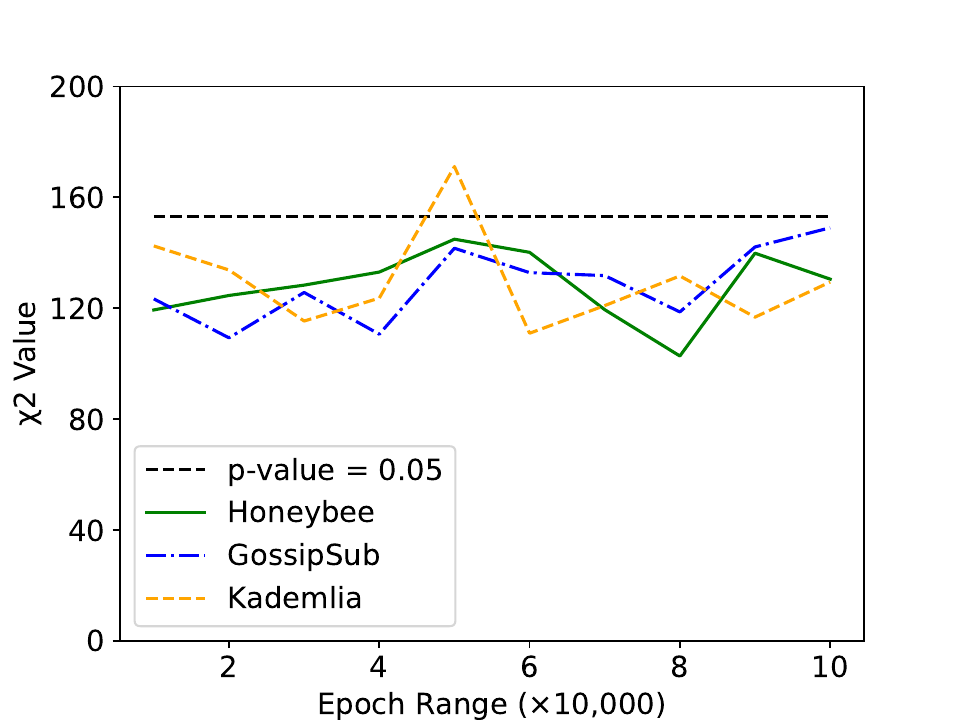}}
  \subfloat[\(\chi^2\) with 30\% idle nodes]{\includegraphics[width=0.3333\textwidth]{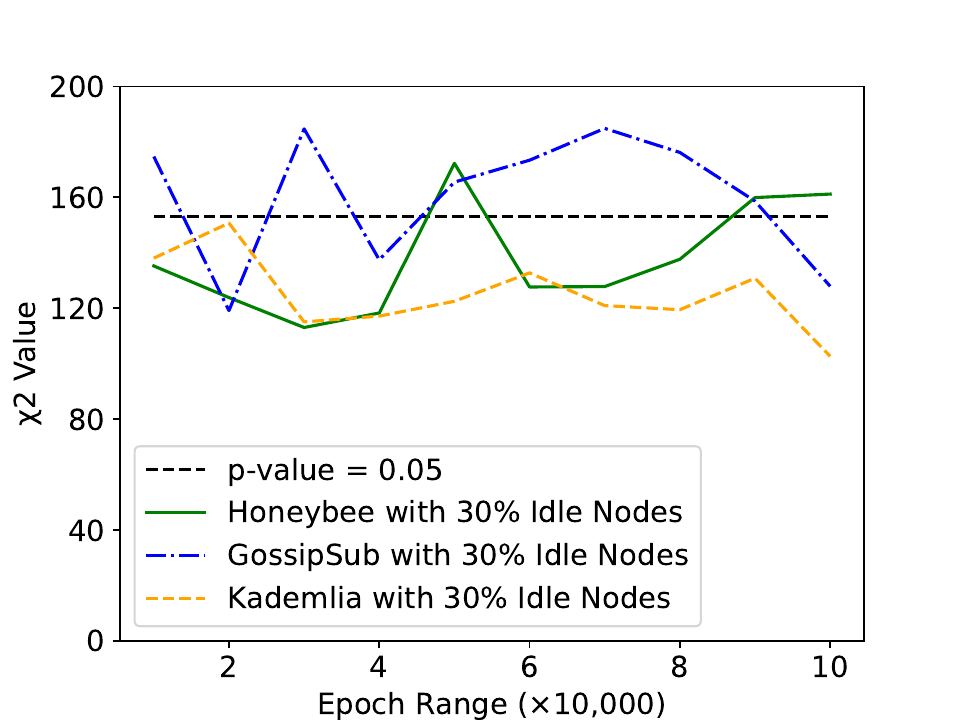}}
  \caption{Quality of sampling when nodes adhere to protocol: (a) displays comparisons of \OA, GossipSub, and Kademlia with the true uniform sampling distribution; (b) displays the Chi-Square test results with p-value of 0.05 shown as a dashed line; (c) displays the Chi-Square test results when 30\% of the nodes are idle.}
  \label{fig:sample_quality}
\end{figure}

To compare the sampling distributions of \OA~and the baseline algorithms with the true uniform sampling distribution, Fig. \ref{fig:sample_quality}a plots the total variation distances (TVDs) between the three algorithms and the true uniform sampling distribution from the above experiment. The TVDs suggest that sampling distributions from \OA~and the baseline algorithms converge to the true uniform sampling distribution throughout the 100,000 epochs in a similar pattern. At the beginning of the simulations, the TVDs from \OA~and the baseline algorithms are higher than 95\%. At 100,000 epochs, the total variance distances from \OA~and the baseline algorithms decrease to roughly 23\%. We also extended the simulations to 3 million epochs, and the TVDs from the three algorithms decrease to 3-5\%.

To examine the sampling distributions in detail, we conduct Chi-Square tests on the samples from \OA~and the baseline algorithms. We combine neighboring cells by dividing the node IDs into 127 node ID intervals of equal length (since the number of node IDs minus the observation node is divisible by 127). We divide the 100,000 epochs into 10 time intervals of equal length. For each time interval, we examine the samples from that particular time interval. In Fig. \ref{fig:sample_quality}b, we plot the Chi-Square values across the 10 time intervals for \OA~and the baseline algorithms. In Fig. \ref{fig:sample_quality}c, we plot the Chi-Square values across the 10 time intervals for \OA~and the baseline algorithms when 30\% of the nodes are idle (i.e., do not actively conduct sampling). We observe that, for \OA~and the baseline algorithms, we cannot reject the null hypothesis of uniform sampling with sufficient evidence when all nodes conduct sampling. When there are 30\% of idle nodes, the Chi-Square values increase slightly for \OA~and GossipSub. The takeaway here is that \OA~and the baseline algorithms can all conduct near-uniform peer sampling in the absence of adversarial nodes. 

\begin{figure}[!tb]
  \centering
  \subfloat[5\% dishonest nodes]{\includegraphics[width=0.333\textwidth]{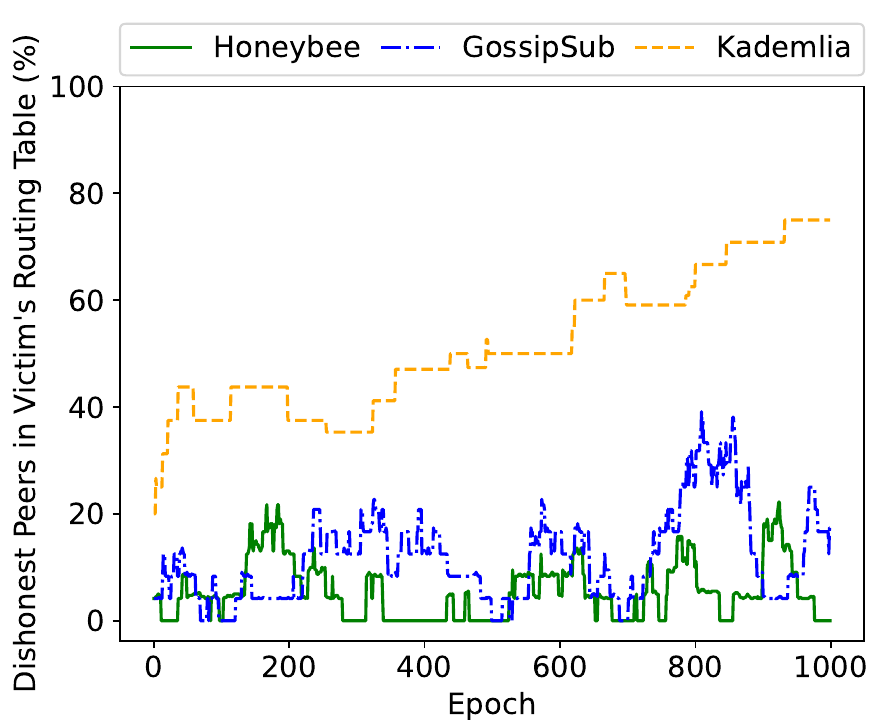}}
  \hfill
  \subfloat[10\% dishonest nodes]{\includegraphics[width=0.333\textwidth]{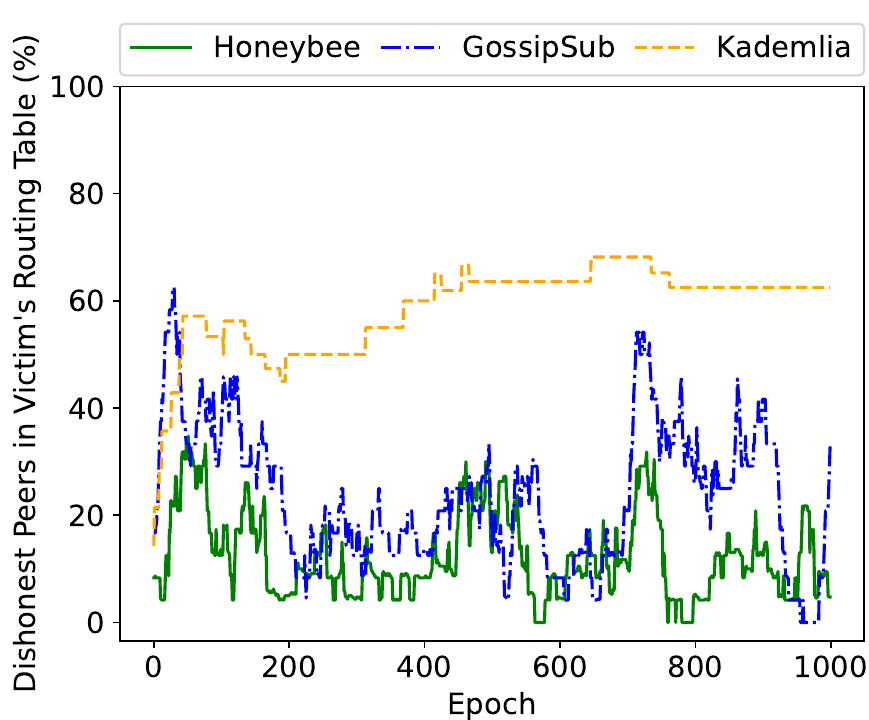}}
  \subfloat[20\% dishonest nodes]{\includegraphics[width=0.333\textwidth]{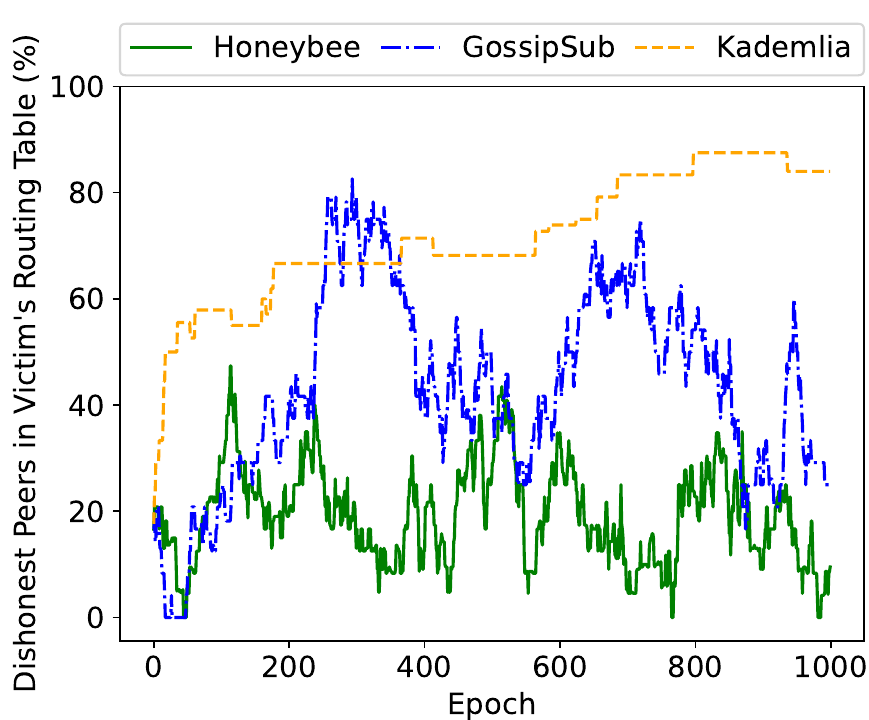}}
  \caption{An honest node attacked by 5\%, 10\%, and 20\% of dishonest nodes: Single random honest node under attack in \OA, GossipSub, and Kademlia.}
  \label{fig:one_victim_05_10_20}
\end{figure}

\begin{figure}[!tb]
  \centering
  \subfloat[30\% dishonest nodes]{\includegraphics[width=0.333\textwidth]{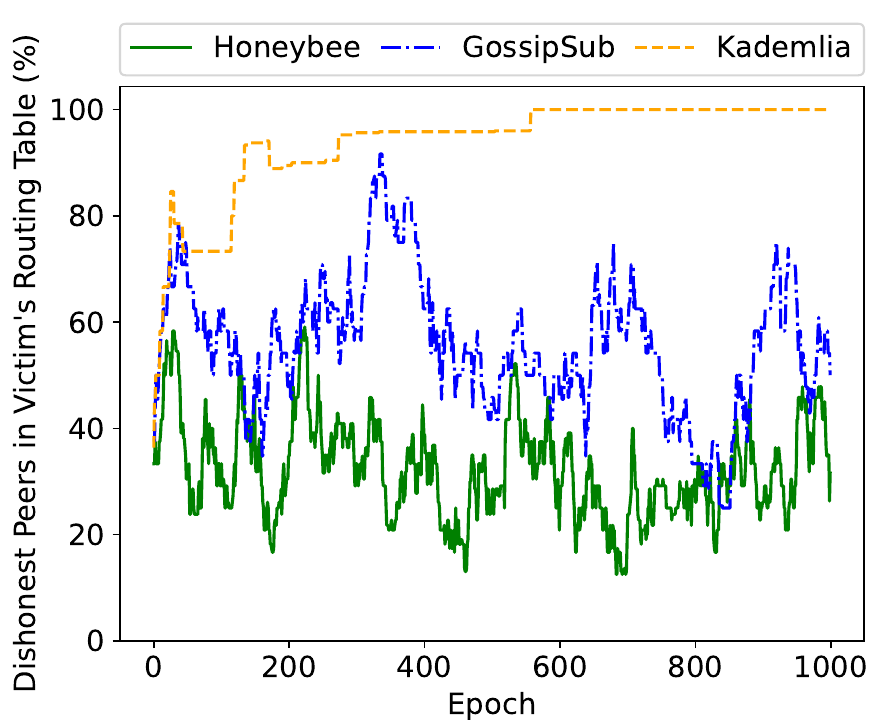}}
  \hfill
  \subfloat[40\% dishonest nodes]{\includegraphics[width=0.333\textwidth]{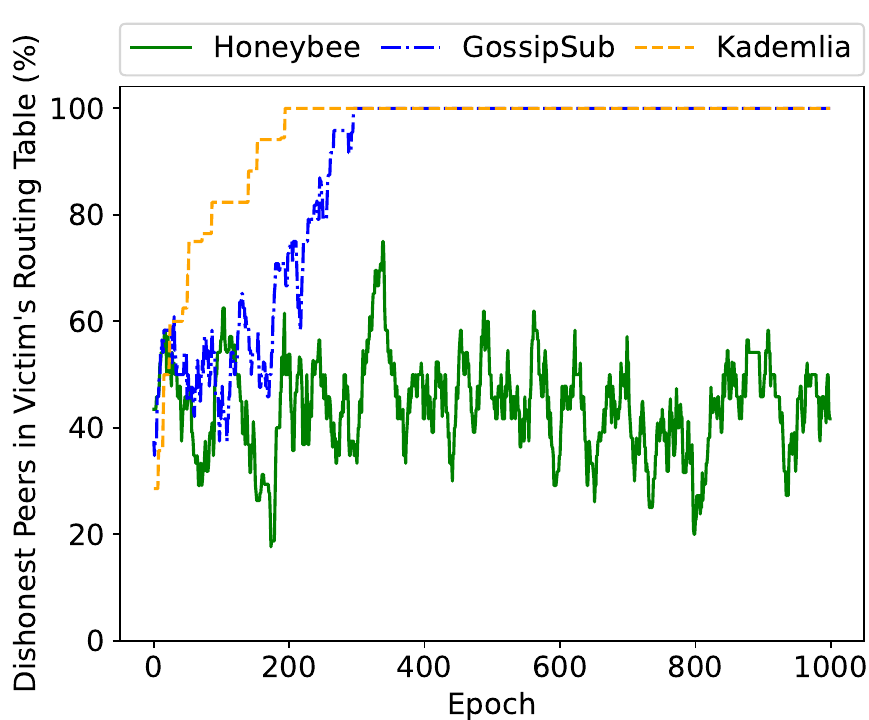}}
  \subfloat[50\% dishonest nodes]{\includegraphics[width=0.333\textwidth]{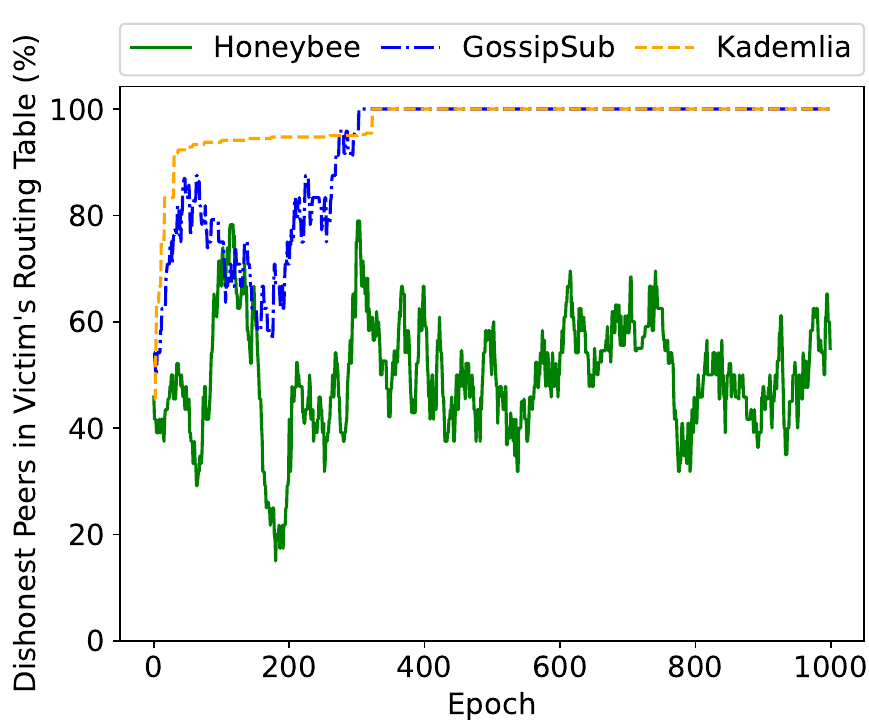}}
  \caption{An honest node attacked by 30\%, 40\%, and 50\% of dishonest nodes: Single random honest node under attack in \OA, GossipSub, and Kademlia.}
  \label{fig:one_victim_30_40_50}
\end{figure}

\begin{figure}[!tb]
  \centering
  \subfloat[60\% dishonest nodes]{\includegraphics[width=0.333\textwidth]{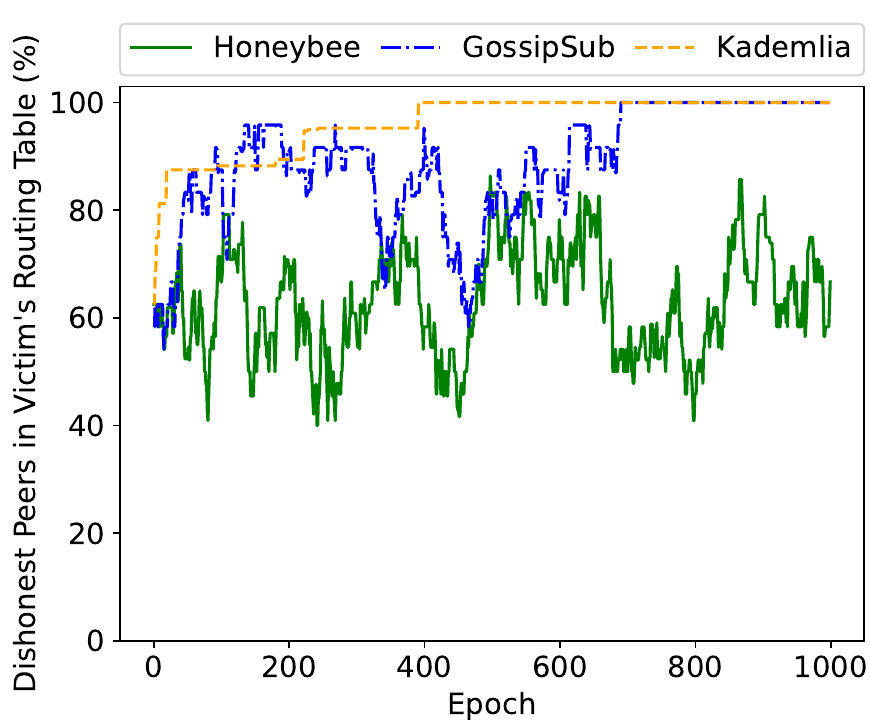}}
  \hfill
  \subfloat[70\% dishonest nodes]{\includegraphics[width=0.333\textwidth]{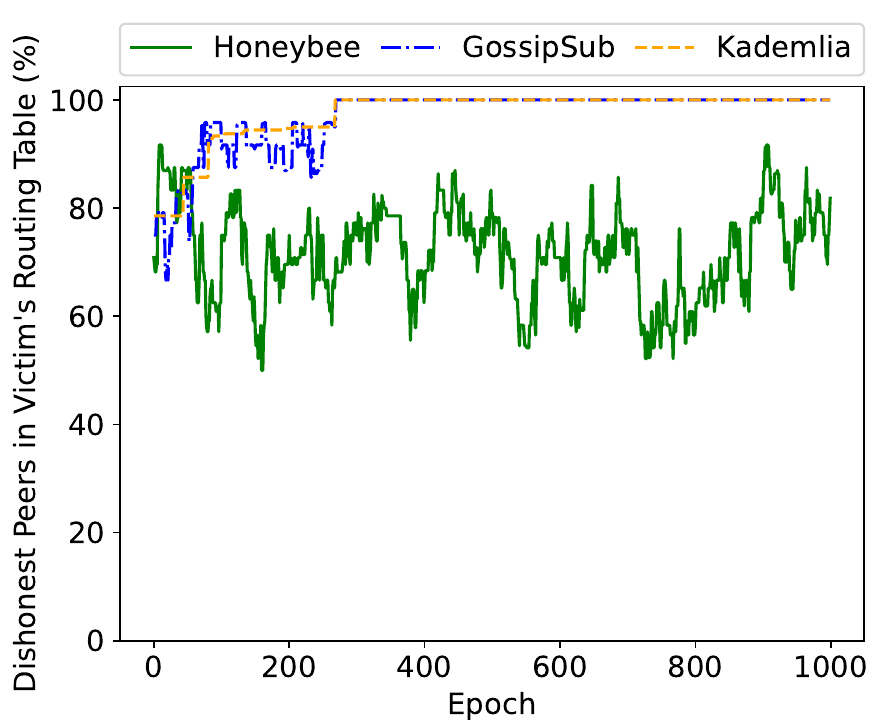}}
  \subfloat[80\% dishonest nodes]{\includegraphics[width=0.333\textwidth]{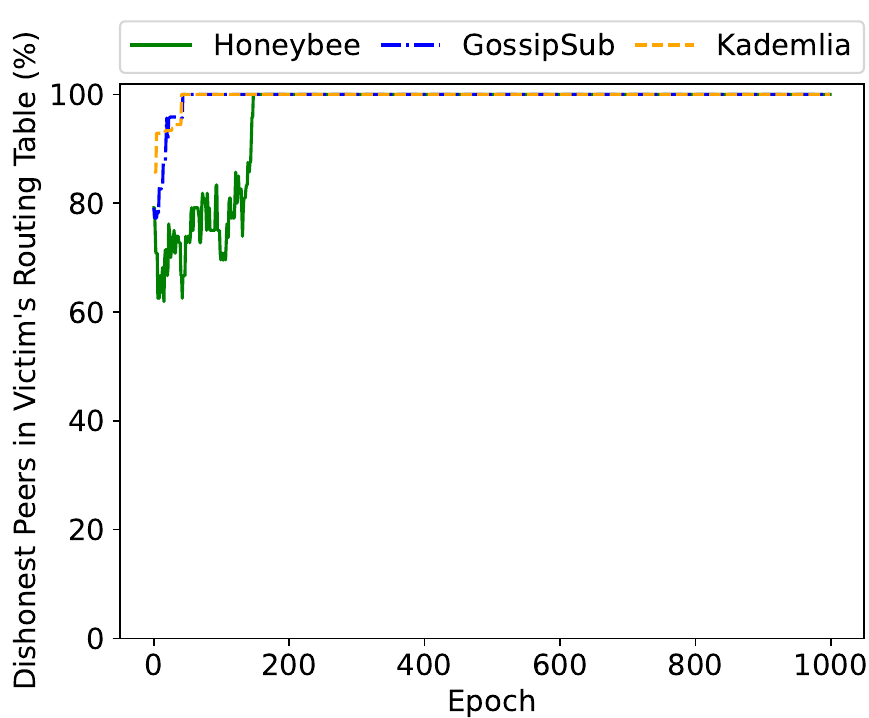}}
  \caption{An honest node attacked by 60\%, 70\%, and 80\% of dishonest nodes: Single random honest node under attack in \OA, GossipSub, and Kademlia.}
  \label{fig:one_victim_60_70_80}
\end{figure}

\subsubsection{Attack - Single Victim} Fig. \ref{fig:one_victim_05_10_20}, Fig. \ref{fig:one_victim_30_40_50}, and Fig. \ref{fig:one_victim_60_70_80} show the ratio of dishonest peers in a random honest victim node's routing table in 1,000 epochs for \OA~and the baseline algorithms in a network of 16,384 nodes, and 5\%, 10\%, 20\%, 30\%, 40\%, 50\%, 60\%, 70\%, and 80\% of the total population consists of dishonest nodes. The goal of all the dishonest nodes is to eclipse the victim node. All nodes start with random routing table configurations. We observe that, under different levels of dishonest node percentage, \OA~consistently outperforms the baseline algorithms in terms of the mean ratio of dishonest peers in the victim's routing table. 

\begin{table}[!tbp]
\begin{tabular}{llllllllll}
\hline
          & \multicolumn{9}{c}{Percentage of dishonest nodes in the network}               \\
          & 5\%     & 10\%    & 20\%    & 30\%    & 40\%    & 50\%    & 60\% & 70\% & 80\% \\ \hline
Honeybee  & 6.16\%  & 11.49\% & 21.91\% & 32.25\% & 40.90\% & 51.14\% & 61.44\% & 71.73\% & 94.62\% \\
GossipSub & 9.99\%  & 25.12\% & 42.24   & 55.52\% & 69.64\% & 81.79\% & 91.64\% & 96.26\% & 99.49\% \\
Kademlia  & 56.26\% & 63.22\% & 85.12\% & 87.15\% & 97.36\% & 95.23\% & 97.73\% & 98.80\% & 98.64\% \\ \hline
\end{tabular}
\caption{Mean ratio of dishonest peers in victim's routing table over 1,000 epochs: Five honest nodes are randomly sampled as victims in five separate simulations with \OA, GossipSub, and Kademlia. We consider different percentage of dishonest nodes in the entire network ranging from 5\% to 80\%.}
\label{tbl:one_victim_mean}
\end{table}

To demonstrate that the results apply across different nodes in different simulations, we randomly sample five honest nodes as victims in five separate simulations. We calculate the mean ratio of dishonest peers in the victim's routing table among the five nodes over the course of 1,000 epochs and present the results in Table. \ref{tbl:one_victim_mean}. We observe that \OA~consistently achieves near-uniform sampling and outperforms the baseline algorithms by 4-63\% in terms of the mean ratio of dishonest peers in the victim's routing table when 5-70\% of the nodes in the network are dishonest. While \OA's performance drops when 80\% of the nodes in the network are dishonest (a scenario unlikely in a usable real-world p2p network), it still outperforms the baseline algorithms by 4\%. 

\begin{figure}[!tbp]
  \centering
  {\includegraphics[width=0.38\textwidth]{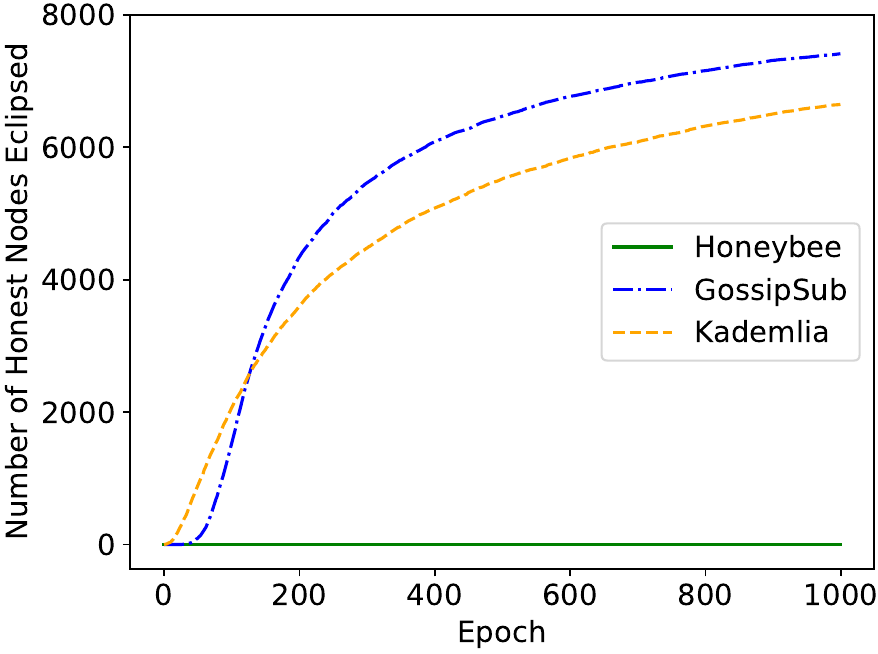}}
  \caption{All honest nodes under attack: In a network of 16,384 nodes, 50\% nodes are dishonest nodes. The honest nodes run \OA, GossipSub, and Kademlia. The dishonest nodes attempt to eclipse as many honest nodes as possible over the course of 1,000 epochs.}
  \label{fig:all_victims}
\end{figure}

\subsubsection{Attack - Multiple Victims} Fig. \ref{fig:all_victims} shows the cumulative number of honest nodes that can be eclipsed by dishonest nodes in a network with 50\% dishonest nodes in 1,000 epochs with \OA, GossipSub, and Kademlia. There are 8,192 honest nodes in total, and their goal is to eclipse (or partition, since we are addressing multiple victim nodes) as many honest nodes as possible. We observe that, in 1,000 epochs, dishonest nodes can eclipse over 75\% of honest nodes with GossipSub and Kademlia while they cannot eclipse honest nodes with \OA.

\begin{figure}[!tb]
  \centering
  \subfloat[Initialized with 62.5\% of Sybils]{\includegraphics[width=0.333\textwidth]{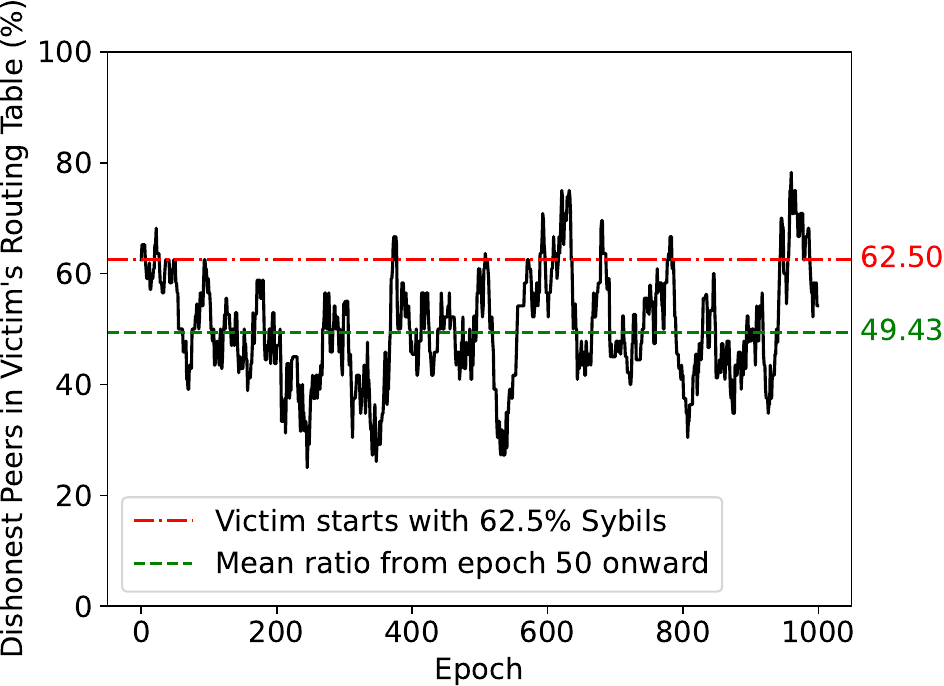}}
  \hfill
  \subfloat[Initialized with 75\% of Sybils]{\includegraphics[width=0.333\textwidth]{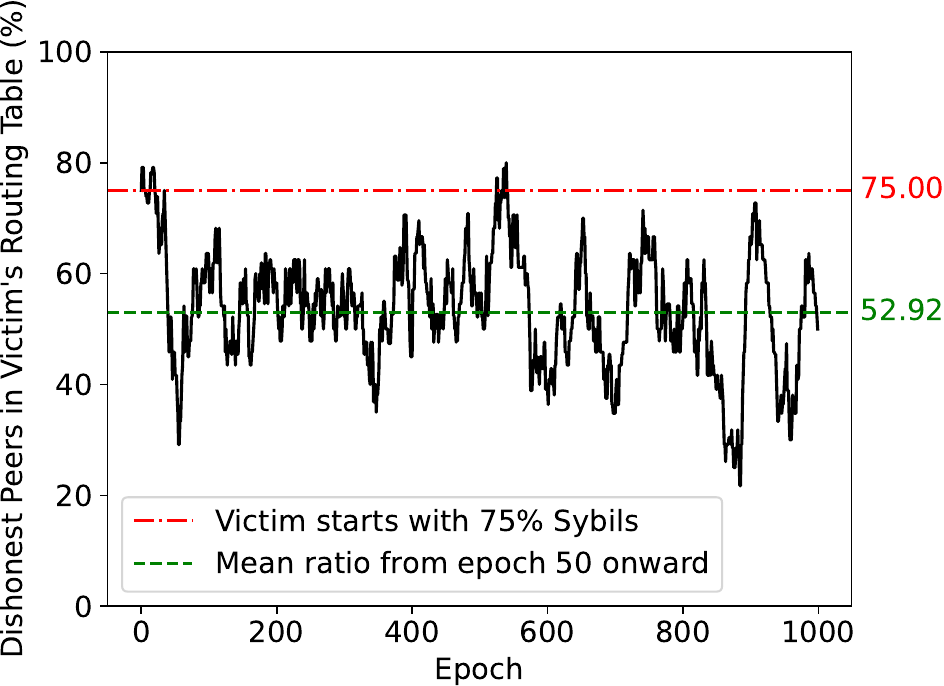}}
  \subfloat[Initialized with 87.5\% of Sybils]{\includegraphics[width=0.333\textwidth]{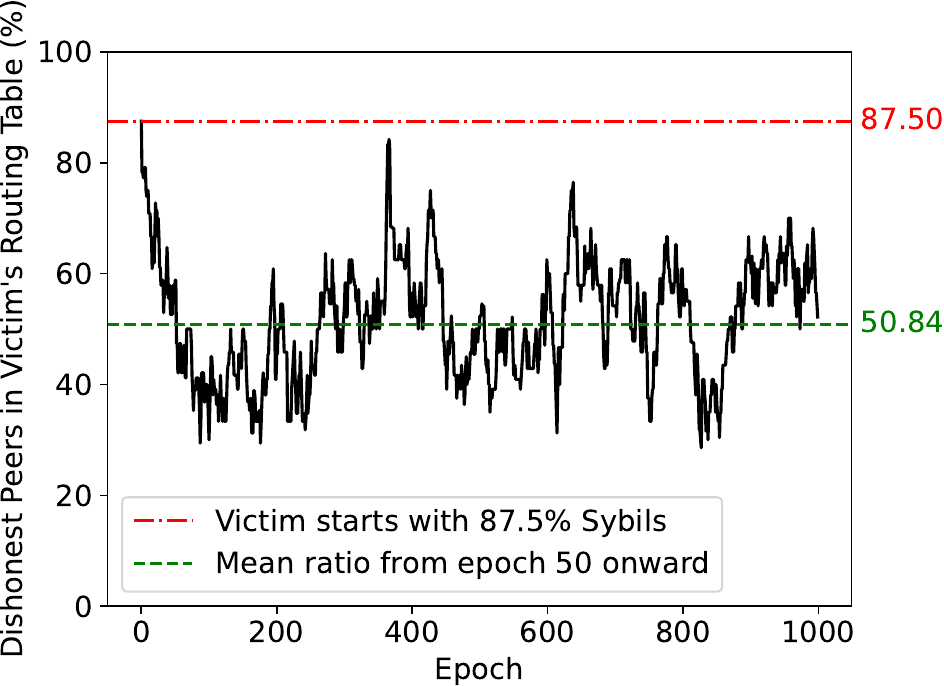}}
  \caption{An honest \OA~node attacked by 50\% of dishonest nodes in the network: when the honest node's initial routing table happens to be poorly configured.}
  \label{fig:bad_starts}
\end{figure}

\subsubsection{Bad Start} We have discussed how a \OA~node bootstraps its initial routing table configuration in \S\ref{systemModel}. It is possible for an honest \OA~node to start off with a poorly configured routing table (i.e., the ratio of dishonest nodes in one's initial routing table configuration is significantly higher than that of the entire network). To see what happens when nodes get a bad initial routing table configuration, we run experiments on \OA~networks with 50\% of dishonest nodes. In these experiments, we identified random, unfortunate \OA~nodes whose initial routing tables contain 62.5\%, 75\%, and 87.5\% of dishonest nodes. In each experiment, all the dishonest nodes in the network target and attempt to eclipse the honest node under our observation. We then examined their peer sampling performances under poorly initialized routing tables. The results are shown in Fig. \ref{fig:bad_starts}. We observe that, even with a poorly initialized routing table, \OA~nodes can still conduct near-uniform sampling, and they converge to \(\epsilon\)-uniform sampling with \(\epsilon=0.03\) quickly (within 50 epochs). Therefore, we conclude that poorly initialized routing tables do not adversely affect \OA~nodes' peer sampling performance. 

\subsubsection{Why VRW and TCC} \label{why_vrw_tcc}
\begin{figure}[!tbp]
  \centering
  \subfloat[\OA~without VRW]{\includegraphics[width=0.38\textwidth]{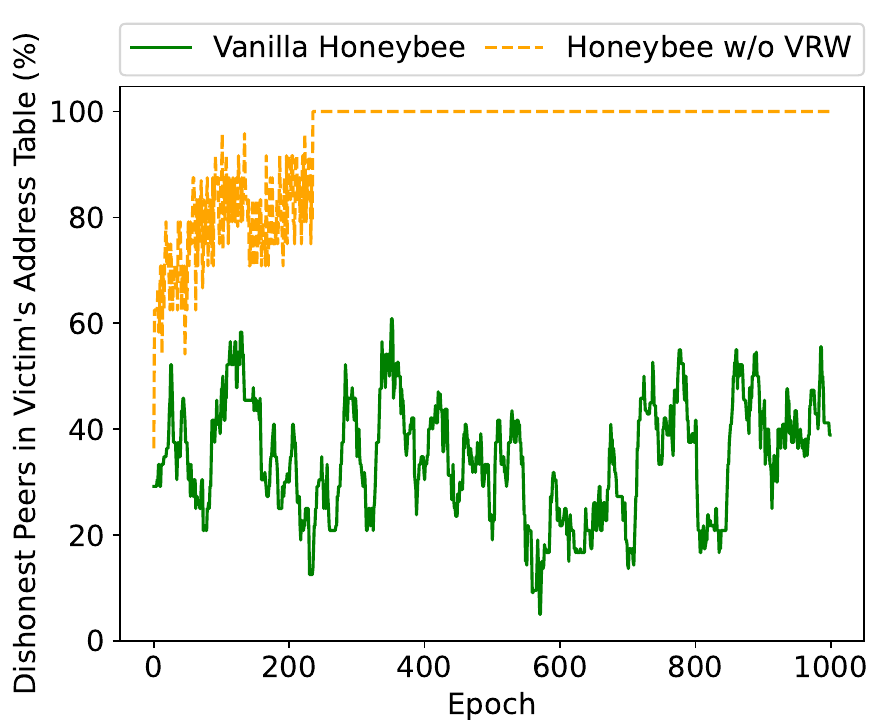}}
  \hfill
  \subfloat[\OA~without TCC]{\includegraphics[width=0.38\textwidth]{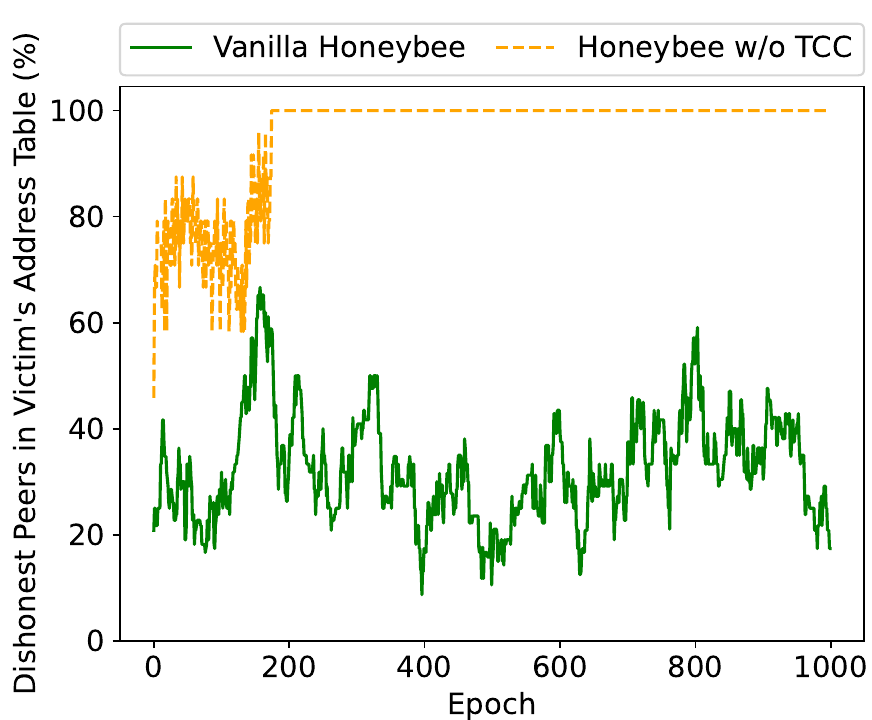}}
  \caption{\OA~without VRW or TCC for one random victim under attack: (a) displays comparison of vanilla \OA~and \OA~without random walk verification mechanism; (b) displays comparison of vanilla \OA~and \OA~without table consistency check mechanism.}
  \label{fig:Without_VRW_or_TCC}
\end{figure}

We demonstrate the importance of verifiable random walks and table consistency checks by showing the reader what would happen if we did not use verifiable random walks or table consistency checks in \OA. Without the random walk verification mechanism, a random walk is not necessary random since dishonest nodes can hijack it by reconfigure their address tables according to their preferences. Similarly, without the table consistency check mechanism, dishonest nodes are able to store multiple copies of address tables and use different copies to route random walks according to their preferences. In other words, without verifiable random walks or table consistency checks, dishonest nodes can carry out a de facto adversarial routing attack. 

We simulate two altered versions of \OA: one without the random walk verification mechanism, the other without the table consistency check mechanism. Fig. \ref{fig:Without_VRW_or_TCC}a shows the process of dishonest nodes target and eclipse a random honest victim node when we remove the verification mechanism from random walks. Fig. \ref{fig:Without_VRW_or_TCC}b shows the process of dishonest nodes target and eclipse a random honest victim node when we remove table consistency checks. In both scenarios, all nodes start with a random address table configuration, and we have 30\% of dishonest nodes in the network. We compare them with the unaltered \OA~under the same setting to present the consequences. 

\subsection{Practical Considerations} \label{practical_consider}
So far, we have demonstrated \OA's effectiveness and efficiency with analyses and experiments. Now, we would like to briefly discuss some practical considerations when it comes to applying \OA~in real-world networks. 


In \S\ref{systemModel}, we have mentioned that \OA~can use block headers as its source of randomness. However, what if the block headers are not secure (e.g., they may be susceptible to tampering, such as grinding attacks~\cite{gitbookProtocolLayer}) in the blockchain network we are working on? To address this issue, we can introduce an extra layer of protection to the consensus mechanism using a verifiable delay function (VDF)~\cite{boneh2018verifiable}, which ensure that randomness is both secure and verifiable. Moreover, we may want to apply \OA~to general p2p networks which do not have a blockchain. As we have discussed in \S\ref{OA}, \OA~needs a shared source of randomness for nodes to perform decentralized peer sampling. This source of randomness does not necessarily have to be derived from a blockchain, as long as it is secure and resistant to manipulation. A good example of such a secure source of randomness is decentralized randomness beacons~\cite{cloudflareDistributedRandomness}, which are designed to generate unbiased and publicly verifiable randomness in a decentralized manner. By incorporating a reliable shared source of randomness, we can extend the applicability of \OA~beyond blockchain-based systems to a broader range of p2p networks. 

Note that the frequency of shared randomness generation does not impact the frequency or time consumption of random walks and peer samplings. The reasons are as follows: (1) Popular blockchains (e.g., Ethereum~\cite{etherblocktime}) and decentralized randomness beacons (e.g., Drand~\cite{drandDrandDistributed}) are able to provide shared randomness within tens of seconds. (2) As mentioned in \S\ref{vrw}, in each round of \OA, a fraction \(\eta\) of nodes become eligible and perform peer sampling. Even if the time required to generate fresh random values is a lot longer (e.g., at least twice as long, which is uncommon) than the typical time required to perform random walks, we can generate extra pseudo-random values using amplification~\cite{colbeck2012free}. By incorporating an amplification process, more than one round of peer sampling can be conducted. 
 
As for communication overhead, in a \OA~network with \(n\) nodes, a random walk consists of at most \(O(log(n))\) hops, which correspond to \(O(log(n))\) messages. Suppose a node's routing table size is \(k\), a node only needs to exchange \(O(klog(n))\) messages to update its entire routing table. The communication overhead for all nodes in the network to update their entire routing tables is just \(O(knlog(n))\). Regarding the time required for all nodes to update their entire routing tables, since in each round we allow a fraction \(\eta\) of nodes to perform peer sampling, the network requires \(\frac{k}{2\eta}\) rounds for all nodes to update their entire routing tables. To calculate the wall-clock time, assuming \(k=24\), \(\eta=0.1\) and a round costs 12 seconds~\cite{etherblocktime}, the total time consumption would be \(24\) minutes. Even if we relax the assumption by setting \(\eta=0.001\) (i.e, only \(0.1\%\) of nodes can conduct peer sampling in a round), the network would still allow all nodes to update their entire routing tables within \(1.67\) days, which should be more than acceptable in most p2p network applications. Further improving the efficiency of \OA~peer sampling is an interesting direction for future work.

In terms of adoption, networks employing other p2p protocols can smoothly transition into Honeybee in a hybrid manner. During the transition, the network encourages nodes to adapt to Honeybee while allowing the use of other protocols. Honeybee nodes work normally with each other as we described in \S\ref{OA}. In addition, Honeybee nodes store the addresses of non-Honeybee nodes in a separate table to stay in touch with them. Non-Honeybee nodes can choose to adapt to Honeybee at their own discretion.


\section{Related Work}
We have covered some related work on \OA~in \S\ref{sec:intro}. In this section, we will discuss additional related work.

Researchers have made extensive effort to improve the security and efficiency of p2p networks. Protocols like Castro et al. \cite{castro2002secure}, S/Kademlia~\cite{4447808}, SecureCyclon~\cite{10272533}, and RAPTEE~\cite{9912135} enhance the security of their underlying DHT and Gossip networks by restricting free node ID generation through a centralized authority, cryptographic puzzles, or trusted execution environments. In the Generals' Scuttlebutt \cite{10.1145/3548606.3560638}, Coretti and colleagues designed a Byzantine-resilient gossip protocol under the proof-of-stake setting. Under the protocol, nodes build a connected backbone of high-staking nodes. However, light nodes (or nodes with small stakes) are easily sacrificed in this protocol, and stake distribution is not always readily available in a completely decentralized environment. In Brahms \cite{bortnikov2008brahms}, Bortnikov and colleagues presented a Byzantine resistant sampling mechanism in the gossip model. The mechanism, however, relies on the assumption that the ability of malicious nodes to send messages is throttled. In AnonBoot~\cite{10.1145/3320269.3384729}, the authors presented a peer configuration protocol that relies on a proof-of-work chain and requires nodes to advertise themselves on-chain using cryptographic puzzles. Traditional P2P system proposals, such as Tarzan \cite{10.1145/586110.586137} and MorphMix \cite{rennhard2004practical}, handled attacks based on the assumption that attackers have limited resources (e.g., IP addresses) and that only contiguous ranges of IP addresses are under malicious control. Wang and Borisov proposed Octopus \cite{6258005}, a secure lookup strategy in p2p networks, and the strategy relies on a centralized authority to remove malicious nodes from networks. Panchenko and colleagues proposed NISAN~\cite{10.1145/1653662.1653681}, a protocol that conducts anonymous lookups for Chord DHTs using aggregated greedy search and search value hiding. The protocol, however, is vulnerable to information leak attacks and does not offer sufficient protection against Sybil attacks~\cite{10.1145/1866307.1866343, 10.1145/3320269.3384729}. Recent researches inspired by the multi-armed bandit problem \cite{zhang2023kadabra,10.1145/3382734.3405704} attempted to reduce the communication latency in p2p networks without losing security by learning from the performance of neighboring peers and update peer selections based on the knowledge. In Kiffer et al. \cite{kiffer2021under}, the authors studied the connectivity and block propagation mechanism of Ethereum's p2p network. In Vedula et al. \cite{10174933}, the authors formalized the p2p topology construction in Ethereum as a game between miners. In Król et al. \cite{krol2023data}, the authors conducted an analysis on p2p networking requirements for data availability sampling in Ethereum. 

Random walks in p2p networks have been well-discussed by researchers since the 2000s. In contrast to the reason for \OA~employing verifiable random walks, a plethora of works on p2p random walks primarily focused on resource searching, load balancing, and topology formation for p2p networks \cite{bisnik2005modeling,gkantsidis2004random,zhong2008convergence,navaz2015load,kwong2005relationship}. In ShadowWalker~\cite{10.1145/1653662.1653683}, nodes achieve anonymous communication through random walks with neighborhood verification from shadow nodes, but it relies on the assumption that Sybil nodes constitute at most 20\% of the network and possess limited capabilities. In Bisnik et al. \cite{bisnik2007optimizing}, the authors improved resource searching efficiency with adaptive random walks that take advantage of the feedback from previous searches. In Massoulié et al. \cite{massoulie2006peer}, the authors approached the problem of estimating the number of peers in a p2p network with random walks. In Das Sarma et al. \cite{sarma2015efficient}, the authors attempted to perform node sampling with minimized round complexity and message complexity using continuous random walks under limited adversarial behaviors. In the Dandelion Bitcoin network protocol \cite{10.1145/3084459}, the authors applied random walks before message broadcast to protect the anonymity of the transaction parties. Augustine et al.~\cite{10.1145/3490148.3538588} use multiple random walks to achieve peer sampling for DHTs by introducing a rather complicated process of building and dismantling hypercubes of nodes. However, their random walks can be susceptible to tampering and only allow a maximum of \(\frac{n}{polylog(n)}\) Byzantine nodes in a network of \(n\) nodes. 

\OA~can be applied to enhance blockchain data availability sampling at the network layer through its node sampling capability. Plenty of efforts have been made in blockchain layer-2 scaling \cite{scaling,rollups,10.5555/3388242.3388299,10302968, ma2025sok}. As a critical component of layer-2 solutions, data availability sampling itself has many important research problems \cite{Neu_2022}. In Hall-Andersen et al. \cite{cryptoeprint:2023/1079}, the authors initiated a cryptographic study of data availability sampling and demonstrated its relation with erasure codes. In Al-Bassam et al. \cite{al2021fraud}, the authors developed a complete fraud and data availability proof scheme in which they claimed that light nodes can have a security guarantee close to the level of a full node under certain assumptions. In Yu et al. \cite{yu2020coded}, the authors created coded Merkle trees to improve the protection of light nodes against data availability attacks. In Cao et al. \cite{9284704}, the authors presented a decentralized collaborative light-node-only verification mechanism in which light nodes can conduct block verification without the help of a full node. In addition to the advancement in security, there are some researches that focus on lowering the communication and computation overhead on nodes in the data availability setting \cite{sheng2021aced,kiayias2020non,bunz2020flyclient,kiayias2016proofs}. 

\section{Conclusion}
We presented Honeybee, a decentralized p2p sampling algorithm that achieves near-uniform peer sampling. Unlike existing algorithms, Honeybee is the first that conducts Byzantine-tolerant genuine random walks in networks, using verifiable random walks and table consistency checks, for node discovery and sampling. Honeybee is secure against various adversarial strategies under diverse settings and has broad applications in blockchains and p2p networks. In our experiments, we observe that Honeybee consistently achieves \(\epsilon\)-uniform sampling with \(\epsilon=0.03\). An interesting future research topic would be to generate a time-inhomogeneous Markov chain using Honeybee nodes and their address tables. Then, we could examine from a theoretical perspective whether mixing occurs and the potential mixing time when all nodes actively perform random walks. 

\bibliographystyle{ACM-Reference-Format}
\bibliography{main}


\begin{thebibliography}{73}


\ifx \showCODEN    \undefined \def \showCODEN     #1{\unskip}     \fi
\ifx \showISBNx    \undefined \def \showISBNx     #1{\unskip}     \fi
\ifx \showISBNxiii \undefined \def \showISBNxiii  #1{\unskip}     \fi
\ifx \showISSN     \undefined \def \showISSN      #1{\unskip}     \fi
\ifx \showLCCN     \undefined \def \showLCCN      #1{\unskip}     \fi
\ifx \shownote     \undefined \def \shownote      #1{#1}          \fi
\ifx \showarticletitle \undefined \def \showarticletitle #1{#1}   \fi
\ifx \showURL      \undefined \def \showURL       {\relax}        \fi
\providecommand\bibfield[2]{#2}
\providecommand\bibinfo[2]{#2}
\providecommand\natexlab[1]{#1}
\providecommand\showeprint[2][]{arXiv:#2}

\bibitem[Al-Bassam et~al\mbox{.}(2018)]%
        {al2018fraud}
\bibfield{author}{\bibinfo{person}{Mustafa Al-Bassam}, \bibinfo{person}{Alberto Sonnino}, {and} \bibinfo{person}{Vitalik Buterin}.} \bibinfo{year}{2018}\natexlab{}.
\newblock \showarticletitle{{Fraud and Data Availability Proofs: Maximising Light Client Security and Scaling Blockchains with Dishonest Majorities}}.
\newblock \bibinfo{journal}{\emph{arXiv preprint arXiv:1809.09044}} (\bibinfo{year}{2018}).
\newblock


\bibitem[Al-Bassam et~al\mbox{.}(2021)]%
        {al2021fraud}
\bibfield{author}{\bibinfo{person}{Mustafa Al-Bassam}, \bibinfo{person}{Alberto Sonnino}, \bibinfo{person}{Vitalik Buterin}, {and} \bibinfo{person}{Ismail Khoffi}.} \bibinfo{year}{2021}\natexlab{}.
\newblock \showarticletitle{{Fraud and Data Availability Proofs: Detecting Invalid Blocks in Light Clients}}. In \bibinfo{booktitle}{\emph{Financial Cryptography and Data Security: 25th International Conference, FC 2021, Virtual Event, March 1--5, 2021, Revised Selected Papers, Part II 25}}. Springer, \bibinfo{pages}{279--298}.
\newblock


\bibitem[Aldous(1983)]%
        {aldous1983random}
\bibfield{author}{\bibinfo{person}{David Aldous}.} \bibinfo{year}{1983}\natexlab{}.
\newblock \showarticletitle{{Random Walks on Finite Groups and Rapidly Mixing Markov Chains}}. In \bibinfo{booktitle}{\emph{S{\'e}minaire de Probabilit{\'e}s XVII 1981/82: Proceedings}}. Springer, \bibinfo{pages}{243--297}.
\newblock


\bibitem[Anceaume et~al\mbox{.}(2013a)]%
        {anceaume2013power}
\bibfield{author}{\bibinfo{person}{Emmanuelle Anceaume}, \bibinfo{person}{Yann Busnel}, {and} \bibinfo{person}{S{\'e}bastien Gambs}.} \bibinfo{year}{2013}\natexlab{a}.
\newblock \showarticletitle{{On the Power of the Adversary to Solve the Node Sampling Problem}}.
\newblock \bibinfo{journal}{\emph{Transactions on Large-Scale Data-and Knowledge-Centered Systems XI: Special Issue on Advanced Data Stream Management and Continuous Query Processing}} (\bibinfo{year}{2013}), \bibinfo{pages}{102--126}.
\newblock


\bibitem[Anceaume et~al\mbox{.}(2013b)]%
        {anceaume2013uniform}
\bibfield{author}{\bibinfo{person}{Emmanuelle Anceaume}, \bibinfo{person}{Yann Busnel}, {and} \bibinfo{person}{Bruno Sericola}.} \bibinfo{year}{2013}\natexlab{b}.
\newblock \showarticletitle{{Uniform Node Sampling Service Robust against Collusions of Malicious Nodes}}. In \bibinfo{booktitle}{\emph{2013 43rd Annual IEEE/IFIP International Conference on Dependable Systems and Networks (DSN)}}. IEEE, \bibinfo{pages}{1--12}.
\newblock


\bibitem[Antonov and Voulgaris(2023)]%
        {10272533}
\bibfield{author}{\bibinfo{person}{Alexandros Antonov} {and} \bibinfo{person}{Spyros Voulgaris}.} \bibinfo{year}{2023}\natexlab{}.
\newblock \showarticletitle{SecureCyclon: Dependable Peer Sampling}. In \bibinfo{booktitle}{\emph{2023 IEEE 43rd International Conference on Distributed Computing Systems (ICDCS)}}. \bibinfo{pages}{1--12}.
\newblock
\href{https://doi.org/10.1109/ICDCS57875.2023.00041}{doi:\nolinkurl{10.1109/ICDCS57875.2023.00041}}


\bibitem[Augustine et~al\mbox{.}(2022)]%
        {10.1145/3490148.3538588}
\bibfield{author}{\bibinfo{person}{John Augustine}, \bibinfo{person}{Soumyottam Chatterjee}, {and} \bibinfo{person}{Gopal Pandurangan}.} \bibinfo{year}{2022}\natexlab{}.
\newblock \showarticletitle{A Fully-Distributed Scalable Peer-to-Peer Protocol for Byzantine-Resilient Distributed Hash Tables}. In \bibinfo{booktitle}{\emph{Proceedings of the 34th ACM Symposium on Parallelism in Algorithms and Architectures}} (Philadelphia, PA, USA) \emph{(\bibinfo{series}{SPAA '22})}. \bibinfo{publisher}{Association for Computing Machinery}, \bibinfo{address}{New York, NY, USA}, \bibinfo{pages}{87–98}.
\newblock
\showISBNx{9781450391467}
\href{https://doi.org/10.1145/3490148.3538588}{doi:\nolinkurl{10.1145/3490148.3538588}}


\bibitem[Augustine et~al\mbox{.}(2015)]%
        {augustine2015enabling}
\bibfield{author}{\bibinfo{person}{John Augustine}, \bibinfo{person}{Gopal Pandurangan}, \bibinfo{person}{Peter Robinson}, \bibinfo{person}{Scott Roche}, {and} \bibinfo{person}{Eli Upfal}.} \bibinfo{year}{2015}\natexlab{}.
\newblock \showarticletitle{{Enabling Robust and Efficient Distributed Computation in Dynamic Peer-to-Peer Networks}}. In \bibinfo{booktitle}{\emph{2015 IEEE 56th Annual Symposium on Foundations of Computer Science}}. IEEE, \bibinfo{pages}{350--369}.
\newblock


\bibitem[Awan et~al\mbox{.}(2006)]%
        {awan2006distributed}
\bibfield{author}{\bibinfo{person}{Asad Awan}, \bibinfo{person}{Ronaldo~A Ferreira}, \bibinfo{person}{Suresh Jagannathan}, {and} \bibinfo{person}{Ananth Grama}.} \bibinfo{year}{2006}\natexlab{}.
\newblock \showarticletitle{{Distributed Uniform Sampling in Unstructured Peer-to-Peer Networks}}. In \bibinfo{booktitle}{\emph{Proceedings of the 39th Annual Hawaii International Conference on System Sciences (HICSS'06)}}, Vol.~\bibinfo{volume}{9}. IEEE, \bibinfo{pages}{223c--223c}.
\newblock


\bibitem[Baumgart and Mies(2007)]%
        {4447808}
\bibfield{author}{\bibinfo{person}{Ingmar Baumgart} {and} \bibinfo{person}{Sebastian Mies}.} \bibinfo{year}{2007}\natexlab{}.
\newblock \showarticletitle{{S/Kademlia: A Practicable Approach Towards Secure Key-Based Routing}}. In \bibinfo{booktitle}{\emph{2007 International Conference on Parallel and Distributed Systems}}. \bibinfo{pages}{1--8}.
\newblock
\href{https://doi.org/10.1109/ICPADS.2007.4447808}{doi:\nolinkurl{10.1109/ICPADS.2007.4447808}}


\bibitem[Bisnik and Abouzeid(2005)]%
        {bisnik2005modeling}
\bibfield{author}{\bibinfo{person}{Nabhendra Bisnik} {and} \bibinfo{person}{Alhussein Abouzeid}.} \bibinfo{year}{2005}\natexlab{}.
\newblock \showarticletitle{{Modeling and Analysis of Random Walk Search Algorithms in P2P Networks}}. In \bibinfo{booktitle}{\emph{Second International Workshop on Hot Topics in Peer-to-Peer Systems}}. IEEE, \bibinfo{pages}{95--103}.
\newblock


\bibitem[Bisnik and Abouzeid(2007)]%
        {bisnik2007optimizing}
\bibfield{author}{\bibinfo{person}{Nabhendra Bisnik} {and} \bibinfo{person}{Alhussein~A Abouzeid}.} \bibinfo{year}{2007}\natexlab{}.
\newblock \showarticletitle{{Optimizing Random Walk Search Algorithms in P2P Networks}}.
\newblock \bibinfo{journal}{\emph{Computer networks}} \bibinfo{volume}{51}, \bibinfo{number}{6} (\bibinfo{year}{2007}), \bibinfo{pages}{1499--1514}.
\newblock


\bibitem[Bojja~Venkatakrishnan et~al\mbox{.}(2017)]%
        {10.1145/3084459}
\bibfield{author}{\bibinfo{person}{Shaileshh Bojja~Venkatakrishnan}, \bibinfo{person}{Giulia Fanti}, {and} \bibinfo{person}{Pramod Viswanath}.} \bibinfo{year}{2017}\natexlab{}.
\newblock \showarticletitle{{Dandelion: Redesigning the Bitcoin Network for Anonymity}}.
\newblock \bibinfo{journal}{\emph{Proc. ACM Meas. Anal. Comput. Syst.}} \bibinfo{volume}{1}, \bibinfo{number}{1}, Article \bibinfo{articleno}{22} (\bibinfo{date}{jun} \bibinfo{year}{2017}), \bibinfo{numpages}{34}~pages.
\newblock
\urldef\tempurl%
\url{https://doi.org/10.1145/3084459}
\showURL{%
\tempurl}


\bibitem[Boneh et~al\mbox{.}(2018)]%
        {boneh2018verifiable}
\bibfield{author}{\bibinfo{person}{Dan Boneh}, \bibinfo{person}{Joseph Bonneau}, \bibinfo{person}{Benedikt B{\"u}nz}, {and} \bibinfo{person}{Ben Fisch}.} \bibinfo{year}{2018}\natexlab{}.
\newblock \showarticletitle{{Verifiable Delay Functions}}. In \bibinfo{booktitle}{\emph{Annual international cryptology conference}}. Springer, \bibinfo{pages}{757--788}.
\newblock


\bibitem[Bortnikov et~al\mbox{.}(2008)]%
        {bortnikov2008brahms}
\bibfield{author}{\bibinfo{person}{Edward Bortnikov}, \bibinfo{person}{Maxim Gurevich}, \bibinfo{person}{Idit Keidar}, \bibinfo{person}{Gabriel Kliot}, {and} \bibinfo{person}{Alexander Shraer}.} \bibinfo{year}{2008}\natexlab{}.
\newblock \showarticletitle{{Brahms: Byzantine Resilient Random Membership Sampling}}. In \bibinfo{booktitle}{\emph{Proceedings of the Twenty-Seventh ACM Symposium on Principles of Distributed Computing}} (Toronto, Canada) \emph{(\bibinfo{series}{PODC '08})}. \bibinfo{publisher}{Association for Computing Machinery}, \bibinfo{address}{New York, NY, USA}, \bibinfo{pages}{145–154}.
\newblock
\showISBNx{9781595939890}
\urldef\tempurl%
\url{https://doi.org/10.1145/1400751.1400772}
\showURL{%
\tempurl}


\bibitem[B{\"u}nz et~al\mbox{.}(2020)]%
        {bunz2020flyclient}
\bibfield{author}{\bibinfo{person}{Benedikt B{\"u}nz}, \bibinfo{person}{Lucianna Kiffer}, \bibinfo{person}{Loi Luu}, {and} \bibinfo{person}{Mahdi Zamani}.} \bibinfo{year}{2020}\natexlab{}.
\newblock \showarticletitle{{FlyClient: Super-Light Clients for Cryptocurrencies}}. In \bibinfo{booktitle}{\emph{2020 IEEE Symposium on Security and Privacy (SP)}}. IEEE, \bibinfo{pages}{928--946}.
\newblock


\bibitem[Cao et~al\mbox{.}(2020)]%
        {9284704}
\bibfield{author}{\bibinfo{person}{Steven Cao}, \bibinfo{person}{Swanand Kadhe}, {and} \bibinfo{person}{Kannan Ramchandran}.} \bibinfo{year}{2020}\natexlab{}.
\newblock \showarticletitle{{CoVer: Collaborative Light-Node-Only Verification and Data Availability for Blockchains}}. In \bibinfo{booktitle}{\emph{2020 IEEE International Conference on Blockchain (Blockchain)}}. \bibinfo{pages}{45--52}.
\newblock
\href{https://doi.org/10.1109/Blockchain50366.2020.00014}{doi:\nolinkurl{10.1109/Blockchain50366.2020.00014}}


\bibitem[Castro et~al\mbox{.}(2002)]%
        {castro2002secure}
\bibfield{author}{\bibinfo{person}{Miguel Castro}, \bibinfo{person}{Peter Druschel}, \bibinfo{person}{Ayalvadi Ganesh}, \bibinfo{person}{Antony Rowstron}, {and} \bibinfo{person}{Dan~S Wallach}.} \bibinfo{year}{2002}\natexlab{}.
\newblock \showarticletitle{{Secure Routing for Structured Peer-to-Peer Overlay Networks}}.
\newblock \bibinfo{journal}{\emph{ACM SIGOPS Operating Systems Review}} \bibinfo{volume}{36}, \bibinfo{number}{SI} (\bibinfo{year}{2002}), \bibinfo{pages}{299--314}.
\newblock


\bibitem[Celestia(2024)]%
        {celestia}
\bibfield{author}{\bibinfo{person}{Celestia}.} \bibinfo{year}{2024}\natexlab{}.
\newblock \bibinfo{booktitle}{\emph{{The First Modular Blockchain Network}}}.
\newblock
\urldef\tempurl%
\url{https://celestia.org/}
\showURL{%
Retrieved February 10, 2024 from \tempurl}


\bibitem[Cloudflare(2025)]%
        {cloudflareDistributedRandomness}
\bibfield{author}{\bibinfo{person}{Cloudflare}.} \bibinfo{year}{2025}\natexlab{}.
\newblock \bibinfo{title}{{D}istributed {R}andomness {B}eacon --- cloudflare.com}.
\newblock \bibinfo{howpublished}{\url{https://www.cloudflare.com/leagueofentropy/}}.
\newblock
\newblock
\shownote{[Accessed 10-01-2025]}.


\bibitem[CoinMarketCap(2024)]%
        {coinmarketcap}
\bibfield{author}{\bibinfo{person}{CoinMarketCap}.} \bibinfo{year}{2024}\natexlab{}.
\newblock \bibinfo{booktitle}{\emph{{Today's Cryptocurrency Prices by Market Cap}}}.
\newblock
\urldef\tempurl%
\url{https://coinmarketcap.com/}
\showURL{%
Retrieved February 10, 2024 from \tempurl}


\bibitem[Colbeck and Renner(2012)]%
        {colbeck2012free}
\bibfield{author}{\bibinfo{person}{Roger Colbeck} {and} \bibinfo{person}{Renato Renner}.} \bibinfo{year}{2012}\natexlab{}.
\newblock \showarticletitle{{Free Randomness Can Be Amplified}}.
\newblock \bibinfo{journal}{\emph{Nature Physics}} \bibinfo{volume}{8}, \bibinfo{number}{6} (\bibinfo{year}{2012}), \bibinfo{pages}{450--453}.
\newblock


\bibitem[Coretti et~al\mbox{.}(2022)]%
        {10.1145/3548606.3560638}
\bibfield{author}{\bibinfo{person}{Sandro Coretti}, \bibinfo{person}{Aggelos Kiayias}, \bibinfo{person}{Cristopher Moore}, {and} \bibinfo{person}{Alexander Russell}.} \bibinfo{year}{2022}\natexlab{}.
\newblock \showarticletitle{{The Generals' Scuttlebutt: Byzantine-Resilient Gossip Protocols}}. In \bibinfo{booktitle}{\emph{Proceedings of the 2022 ACM SIGSAC Conference on Computer and Communications Security}} (Los Angeles, CA, USA) \emph{(\bibinfo{series}{CCS '22})}. \bibinfo{publisher}{Association for Computing Machinery}, \bibinfo{address}{New York, NY, USA}, \bibinfo{pages}{595–608}.
\newblock
\showISBNx{9781450394505}
\urldef\tempurl%
\url{https://doi.org/10.1145/3548606.3560638}
\showURL{%
\tempurl}


\bibitem[Dixit et~al\mbox{.}(2015)]%
        {dixit2015review}
\bibfield{author}{\bibinfo{person}{Sweta Dixit}, \bibinfo{person}{Krishna~Kumar Joshi}, {and} \bibinfo{person}{Neelam Joshi}.} \bibinfo{year}{2015}\natexlab{}.
\newblock \showarticletitle{A Review: Black Hole and Gray Hole Attack in MANET}.
\newblock \bibinfo{journal}{\emph{International Journal of Future Generation Communication and Networking}} \bibinfo{volume}{8}, \bibinfo{number}{4} (\bibinfo{year}{2015}), \bibinfo{pages}{287--294}.
\newblock


\bibitem[Drand.love(2024)]%
        {drandDrandDistributed}
\bibfield{author}{\bibinfo{person}{Drand.love}.} \bibinfo{year}{2024}\natexlab{}.
\newblock \bibinfo{title}{Drand - {D}istributed {R}andomness {B}eacon. --- drand.love}.
\newblock \bibinfo{howpublished}{\url{https://drand.love/}}.
\newblock
\newblock
\shownote{[Accessed 10-01-2025]}.


\bibitem[Dreier et~al\mbox{.}(2018)]%
        {dreier2018local}
\bibfield{author}{\bibinfo{person}{Jan Dreier}, \bibinfo{person}{Philipp Kuinke}, \bibinfo{person}{Ba~Le Xuan}, {and} \bibinfo{person}{Peter Rossmanith}.} \bibinfo{year}{2018}\natexlab{}.
\newblock \showarticletitle{{Local Structure Theorems for Erd{\H{o}}s--R{\'e}nyi Graphs and Their Algorithmic Applications}}. In \bibinfo{booktitle}{\emph{International Conference on Current Trends in Theory and Practice of Computer Science, Krems, Austria, January 29-February 2, 2018, Proceedings 44}}. Springer, \bibinfo{pages}{125--136}.
\newblock


\bibitem[{EigenLayer Docs}(2024)]%
        {eigenda}
\bibfield{author}{\bibinfo{person}{{EigenLayer Docs}}.} \bibinfo{year}{2024}\natexlab{}.
\newblock \bibinfo{booktitle}{\emph{{EigenDA Overview}}}.
\newblock
\urldef\tempurl%
\url{https://docs.eigenlayer.xyz/eigenda/overview}
\showURL{%
Retrieved February 10, 2024 from \tempurl}


\bibitem[{Ethereum Peer-to-Peer Networking Specifications}(2024a)]%
        {discv4}
\bibfield{author}{\bibinfo{person}{{Ethereum Peer-to-Peer Networking Specifications}}.} \bibinfo{year}{2024}\natexlab{a}.
\newblock \bibinfo{booktitle}{\emph{{Node Discovery Protocol}}}.
\newblock
\urldef\tempurl%
\url{https://github.com/ethereum/devp2p/blob/master/discv4.md}
\showURL{%
Retrieved February 10, 2024 from \tempurl}


\bibitem[{Ethereum Peer-to-Peer Networking Specifications}(2024b)]%
        {discv5}
\bibfield{author}{\bibinfo{person}{{Ethereum Peer-to-Peer Networking Specifications}}.} \bibinfo{year}{2024}\natexlab{b}.
\newblock \bibinfo{booktitle}{\emph{{Node Discovery Protocol v5}}}.
\newblock
\urldef\tempurl%
\url{https://github.com/ethereum/devp2p/blob/master/discv5/discv5.md}
\showURL{%
Retrieved February 10, 2024 from \tempurl}


\bibitem[Ethereum.org(2024a)]%
        {scaling}
\bibfield{author}{\bibinfo{person}{Ethereum.org}.} \bibinfo{year}{2024}\natexlab{a}.
\newblock \bibinfo{booktitle}{\emph{{Ethereum Layer-2 Scaling}}}.
\newblock
\urldef\tempurl%
\url{https://ethereum.org/developers/docs/scaling}
\showURL{%
Retrieved February 10, 2024 from \tempurl}


\bibitem[Ethereum.org(2024b)]%
        {eth2netlayer}
\bibfield{author}{\bibinfo{person}{Ethereum.org}.} \bibinfo{year}{2024}\natexlab{b}.
\newblock \bibinfo{booktitle}{\emph{{Networking Layer}}}.
\newblock
\urldef\tempurl%
\url{https://ethereum.org/developers/docs/networking-layer}
\showURL{%
Retrieved February 10, 2024 from \tempurl}


\bibitem[Ethereum.org(2024c)]%
        {rollups}
\bibfield{author}{\bibinfo{person}{Ethereum.org}.} \bibinfo{year}{2024}\natexlab{c}.
\newblock \bibinfo{booktitle}{\emph{{Optimistic Rollups}}}.
\newblock
\urldef\tempurl%
\url{https://ethereum.org/developers/docs/scaling/optimistic-rollups}
\showURL{%
Retrieved February 10, 2024 from \tempurl}


\bibitem[Ethernodes.org(2024)]%
        {ethmainnetstat}
\bibfield{author}{\bibinfo{person}{Ethernodes.org}.} \bibinfo{year}{2024}\natexlab{}.
\newblock \bibinfo{booktitle}{\emph{{Ethereum Mainnet Statistics}}}.
\newblock
\urldef\tempurl%
\url{https://www.ethernodes.org/}
\showURL{%
Retrieved February 10, 2024 from \tempurl}


\bibitem[Etherscan(2024a)]%
        {etherblocktime}
\bibfield{author}{\bibinfo{person}{Etherscan}.} \bibinfo{year}{2024}\natexlab{a}.
\newblock \bibinfo{booktitle}{\emph{{Ethereum Average Block Time Chart}}}.
\newblock
\urldef\tempurl%
\url{https://etherscan.io/chart/blocktime/}
\showURL{%
Retrieved May 21, 2024 from \tempurl}


\bibitem[Etherscan(2024b)]%
        {ethertxnfee}
\bibfield{author}{\bibinfo{person}{Etherscan}.} \bibinfo{year}{2024}\natexlab{b}.
\newblock \bibinfo{booktitle}{\emph{{Ethereum Network Transaction Fee Chart}}}.
\newblock
\urldef\tempurl%
\url{https://etherscan.io/chart/transactionfee/}
\showURL{%
Retrieved February 10, 2024 from \tempurl}


\bibitem[Freedman and Morris(2002)]%
        {10.1145/586110.586137}
\bibfield{author}{\bibinfo{person}{Michael~J. Freedman} {and} \bibinfo{person}{Robert Morris}.} \bibinfo{year}{2002}\natexlab{}.
\newblock \showarticletitle{{Tarzan: A Peer-to-Peer Anonymizing Network Layer}}. In \bibinfo{booktitle}{\emph{Proceedings of the 9th ACM Conference on Computer and Communications Security}} (Washington, DC, USA) \emph{(\bibinfo{series}{CCS '02})}. \bibinfo{publisher}{Association for Computing Machinery}, \bibinfo{address}{New York, NY, USA}, \bibinfo{pages}{193–206}.
\newblock
\showISBNx{1581136129}
\urldef\tempurl%
\url{https://doi.org/10.1145/586110.586137}
\showURL{%
\tempurl}


\bibitem[Gkantsidis et~al\mbox{.}(2004)]%
        {gkantsidis2004random}
\bibfield{author}{\bibinfo{person}{Christos Gkantsidis}, \bibinfo{person}{Milena Mihail}, {and} \bibinfo{person}{Amin Saberi}.} \bibinfo{year}{2004}\natexlab{}.
\newblock \showarticletitle{{Random Walks in Peer-to-Peer Networks}}. In \bibinfo{booktitle}{\emph{IEEE INFOCOM 2004}}, Vol.~\bibinfo{volume}{1}. IEEE.
\newblock


\bibitem[Hackn.gitbook.io(2025)]%
        {gitbookProtocolLayer}
\bibfield{author}{\bibinfo{person}{Hackn.gitbook.io}.} \bibinfo{year}{2025}\natexlab{}.
\newblock \bibinfo{title}{{P}rotocol {L}ayer {A}ttacks | {B}lockchain {C}ommon {A}ttack {T}echniques}.
\newblock \bibinfo{howpublished}{\url{https://hackn.gitbook.io/l1-security/protocol-layer\#stake-grinding-attack}}.
\newblock
\newblock
\shownote{[Accessed 15-01-2025]}.


\bibitem[Hall-Andersen et~al\mbox{.}(2023)]%
        {cryptoeprint:2023/1079}
\bibfield{author}{\bibinfo{person}{Mathias Hall-Andersen}, \bibinfo{person}{Mark Simkin}, {and} \bibinfo{person}{Benedikt Wagner}.} \bibinfo{year}{2023}\natexlab{}.
\newblock \bibinfo{title}{{Foundations of Data Availability Sampling}}.
\newblock \bibinfo{howpublished}{Cryptology ePrint Archive, Paper 2023/1079}.
\newblock
\urldef\tempurl%
\url{https://eprint.iacr.org/2023/1079}
\showURL{%
\tempurl}
\newblock
\shownote{\url{https://eprint.iacr.org/2023/1079}}.


\bibitem[Kiayias et~al\mbox{.}(2016)]%
        {kiayias2016proofs}
\bibfield{author}{\bibinfo{person}{Aggelos Kiayias}, \bibinfo{person}{Nikolaos Lamprou}, {and} \bibinfo{person}{Aikaterini-Panagiota Stouka}.} \bibinfo{year}{2016}\natexlab{}.
\newblock \showarticletitle{{Proofs of Proofs of Work with Sublinear Complexity}}. In \bibinfo{booktitle}{\emph{Financial Cryptography and Data Security: FC 2016 International Workshops, BITCOIN, VOTING, and WAHC, Christ Church, Barbados, February 26, 2016, Revised Selected Papers 20}}. Springer, \bibinfo{pages}{61--78}.
\newblock


\bibitem[Kiayias et~al\mbox{.}(2020)]%
        {kiayias2020non}
\bibfield{author}{\bibinfo{person}{Aggelos Kiayias}, \bibinfo{person}{Andrew Miller}, {and} \bibinfo{person}{Dionysis Zindros}.} \bibinfo{year}{2020}\natexlab{}.
\newblock \showarticletitle{{Non-Interactive Proofs of Proof-of-Work}}. In \bibinfo{booktitle}{\emph{Financial Cryptography and Data Security: 24th International Conference, FC 2020, Kota Kinabalu, Malaysia, February 10--14, 2020 Revised Selected Papers 24}}. Springer, \bibinfo{pages}{505--522}.
\newblock


\bibitem[Kiffer et~al\mbox{.}(2021)]%
        {kiffer2021under}
\bibfield{author}{\bibinfo{person}{Lucianna Kiffer}, \bibinfo{person}{Asad Salman}, \bibinfo{person}{Dave Levin}, \bibinfo{person}{Alan Mislove}, {and} \bibinfo{person}{Cristina Nita-Rotaru}.} \bibinfo{year}{2021}\natexlab{}.
\newblock \showarticletitle{{Under the Hood of the Ethereum Gossip Protocol}}. In \bibinfo{booktitle}{\emph{Financial Cryptography and Data Security: 25th International Conference, FC 2021, Virtual Event, March 1--5, 2021, Revised Selected Papers, Part II 25}}. Springer, \bibinfo{pages}{437--456}.
\newblock


\bibitem[Kr{\'o}l et~al\mbox{.}(2023)]%
        {krol2023data}
\bibfield{author}{\bibinfo{person}{Micha{\l} Kr{\'o}l}, \bibinfo{person}{Onur Ascigil}, \bibinfo{person}{Sergi Rene}, \bibinfo{person}{Etienne Rivi{\`e}re}, \bibinfo{person}{Matthieu Pigaglio}, \bibinfo{person}{Kaleem Peeroo}, \bibinfo{person}{Vladimir Stankovic}, \bibinfo{person}{Ramin Sadre}, {and} \bibinfo{person}{Felix Lange}.} \bibinfo{year}{2023}\natexlab{}.
\newblock \showarticletitle{{Data Availability Sampling in Ethereum: Analysis of P2P Networking Requirements}}.
\newblock \bibinfo{journal}{\emph{arXiv preprint arXiv:2306.11456}} (\bibinfo{year}{2023}).
\newblock


\bibitem[Kwong and Tsang(2005)]%
        {kwong2005relationship}
\bibfield{author}{\bibinfo{person}{Kin~Wah Kwong} {and} \bibinfo{person}{Danny~HK Tsang}.} \bibinfo{year}{2005}\natexlab{}.
\newblock \showarticletitle{{On the Relationship of Node Capacity Distribution and P2P Topology Formation}}. In \bibinfo{booktitle}{\emph{HPSR. 2005 Workshop on High Performance Switching and Routing, 2005.}} IEEE, \bibinfo{pages}{123--127}.
\newblock


\bibitem[Lov{\'a}sz(1993)]%
        {lovasz1993random}
\bibfield{author}{\bibinfo{person}{L{\'a}szl{\'o} Lov{\'a}sz}.} \bibinfo{year}{1993}\natexlab{}.
\newblock \showarticletitle{{Random Walks on Graphs}}.
\newblock \bibinfo{journal}{\emph{Combinatorics, Paul erdos is eighty}} \bibinfo{volume}{2}, \bibinfo{number}{1-46} (\bibinfo{year}{1993}), \bibinfo{pages}{4}.
\newblock


\bibitem[Ma et~al\mbox{.}(2025)]%
        {ma2025sok}
\bibfield{author}{\bibinfo{person}{Kai Ma}, \bibinfo{person}{Jintao Huang}, \bibinfo{person}{Ningyu He}, \bibinfo{person}{Zhuo Wang}, {and} \bibinfo{person}{Haoyu Wang}.} \bibinfo{year}{2025}\natexlab{}.
\newblock \showarticletitle{SoK: On the Security of Non-Fungible Tokens}.
\newblock \bibinfo{journal}{\emph{Blockchain: Research and Applications}} (\bibinfo{year}{2025}), \bibinfo{pages}{100268}.
\newblock


\bibitem[Mao et~al\mbox{.}(2020)]%
        {10.1145/3382734.3405704}
\bibfield{author}{\bibinfo{person}{Yifan Mao}, \bibinfo{person}{Soubhik Deb}, \bibinfo{person}{Shaileshh~Bojja Venkatakrishnan}, \bibinfo{person}{Sreeram Kannan}, {and} \bibinfo{person}{Kannan Srinivasan}.} \bibinfo{year}{2020}\natexlab{}.
\newblock \showarticletitle{{Perigee: Efficient Peer-to-Peer Network Design for Blockchains}}. In \bibinfo{booktitle}{\emph{Proceedings of the 39th Symposium on Principles of Distributed Computing}} (Virtual Event, Italy) \emph{(\bibinfo{series}{PODC '20})}. \bibinfo{publisher}{Association for Computing Machinery}, \bibinfo{address}{New York, NY, USA}, \bibinfo{pages}{428–437}.
\newblock
\showISBNx{9781450375825}
\urldef\tempurl%
\url{https://doi.org/10.1145/3382734.3405704}
\showURL{%
\tempurl}


\bibitem[Massouli\'{e} et~al\mbox{.}(2006)]%
        {massoulie2006peer}
\bibfield{author}{\bibinfo{person}{Laurent Massouli\'{e}}, \bibinfo{person}{Erwan Le~Merrer}, \bibinfo{person}{Anne-Marie Kermarrec}, {and} \bibinfo{person}{Ayalvadi Ganesh}.} \bibinfo{year}{2006}\natexlab{}.
\newblock \showarticletitle{{Peer Counting and Sampling in Overlay Networks: Random Walk Methods}}. In \bibinfo{booktitle}{\emph{Proceedings of the Twenty-Fifth Annual ACM Symposium on Principles of Distributed Computing}} (Denver, Colorado, USA) \emph{(\bibinfo{series}{PODC '06})}. \bibinfo{publisher}{Association for Computing Machinery}, \bibinfo{address}{New York, NY, USA}, \bibinfo{pages}{123–132}.
\newblock
\showISBNx{1595933840}
\urldef\tempurl%
\url{https://doi.org/10.1145/1146381.1146402}
\showURL{%
\tempurl}


\bibitem[Matzutt et~al\mbox{.}(2020)]%
        {10.1145/3320269.3384729}
\bibfield{author}{\bibinfo{person}{Roman Matzutt}, \bibinfo{person}{Jan Pennekamp}, \bibinfo{person}{Erik Buchholz}, {and} \bibinfo{person}{Klaus Wehrle}.} \bibinfo{year}{2020}\natexlab{}.
\newblock \showarticletitle{Utilizing Public Blockchains for the Sybil-Resistant Bootstrapping of Distributed Anonymity Services}. In \bibinfo{booktitle}{\emph{Proceedings of the 15th ACM Asia Conference on Computer and Communications Security}} (Taipei, Taiwan) \emph{(\bibinfo{series}{ASIA CCS '20})}. \bibinfo{publisher}{Association for Computing Machinery}, \bibinfo{address}{New York, NY, USA}, \bibinfo{pages}{531–542}.
\newblock
\showISBNx{9781450367509}
\href{https://doi.org/10.1145/3320269.3384729}{doi:\nolinkurl{10.1145/3320269.3384729}}


\bibitem[Maymounkov and Mazieres(2002)]%
        {maymounkov2002kademlia}
\bibfield{author}{\bibinfo{person}{Petar Maymounkov} {and} \bibinfo{person}{David Mazieres}.} \bibinfo{year}{2002}\natexlab{}.
\newblock \showarticletitle{{Kademlia: A Peer-to-Peer Information System Based on the XOR Metric}}. In \bibinfo{booktitle}{\emph{International Workshop on Peer-to-Peer Systems}}. Springer, \bibinfo{pages}{53--65}.
\newblock


\bibitem[Mittal and Borisov(2009)]%
        {10.1145/1653662.1653683}
\bibfield{author}{\bibinfo{person}{Prateek Mittal} {and} \bibinfo{person}{Nikita Borisov}.} \bibinfo{year}{2009}\natexlab{}.
\newblock \showarticletitle{ShadowWalker: Peer-to-peer Anonymous Communication Using Redundant Structured Topologies}. In \bibinfo{booktitle}{\emph{Proceedings of the 16th ACM Conference on Computer and Communications Security}} (Chicago, Illinois, USA) \emph{(\bibinfo{series}{CCS '09})}. \bibinfo{publisher}{Association for Computing Machinery}, \bibinfo{address}{New York, NY, USA}, \bibinfo{pages}{161–172}.
\newblock
\showISBNx{9781605588940}
\href{https://doi.org/10.1145/1653662.1653683}{doi:\nolinkurl{10.1145/1653662.1653683}}


\bibitem[Navaz and Fiaz(2015)]%
        {navaz2015load}
\bibfield{author}{\bibinfo{person}{AS~Syed Navaz} {and} \bibinfo{person}{AS~Syed Fiaz}.} \bibinfo{year}{2015}\natexlab{}.
\newblock \showarticletitle{{Load Balancing in P2P Networks using Random Walk Algorithm}}.
\newblock \bibinfo{journal}{\emph{March--2015, International Journal of Science and Research}} \bibinfo{number}{4} (\bibinfo{year}{2015}), \bibinfo{pages}{2062--2066}.
\newblock


\bibitem[Neu(2022)]%
        {Neu_2022}
\bibfield{author}{\bibinfo{person}{Joachim Neu}.} \bibinfo{year}{2022}\natexlab{}.
\newblock \bibinfo{booktitle}{\emph{{Data Availability Sampling: From Basics to Open Problems}}}.
\newblock
\urldef\tempurl%
\url{https://www.paradigm.xyz/2022/08/das}
\showURL{%
Retrieved January 20, 2024 from \tempurl}


\bibitem[Neu et~al\mbox{.}(2022)]%
        {neu2022two}
\bibfield{author}{\bibinfo{person}{Joachim Neu}, \bibinfo{person}{Ertem~Nusret Tas}, {and} \bibinfo{person}{David Tse}.} \bibinfo{year}{2022}\natexlab{}.
\newblock \showarticletitle{Two More Attacks on Proof-of-Stake GHOST/Ethereum}. In \bibinfo{booktitle}{\emph{Proceedings of the 2022 ACM Workshop on Developments in Consensus}}. \bibinfo{pages}{43--52}.
\newblock


\bibitem[Neuder et~al\mbox{.}(2021)]%
        {neuder2021low}
\bibfield{author}{\bibinfo{person}{Michael Neuder}, \bibinfo{person}{Daniel~J Moroz}, \bibinfo{person}{Rithvik Rao}, {and} \bibinfo{person}{David~C Parkes}.} \bibinfo{year}{2021}\natexlab{}.
\newblock \showarticletitle{{Low-Cost Attacks on Ethereum 2.0 by Sub-1/3 Stakeholders}}.
\newblock \bibinfo{journal}{\emph{arXiv preprint arXiv:2102.02247}} (\bibinfo{year}{2021}).
\newblock


\bibitem[Nikolaenko and Boneh(2024)]%
        {dankshardingblocksize}
\bibfield{author}{\bibinfo{person}{Valeria Nikolaenko} {and} \bibinfo{person}{Dan Boneh}.} \bibinfo{year}{2024}\natexlab{}.
\newblock \bibinfo{booktitle}{\emph{{Data Availability Sampling and Danksharding: An Overview and a Proposal for Improvements}}}.
\newblock
\urldef\tempurl%
\url{https://a16zcrypto.com/posts/article/an-overview-of-danksharding-and-a-proposal-for-improvement-of-das/}
\showURL{%
Retrieved February 10, 2024 from \tempurl}


\bibitem[Panchenko et~al\mbox{.}(2009)]%
        {10.1145/1653662.1653681}
\bibfield{author}{\bibinfo{person}{Andriy Panchenko}, \bibinfo{person}{Stefan Richter}, {and} \bibinfo{person}{Arne Rache}.} \bibinfo{year}{2009}\natexlab{}.
\newblock \showarticletitle{NISAN: Network Information Service for Anonymization Networks}. In \bibinfo{booktitle}{\emph{Proceedings of the 16th ACM Conference on Computer and Communications Security}} (Chicago, Illinois, USA) \emph{(\bibinfo{series}{CCS '09})}. \bibinfo{publisher}{Association for Computing Machinery}, \bibinfo{address}{New York, NY, USA}, \bibinfo{pages}{141–150}.
\newblock
\showISBNx{9781605588940}
\href{https://doi.org/10.1145/1653662.1653681}{doi:\nolinkurl{10.1145/1653662.1653681}}


\bibitem[Pigaglio et~al\mbox{.}(2022)]%
        {9912135}
\bibfield{author}{\bibinfo{person}{Matthieu Pigaglio}, \bibinfo{person}{Joachim Bruneau-Queyreix}, \bibinfo{person}{Yérom-David Bromberg}, \bibinfo{person}{Davide Frey}, \bibinfo{person}{Etienne Rivière}, {and} \bibinfo{person}{Laurent Réveillère}.} \bibinfo{year}{2022}\natexlab{}.
\newblock \showarticletitle{RAPTEE: Leveraging Trusted Execution Environments for Byzantine-Tolerant Peer Sampling Services}. In \bibinfo{booktitle}{\emph{2022 IEEE 42nd International Conference on Distributed Computing Systems (ICDCS)}}. \bibinfo{pages}{603--613}.
\newblock
\href{https://doi.org/10.1109/ICDCS54860.2022.00064}{doi:\nolinkurl{10.1109/ICDCS54860.2022.00064}}


\bibitem[Rennhard and Plattner(2004)]%
        {rennhard2004practical}
\bibfield{author}{\bibinfo{person}{Marc Rennhard} {and} \bibinfo{person}{Bernhard Plattner}.} \bibinfo{year}{2004}\natexlab{}.
\newblock \showarticletitle{{Practical Anonymity for the Masses with MorphMix}}. In \bibinfo{booktitle}{\emph{Financial Cryptography: 8th International Conference, FC 2004, Key West, FL, USA, February 9-12, 2004. Revised Papers 8}}. Springer, \bibinfo{pages}{233--250}.
\newblock


\bibitem[Sarma et~al\mbox{.}(2015)]%
        {sarma2015efficient}
\bibfield{author}{\bibinfo{person}{Atish~Das Sarma}, \bibinfo{person}{Anisur~Rahaman Molla}, {and} \bibinfo{person}{Gopal Pandurangan}.} \bibinfo{year}{2015}\natexlab{}.
\newblock \showarticletitle{{Efficient Random Walk Sampling in Distributed Networks}}.
\newblock \bibinfo{journal}{\emph{J. Parallel and Distrib. Comput.}}  \bibinfo{volume}{77} (\bibinfo{year}{2015}), \bibinfo{pages}{84--94}.
\newblock


\bibitem[Schwarz-Schilling et~al\mbox{.}(2022)]%
        {schwarz2022three}
\bibfield{author}{\bibinfo{person}{Caspar Schwarz-Schilling}, \bibinfo{person}{Joachim Neu}, \bibinfo{person}{Barnab{\'e} Monnot}, \bibinfo{person}{Aditya Asgaonkar}, \bibinfo{person}{Ertem~Nusret Tas}, {and} \bibinfo{person}{David Tse}.} \bibinfo{year}{2022}\natexlab{}.
\newblock \showarticletitle{Three Attacks on Proof-of-Stake Ethereum}. In \bibinfo{booktitle}{\emph{International Conference on Financial Cryptography and Data Security}}. Springer, \bibinfo{pages}{560--576}.
\newblock


\bibitem[Sheng et~al\mbox{.}(2021)]%
        {sheng2021aced}
\bibfield{author}{\bibinfo{person}{Peiyao Sheng}, \bibinfo{person}{Bowen Xue}, \bibinfo{person}{Sreeram Kannan}, {and} \bibinfo{person}{Pramod Viswanath}.} \bibinfo{year}{2021}\natexlab{}.
\newblock \showarticletitle{{ACeD: Scalable Data Availability Oracle}}. In \bibinfo{booktitle}{\emph{Financial Cryptography and Data Security: 25th International Conference, FC 2021, Virtual Event, March 1--5, 2021, Revised Selected Papers, Part II 25}}. Springer, \bibinfo{pages}{299--318}.
\newblock


\bibitem[Shi et~al\mbox{.}(2022)]%
        {shi2022blockchain}
\bibfield{author}{\bibinfo{person}{Jia Shi}, \bibinfo{person}{Xuewen Zeng}, {and} \bibinfo{person}{Rui Han}.} \bibinfo{year}{2022}\natexlab{}.
\newblock \showarticletitle{{A Blockchain-Based Decentralized Public Key Infrastructure for Information-Centric Networks}}.
\newblock \bibinfo{journal}{\emph{Information}} \bibinfo{volume}{13}, \bibinfo{number}{5} (\bibinfo{year}{2022}), \bibinfo{pages}{264}.
\newblock


\bibitem[Sivaraman et~al\mbox{.}(2020)]%
        {10.5555/3388242.3388299}
\bibfield{author}{\bibinfo{person}{Vibhaalakshmi Sivaraman}, \bibinfo{person}{Shaileshh~Bojja Venkatakrishnan}, \bibinfo{person}{Kathleen Ruan}, \bibinfo{person}{Parimarjan Negi}, \bibinfo{person}{Lei Yang}, \bibinfo{person}{Radhika Mittal}, \bibinfo{person}{Giulia Fanti}, {and} \bibinfo{person}{Mohammad Alizadeh}.} \bibinfo{year}{2020}\natexlab{}.
\newblock \showarticletitle{{High Throughput Cryptocurrency Routing in Payment Channel Networks}}. In \bibinfo{booktitle}{\emph{Proceedings of the 17th Usenix Conference on Networked Systems Design and Implementation}} (Santa Clara, CA, USA) \emph{(\bibinfo{series}{NSDI'20})}. \bibinfo{publisher}{USENIX Association}, \bibinfo{address}{USA}, \bibinfo{pages}{777–796}.
\newblock
\showISBNx{9781939133137}


\bibitem[Stepień and Poniszewska-Marańda(2020)]%
        {9443953}
\bibfield{author}{\bibinfo{person}{Krzysztof Stepień} {and} \bibinfo{person}{Aneta Poniszewska-Marańda}.} \bibinfo{year}{2020}\natexlab{}.
\newblock \showarticletitle{Analysis of Security Methods in Vehicular Ad-Hoc Network Against Worm Hole and Gray Hole Attacks}. In \bibinfo{booktitle}{\emph{2020 IEEE Intl Conf on Parallel \& Distributed Processing with Applications, Big Data \& Cloud Computing, Sustainable Computing \& Communications, Social Computing \& Networking (ISPA/BDCloud/SocialCom/SustainCom)}}. \bibinfo{pages}{371--378}.
\newblock
\href{https://doi.org/10.1109/ISPA-BDCloud-SocialCom-SustainCom51426.2020.00072}{doi:\nolinkurl{10.1109/ISPA-BDCloud-SocialCom-SustainCom51426.2020.00072}}


\bibitem[Vedula et~al\mbox{.}(2023)]%
        {10174933}
\bibfield{author}{\bibinfo{person}{Arti Vedula}, \bibinfo{person}{Abhishek Gupta}, {and} \bibinfo{person}{Shaileshh Bojja~Venkatakrishnan}.} \bibinfo{year}{2023}\natexlab{}.
\newblock \showarticletitle{{Cobalt: Optimizing Mining Rewards in Proof-of-Work Network Games}}. In \bibinfo{booktitle}{\emph{2023 IEEE International Conference on Blockchain and Cryptocurrency (ICBC)}}. \bibinfo{pages}{1--9}.
\newblock
\href{https://doi.org/10.1109/ICBC56567.2023.10174933}{doi:\nolinkurl{10.1109/ICBC56567.2023.10174933}}


\bibitem[Vyzovitis et~al\mbox{.}(2020)]%
        {vyzovitis2020gossipsub}
\bibfield{author}{\bibinfo{person}{Dimitris Vyzovitis}, \bibinfo{person}{Yusef Napora}, \bibinfo{person}{Dirk McCormick}, \bibinfo{person}{David Dias}, {and} \bibinfo{person}{Yiannis Psaras}.} \bibinfo{year}{2020}\natexlab{}.
\newblock \showarticletitle{{Gossipsub: Attack-resilient Message Propagation in the Filecoin and ETH2. 0 Networks}}.
\newblock \bibinfo{journal}{\emph{arXiv preprint arXiv:2007.02754}} (\bibinfo{year}{2020}).
\newblock


\bibitem[Wang and Borisov(2012)]%
        {6258005}
\bibfield{author}{\bibinfo{person}{Qiyan Wang} {and} \bibinfo{person}{Nikita Borisov}.} \bibinfo{year}{2012}\natexlab{}.
\newblock \showarticletitle{{Octopus: A Secure and Anonymous DHT Lookup}}. In \bibinfo{booktitle}{\emph{2012 IEEE 32nd International Conference on Distributed Computing Systems}}. \bibinfo{pages}{325--334}.
\newblock
\href{https://doi.org/10.1109/ICDCS.2012.78}{doi:\nolinkurl{10.1109/ICDCS.2012.78}}


\bibitem[Wang et~al\mbox{.}(2010)]%
        {10.1145/1866307.1866343}
\bibfield{author}{\bibinfo{person}{Qiyan Wang}, \bibinfo{person}{Prateek Mittal}, {and} \bibinfo{person}{Nikita Borisov}.} \bibinfo{year}{2010}\natexlab{}.
\newblock \showarticletitle{In Search of an Anonymous and Secure Lookup: Attacks on Structured Peer-to-Peer Anonymous Communication Systems}. In \bibinfo{booktitle}{\emph{Proceedings of the 17th ACM Conference on Computer and Communications Security}} (Chicago, Illinois, USA) \emph{(\bibinfo{series}{CCS '10})}. \bibinfo{publisher}{Association for Computing Machinery}, \bibinfo{address}{New York, NY, USA}, \bibinfo{pages}{308–318}.
\newblock
\showISBNx{9781450302456}
\href{https://doi.org/10.1145/1866307.1866343}{doi:\nolinkurl{10.1145/1866307.1866343}}


\bibitem[Yu et~al\mbox{.}(2020)]%
        {yu2020coded}
\bibfield{author}{\bibinfo{person}{Mingchao Yu}, \bibinfo{person}{Saeid Sahraei}, \bibinfo{person}{Songze Li}, \bibinfo{person}{Salman Avestimehr}, \bibinfo{person}{Sreeram Kannan}, {and} \bibinfo{person}{Pramod Viswanath}.} \bibinfo{year}{2020}\natexlab{}.
\newblock \showarticletitle{{Coded Merkle Tree: Solving Data Availability Attacks in Blockchains}}. In \bibinfo{booktitle}{\emph{International Conference on Financial Cryptography and Data Security}}. Springer, \bibinfo{pages}{114--134}.
\newblock


\bibitem[Zhang and Bojja~Venkatakrishnan(2023a)]%
        {zhang2023kadabra}
\bibfield{author}{\bibinfo{person}{Yunqi Zhang} {and} \bibinfo{person}{Shaileshh Bojja~Venkatakrishnan}.} \bibinfo{year}{2023}\natexlab{a}.
\newblock \showarticletitle{{Kadabra: Adapting Kademlia for the Decentralized Web}}. In \bibinfo{booktitle}{\emph{International Conference on Financial Cryptography and Data Security}}. Springer, \bibinfo{pages}{327--345}.
\newblock


\bibitem[Zhang and Bojja~Venkatakrishnan(2023b)]%
        {10302968}
\bibfield{author}{\bibinfo{person}{Yunqi Zhang} {and} \bibinfo{person}{Shaileshh Bojja~Venkatakrishnan}.} \bibinfo{year}{2023}\natexlab{b}.
\newblock \showarticletitle{{Rethinking Incentive in Payment Channel Networks}}. In \bibinfo{booktitle}{\emph{2023 IEEE 43rd International Conference on Distributed Computing Systems Workshops (ICDCSW)}}. \bibinfo{pages}{61--66}.
\newblock
\href{https://doi.org/10.1109/ICDCSW60045.2023.00008}{doi:\nolinkurl{10.1109/ICDCSW60045.2023.00008}}


\bibitem[Zhong et~al\mbox{.}(2008)]%
        {zhong2008convergence}
\bibfield{author}{\bibinfo{person}{Ming Zhong}, \bibinfo{person}{Kai Shen}, {and} \bibinfo{person}{Joel Seiferas}.} \bibinfo{year}{2008}\natexlab{}.
\newblock \showarticletitle{{The Convergence-Guaranteed Random Walk and Its Applications in Peer-to-Peer Networks}}.
\newblock \bibinfo{journal}{\emph{IEEE Trans. Comput.}} \bibinfo{volume}{57}, \bibinfo{number}{5} (\bibinfo{year}{2008}), \bibinfo{pages}{619--633}.
\newblock


\end{thebibliography}

\appendix

\section{Outgoing Address Table Update} \label{alg1_details}
In Algorithm \ref{alg2}, we provide a more detailed version of Algorithm \ref{alg1} and illustrate the mechanism of \OA's iterative verifiable random walks.
\begin{algorithm}[!tbp]
\DontPrintSemicolon
\SetKwInOut{Input}{input}\SetKwInOut{Output}{output}
\Input{peers $\Gamma^v_\text{curr}$ in outgoing address table of current epoch; 
shared pseudo-randomness seed $\mathcal{R}$;
$v$'s secret key $SK_v$;
epoch counter $\varrho_\text{curr}$;
system-defined address table inconsistency threshold $\tau$}
\Output{updated set of peers $\Gamma^v_\text{next}$ for next epoch}
\If {$\mathcal{R}$ indicates $v$ is a currently eligible node} {
\tcc{$v$ infers the random walk path length $p$ from $\mathcal{R}, v$.}
$p_c \leftarrow 0 $ \tcc{Hop counter of the random walk}
$u^* \leftarrow v $ \tcc{$u^*$ stores node on current hop of the random walk} 
$d \leftarrow -1 $ \tcc{Index of node on first hop of the random walk} 
\While {$p_c < p$} {
\tcc{$\textsc{VRF}$() returns a pseudo-random number $i$ and its corresponding proof $\pi_v$} 
$(i, \pi_v) \leftarrow VRF(\mathcal{R}, \varrho_\text{curr}, p_c, u^*, SK_v)$ \;
\tcc{$\Gamma^{u^*}_\mathrm{curr}(i)$ returns the i-th node in $\Gamma^{u^*}_\mathrm{curr}$ and its address table snapshot} 
$(u^*, \omega) \leftarrow \Gamma^{u^*}_\mathrm{curr}(i)$ \;
$p_c \leftarrow p_c + 1$ \;
\If {$p_c = 1$}{
$d \leftarrow i$
}
\tcc{$\textsc{RequestAddrTable}(u^*)$ returns an address table that node $u^*$ claims to be its latest}
$\omega^* \leftarrow \textsc{RequestAddrTable}(u^*)$ \;
\If {$\textsc{Diff}(\omega, \omega^*) > \tau$} {
\tcc{$v$ issues a fraud proof against $u^*$ with $\omega, \omega^*$}
}
\tcc{$v$ also receives other address table snapshots $\omega'$ of $u^*$ when interacting with other nodes}
\If {$\textsc{Diff}(\omega', \omega^*) > \tau$} {
\tcc{$v$ issues a fraud proof against $u^*$ with $\omega', \omega^*$}
}
}
$\Gamma^v_\text{next} \leftarrow \Gamma^v_\text{curr} \backslash \{ \Gamma^v_\text{curr}(d) \} \cup \{ u^* \}$\; 
}
\caption{A detailed version of Algorithm \ref{alg1} - algorithm outline for updating entries of outgoing address table of node $v$ in each epoch.}
\label{alg2}
\end{algorithm}

\section{Analysis} \label{theory} 
We carry out an analysis to better understand the effectiveness of \OA~from a theoretical perspective. We first discuss the effectiveness of \OA's protocol enforcement. Then, we introduce the theoretical model and present the analysis results. The model we use for the analysis is simplified (compared to \S\ref{systemModel}) for analytical tractability. 

\subsection{Enforcement of \OA} \label{oa_enforcement}
Recall that, in \OA, with verifiable random walks (VRWs), a random walk that arrives at (or terminates at) an honest node must be a genuine random walk. While the random walk has to be genuine, a dishonest node may use multiple (different) routing tables and share them with different nodes based on its needs. By manipulating multiple routing tables, a dishonest node can carry out de facto rerouting on a random walk while pretending not to tamper with its genuineness. As discussed in \S\ref{tcc} and \S\ref{attack_strategy}, we call the act of manipulating multiple routing tables an equivocal table attack.

An equivocal table attack poses a critical threat to peer sampling algorithms, as it not only provides misleading information to honest nodes but also leads to further attacks that impact honest nodes' neighborhoods (e.g., adversarial routing and eclipse attack). As we have addressed in \S\ref{tcc}, in \OA, we catch, slash, and deter such attacks using table consistency checks (TCCs). Here, we would like to examine the probability that a dishonest node carrying out an equivocal table attack gets verifiably detected and slashed by TCCs. 

In a \OA~network where the routing table size of nodes and the length of random walks are \(O(log(n))\), suppose a dishonest node \(u\) carries out an equivocal table attack, and it has $m$ different routing tables $\Gamma^u_\text{1}, \Gamma^u_\text{2}, \ldots, \Gamma^u_m$. 
Let us call the round in which \(u\) starts the equivocal table attack round \(1\). 
In round \(1\), \(u\) shares $\Gamma^u_\text{1}$ to a nonempty set of honest nodes, $\Gamma^u_\text{2}$ to another nonempty set of honest nodes, and so on. 
In the following analysis, we show that Honeybee's TCCs can catch equivocal attacks with high probability if the number of different routing tables $m$ is greater than $\sqrt{n}$, where $n$ is the number of nodes. 

\smallskip 
\noindent 
{\bf Model, assumptions, and notation.}
We let $k$ denote the size of the address table at each node. 
$\eta$ is the fraction of nodes eligible to do a random walk in each round.  
$l$ denotes the maximum length of a random walk conducted by any node. 
In practice, we can choose both $k$ and $l$ as $O(\log n)$. 
$b$ is the number of routing tables stored in the scratch pad. 
We set $b$ to be sufficiently large that all nodes encountered by a node in $k/(2\eta)$ consecutive rounds can be stored in the scratchpad. 

Notice that an equivocating adversarial node with $m$ routing tables, can simply be thought of as $m$ different nodes in the network.
Assuming such a model, we let the network be a random regular graph with a diameter of $O(\log n)$ at each round. 
Here $n$ is the total number of nodes in the network including the $m$ different adversarial nodes. 
We assume the local neighborhood around each node is tree like.  
Each round a fraction of $\eta$ nodes are eligible to conduct the random walk. 
We assume the sets of eligible nodes in $1/\eta$ consecutive rounds are disjoint. 
$\tilde{n} < n$ is the number of honest nodes in the network, and $v_1, v_2, \ldots, v_{\tilde{n}}$ denotes the honest nodes. 
Apart from carrying multiple routing tables, we assume the adversarial nodes otherwise follow the Honeybee protocol. 

\smallskip
\noindent
{\bf Problem formulation.}
Define a window as $k/(2\eta)$ consecutive rounds.
For any node (including adversarial ones), after $k/\eta$ rounds, its address table is likely to be completely refreshed with fresh addresses in Honeybee. 
By defining the window as half of the time taken to refresh the routing table, two honest nodes that encounter an adversarial node during one window can  compare their versions of the adverarial node's routing table in the next window if the nodes happen to meet (i.e., one node visits the other during a random walk). 
If the versions differ significantly (e.g., more than half of the addresses differ), then the two nodes can issue a fraud proof against the equivocating adversarial node.  
In the following, we are interested in the probability and time it takes for honest nodes to detect the presence of an equivocating adversarial node with $m$ different routing tables. 
We assume that dishonest nodes cooperate with each other and do not contribute to TCCs among honest nodes. 
While they do not contribute to TCCs, dishonest nodes cannot help themselves or each other evade TCCs because the snapshots of tables are digitally signed. 

\smallskip
\noindent 
{\bf Analysis.}
Let $u_1, u_2, \ldots, u_m$ be the $m$ adversarial nodes. 
Consider node $u_1$. 
For each honest node $v_1, v_2, \ldots, v_{\tilde{n}}$, let $\mathbf{1}_{v_i \rightarrow u_1}^t$ be the random variable that is 1 if node $v_i$ visits node $u_1$ during the current round $t$, and 0 otherwise.
Each honest node conducts its walk independently of other honest nodes; therefore, the random variables $\mathbf{1}_{v_1 \rightarrow u_1}^t, \mathbf{1}_{v_2 \rightarrow u_1}^t, \ldots, \mathbf{1}_{v_{\tilde{n}} \rightarrow u_1}^t$ are mutually independent. 
Let 
\begin{align}
    X_1 = \sum_{t=1}^{k/(2\eta)} X_1^t = \sum_{t=1}^{k/(2\eta)} \sum_{i=1}^{\tilde{n}} \mathbf{1}_{v_i \rightarrow u_1}^t,  
\end{align}
be the number of visits of $u_1$ by honest nodes in the window.  
We similarly define $X_i$ as the number of visits of adversarial node $u_i$ for $i=1,2,\ldots,m$ by honest nodes during the window.
\begin{lemma}
\label{lem:X lower bound}
For $i=1,2,\ldots,m$, $X_i \geq 1$ with probability $1-o(1)$. 
\end{lemma}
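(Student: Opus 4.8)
The plan is to prove the statement by a first-moment argument reinforced with independence, rather than a bare expectation bound: I would show that the probability that \emph{no} honest walk visits $u_i$ during a window decays to $0$, by writing this probability as a product of per-walk failure probabilities and bounding each factor away from $1$ by $\Omega(1/n)$.

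First I would fix an adversarial node $u_i$ and pass to an undercount. Since $X_i$ tallies all visits along every honest walk in the window, it dominates the number $\tilde X_i$ of honest walks whose \emph{endpoint} lands on $u_i$; thus $\{X_i = 0\} \subseteq \{\tilde X_i = 0\}$ and it suffices to bound $P(\tilde X_i = 0)$. I would then count the relevant walks: by the assumption that the eligible sets in $1/\eta$ consecutive rounds are disjoint, each node becomes eligible exactly $k/2$ times during a window of $k/(2\eta)$ rounds, so the window contains exactly $\tilde n \cdot k/2$ honest walks, each contributing one endpoint indicator.

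The third step is to lower-bound the probability that a single honest walk ends at $u_i$. Because the graph is regular at each round, the walk's transition matrix is doubly stochastic with uniform stationary distribution; and because the walk length $l = O(\log n)$ exceeds the mixing time of the (random regular, hence expander) graph, the endpoint distribution sits within total-variation distance $o(1)$ of uniform. Hence, for $n$ large, the endpoint lands on $u_i$ with probability at least $(1-o(1))/n \geq c/n$ for some constant $c > 0$. Finally I would invoke independence: within a round the endpoint indicators for distinct honest nodes are mutually independent (as already stated for the visit indicators), and across rounds the fresh per-round randomness (a fresh graph draw together with a fresh VRF seed) makes the walks independent as well, so that
\begin{align}
P(X_i = 0) \;\leq\; P(\tilde X_i = 0) \;=\; \prod \bigl(1 - P(\mathrm{endpoint} = u_i)\bigr) \;\leq\; \Bigl(1 - \frac{c}{n}\Bigr)^{\tilde n k / 2} \;\leq\; \exp\!\Bigl(-\frac{c\,\tilde n\, k}{2n}\Bigr).
\end{align}
With the honest nodes forming a constant fraction of the network ($\tilde n = \Theta(n)$) and $k = \Theta(\log n)$, the exponent is $\Theta(\log n)$, so $P(X_i = 0) \leq n^{-\Omega(1)} = o(1)$, which yields $X_i \geq 1$ with probability $1 - o(1)$. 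Choosing the constant hidden in $k = \Theta(\log n)$ large enough pushes the exponent above $1$, after which a union bound over the $m \leq n$ adversarial copies delivers the conclusion simultaneously for all $i = 1, \ldots, m$; using the full visit count (which contributes $\Theta(l/n)$ per walk rather than $\Theta(1/n)$) makes the failure probability even super-polynomially small.

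The step I expect to be the main obstacle is the mixing estimate above: rigorously certifying that a length-$l$ walk on the per-round random regular graph has endpoint distribution within $o(1)$ of uniform, so that $P(\mathrm{endpoint} = u_i) = \Omega(1/n)$. This rests on the spectral gap and expansion of random regular graphs together with $l$ exceeding the mixing time, and I would either cite the standard mixing-time bound for random regular graphs or absorb it into the model's stated assumptions. A secondary point needing care is the cross-round independence used in the product bound, which leans on the idealized assumption of a freshly sampled graph and fresh public randomness each round; I would flag explicitly where that assumption is invoked.
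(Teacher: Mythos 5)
Your route is genuinely different from the paper's. The paper never invokes mixing: it works purely locally, using the assumption that the neighborhood of $u_1$ is a $k$-regular tree. It picks a radius $i$ with $O(\log n) > i > \log_k(m)$ so that every node at distance exactly $i$ from $u_1$ is honest, notes that each such node's walk reaches $u_1$ with probability at least $1/k^i$ (the direct path), and multiplies the $(1-1/k^i)$ failure factors over the $k(k-1)^{i-1}$ nodes at that distance and the $k/2$ eligibility slots in the window to get $P(X_1=0)=o(1)$. Your argument instead undercounts $X_i$ by walk \emph{endpoints}, uses equidistribution of the endpoint after $l=O(\log n)$ steps to get an $\Omega(1/n)$ hit probability per walk, and multiplies over all $\tilde n k/2$ honest walks in the window. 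Your counting of walks per window and your use of independence within and across rounds are consistent with the paper's assumptions, and your approach has the advantage of not needing to reason about where the $m$ adversarial copies sit relative to $u_i$; the paper's approach has the advantage of being elementary and of not needing any spectral or mixing hypothesis beyond the local tree structure.

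There is, however, one genuine gap in your argument as written: the inference from ``the endpoint distribution is within total-variation distance $o(1)$ of uniform'' to ``the endpoint lands on $u_i$ with probability at least $(1-o(1))/n$'' is invalid. A distribution can have TV distance $1/n = o(1)$ from uniform while assigning probability $0$ to a particular node, which is exactly the event you need to rule out. What you need is a pointwise ($\ell_\infty$) mixing bound, e.g.\ $|P^l(v,u_i) - 1/n| \leq \lambda_2^l$ for a $k$-regular expander with second eigenvalue $\lambda_2$, which for $l = C\log n$ with $C$ large enough does give $P^l(v,u_i) \geq (1-o(1))/n$. You flag this step as the main obstacle and propose to lean on the spectral gap of random regular graphs, which is the right fix, but as stated the TV bound does not suffice and the claim should be rephrased in terms of uniform mixing. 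A secondary caveat: the paper only fixes the \emph{maximum} walk length $l$, so you should also assume (or argue) that the realized walk lengths exceed the $\ell_\infty$ mixing time; the paper's local argument sidesteps this because a walk of any length $\geq i$ can hit $u_1$ via the direct path.
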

\begin{proof}
Consider the local neighborhood around node $u_1$ at round $t$.
By assumption, this neighborhood is a tree of degree $k$.
Some of the nodes in this tree could be other malicious nodes (out of the $m$ malicious nodes). 
For an honest node $v$ at a distance of $i$ hops away from $u_1$, the probability of $v$ not reaching $u_1$ during round $t$ (if $v$ is eligible at $t$) is at most $1-1/k^i$.

To decrease the likelihood of any honest node reaching $u_1$ with its random walk, the worst possible configuration is for the $m-1$ adversarial nodes to be situated as close to $u_1$ as possible.
That is, roughly speaking, the $m$ nodes all form a ball (of radius roughly $\log_k (m)$ around $u_1$).\footnote{This configuration is used only to simplify our exposition. The conclusions hold true for any other configuration as well.} 
Therefore there exists an $O(\log n) > i > \log_k(m)$ such that all nodes $i$ hops away from $u_1$ are honest.  
Let $V_i^t$ denote these honest nodes for round $t$. 
Considering only these honest nodes, we have 
\begin{align}
P(X_1 = 0) \leq P(\mathbf{1}_{v \rightarrow u_1}^t = 0 ~\forall v \in V_i^t, \forall t \in \{1,\ldots, k/(2\eta) \})    \\
\leq ((1 - 1/k^i)^{k(k-1)^{i-1}})^{k/2} 
\leq (e^{-k(k-1)^{i-1}/k^i})^{k/2} \leq (e^{-(k-1)^{i}/k^i})^{k/2} 
= o(1) 
\end{align}
\end{proof}

Let $\mathcal{X}_i$, for $i=1,2,\ldots,m$ denote the set of honest nodes that visit $u_i$ during the first window. 
In the following, we bound the probability of honest nodes detecting an equivocal attack in the subsequent window. 

\begin{theorem}
\label{thm: equivocation}
If $m = \Omega(\sqrt{n})$, with probability greater than $1-1/e$, the equivocation attack is detected by the honest nodes.  
\end{theorem}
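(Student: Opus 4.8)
The plan is to prove Theorem~\ref{thm: equivocation} by a birthday-paradox (second-moment) argument layered on top of Lemma~\ref{lem:X lower bound}. By that lemma each of the $m$ virtual adversarial nodes $u_1,\ldots,u_m$ is visited by at least one honest node during the first window with probability $1-o(1)$, so by linearity and a Markov bound on the complement, $\Omega(m)$ of the sets $\mathcal{X}_i$ are simultaneously nonempty with high probability; I would condition on this event and fix one representative visitor $w_i\in\mathcal{X}_i$ for each such $i$. Because the adversary equivocates, the snapshot that $w_i$ holds for the (single physical) adversarial node differs materially from the one $w_j$ holds whenever $i\neq j$. Hence $w_1,\ldots,w_{\Omega(m)}$ are honest nodes carrying pairwise-contradictory evidence, and a fraud proof is issued as soon as any two of them meet in the second window, i.e.\ as soon as one of the pair visits the other on a random walk and the resulting TCC comparison exceeds the threshold $\tau_t$.

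The first quantitative step is to estimate the probability $p$ that a fixed pair $(w_i,w_j)$ meets during the second window. In that window each honest node is eligible in $\Theta(1)$ of every $1/\eta$ rounds and so launches $\Theta(k)$ random walks, each of length $l=O(\log n)$ whose intermediate vertices are approximately uniform over the $n$ nodes by the $O(\log n)$ mixing time of the random regular graph and the tree-like neighborhood assumption. The probability a single walk of $w_i$ passes through $w_j$ is therefore $\Theta(l/n)$, and taking a union over the $\Theta(k)$ walks of $w_i$ (and symmetrically $w_j$, since a meeting in either direction suffices) gives $p=\Theta(kl/n)=\Theta(\mathrm{polylog}(n)/n)$; in particular $p\ge c/n$ for a constant $c>0$.

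Next I would apply linearity of expectation to the indicators $Y_{ij}$ that the pair $(w_i,w_j)$ meets in the second window, obtaining $\mathbb{E}\!\left[\sum_{i<j}Y_{ij}\right]=\binom{m}{2}\,p=\Theta(m^2/n)$, which is $\Omega(1)$ precisely when $m=\Omega(\sqrt{n})$ and exceeds $1$ for an appropriate constant. To convert this into a probability lower bound I would argue that the meeting events are, at worst, negatively associated across pairs, since two walks exploring an approximately tree-like neighborhood do not positively reinforce one another; this validates the product bound $P(\text{no pair meets})\le\prod_{i<j}(1-p)\le\exp\!\big(-\binom{m}{2}p\big)\le e^{-1}$, so detection occurs with probability greater than $1-1/e$, as claimed.

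I expect the main obstacle to be this final step: rigorously controlling the dependence among the $\binom{m}{2}$ meeting events to recover the clean $1-1/e$ constant. The events share both walkers and vertices, so exact independence fails, and the remedy is either a Paley--Zygmund second-moment computation showing $\mathrm{Var}(\sum_{i<j}Y_{ij})$ is dominated by the mean, or a Poisson/negative-association coupling that justifies the exponential product bound. A secondary technical point is establishing the approximate uniformity of the walk's intermediate vertices used to compute $p$, which again leans on the $O(\log n)$ mixing time and the tree-like local structure already invoked in Lemma~\ref{lem:X lower bound}.
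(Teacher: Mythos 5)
Your proposal follows essentially the same birthday-paradox strategy as the paper: invoke Lemma~\ref{lem:X lower bound} to get a nonempty witness set $\mathcal{X}_i$ for each of the $m$ virtual adversarial identities, estimate a per-pair meeting probability on the order of $1/n$ in the second window (your $\Theta(kl/n)$ versus the paper's more conservative $(X_2+\cdots+X_m)/(l k^l)$ per round, accumulated over $\eta X_1 \cdot k/(2\eta)$ walk-rounds), and conclude that $\binom{m}{2}\cdot\Theta(1/n)=\Omega(1)$ forces detection with probability at least $1-1/e$ when $m=\Omega(\sqrt{n})$. The one place where you diverge---and where you correctly identify the weak point of your own argument---is the dependence among the $\binom{m}{2}$ meeting events: you propose to justify the product bound via negative association or a Paley--Zygmund second-moment computation, neither of which you carry out. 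The paper avoids this machinery entirely by organizing the failure event by initiating set rather than by pair: it bounds $P(\text{no node in }\mathcal{X}_i\text{ visits }\bigcup_{j\neq i}\mathcal{X}_j)$ separately for each $i$, and since the walks launched by the disjoint sets $\mathcal{X}_1,\ldots,\mathcal{X}_m$ are conducted by distinct honest nodes using independent randomness, the product $\prod_i e^{-X_i(X_1+\cdots+X_m-X_i)/(2lk^{l-1})}\leq e^{-m(m-1)/(2lk^{l-1})}$ is a legitimate bound with no correlation argument needed (the non-disjoint case is dispatched separately, since an honest node in $\mathcal{X}_i\cap\mathcal{X}_j$ detects the equivocation trivially). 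If you adopt that grouping, your proof closes without the negative-association lemma; as written, that step remains a genuine gap you would need to fill.
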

\begin{proof}
Consider any node $v \in \mathcal{X}_1$. 
Let $\mathcal{R}_v^t$ denote the set of all possible $l$-length random walks for $v$. 
Each walk $w \in \mathcal{R}_v^t$ occurs with a probability of $1/k^l$. 
By assumption, for any node $v' \in \mathcal{X}_2 \cup \mathcal{X}_3 \cup \ldots \cup \mathcal{X}_m$ there exists a walk $w \in \mathcal{R}_v^t$ with $v'$ visited on the walk $w$. 
It follows that there must be at least $(X_2 + X_3 + \ldots + X_m)/l$ walks in $\mathcal{R}_v^t$ such that each walk contains at least one node of $\mathcal{X}_2 \cup \mathcal{X}_3 \cup \ldots \cup \mathcal{X}_m$.
Here, we consider the case where $\mathcal{X}_1, \mathcal{X}_2, \ldots, \mathcal{X}_m$ are disjoint sets. 
For otherwise, the equivocating node is trivially detected by an honest node in the intersection of two non-disjoint sets $\mathcal{X}_i$ and $\mathcal{X}_j$. 
The bound we derive below for probability of missed detection holds trivially true in this case. 
Therefore, the probability of $v$ not visiting any of the nodes in $\mathcal{X}_2 \cup \mathcal{X}_3 \cup \ldots \cup \mathcal{X}_m$ during its walk is bounded as
\begin{align}
P(v \text{ does not visit any node in } \mathcal{X}_2 \cup \ldots \cup \mathcal{X}_m \text{ at round } t) \leq 1 - (X_2 + \ldots + X_m)/(l k^l).
\end{align} 
Since a fraction $\eta$ of the nodes in $\mathcal{X}_1$ are eligible during a round and the walks are conducted independently, we have 
\begin{align}
P(\text{no node in } \mathcal{X}_1 \text{ visits any node in } \mathcal{X}_2 \cup \ldots \cup \mathcal{X}_m \text{ at round } t) \leq \left( 1 - (X_2 + \ldots + X_m)/(l k^l) \right)^{\eta X_1}.    
\end{align}
Further, there are $k/(2\eta)$ rounds in a window. 
Since the randomness across rounds is independent, the probability that no node in $\mathcal{X}_1$ visits any node in $\mathcal{X}_2 \cup \ldots \cup \mathcal{X}_m$ during the rounds in the window can be bounded as
\begin{align}
P(\text{no node in } \mathcal{X}_1 \text{ visits any node in } \mathcal{X}_2 \cup \ldots \cup \mathcal{X}_m \text{ in the window }) \leq \notag \\
\left( 1 - (X_2 + \ldots + X_m)/(l k^l) \right)^{\eta X_1 k/(2\eta)} \\
\leq e^{-(X_2 + \ldots + X_m) \eta X_1 k / (2\eta l k^l)} \leq e^{-X_1 (X_2 + \ldots + X_m)/(2l k^{l-1})}.        
\end{align}
Since the walks conducted by nodes in the sets $\mathcal{X}_1, \ldots, \mathcal{X}_m$ are independent, we have 
\begin{align}
P(\text{no pair of nodes in } \mathcal{X}_1 \cup \ldots \cup \mathcal{X}_m \text{ visit each other in the window}) \leq \notag \\
\prod_{i=1}^m e^{-X_i (X_1 + + \ldots + X_m - X_i)/(2l k^{l-1})}.        
\end{align}
If $X_1, X_2, \ldots, X_m \geq 1$ with high probability (Lemma~\ref{lem:X lower bound}), 
\begin{align}
P(\text{no pair of nodes in } \mathcal{X}_1 \cup \ldots \cup \mathcal{X}_m \text{ visit each other in the window}) \leq 
e^{-m(m-1)/(2l k^{l-1})}. 
\end{align}
The probability of missed detection is less than $1/e$ if 
\begin{align}
 m > \sqrt{2l k^{l-1}} + 1 = \Theta(\sqrt{n})
\end{align}
\end{proof}
Thus, if there are more than $\sqrt{n}$ equivocating nodes, they can be detected with high probability.

\subsection{Sampling Performance}

Next, we analyze the sampling efficiency achieved by Honeybee. 
It is easy to see that if only a single node performs a random walk on a regular graph, it can eventually (after mixing) achieve uniform sampling. 
However, in Honeybee, multiple nodes concurrently perform walks. 
Moreover, the walk of one Honeybee node can cause a node to lose its outgoing edge to a previously sampled neighbor. 
A node's random walk may also not succeed in gaining a new outgoing neighbor for the node if too many other random walks terminate at the same neighbor node. 
We seek to understand the effective rate at which a Honeybee node can sample new peers, in a network where multiple nodes are simultaneously engaged in sampling at any round. 

\smallskip
\noindent 
{\bf Model, assumptions, and notation.}
We extend the model used in \S\ref{oa_enforcement}. 
Time is divided into rounds. 
At the end of round $t$, let $M_u(t)$ be the out-address table of node $u$ for any $u \in V$. 
We assume $M_u(t)$ is limited to at most $k$ entries for any $u \in V$ at all times. 
Each node also has an in-address table. 
The in-address table for node $u \in V$ at the end of round $t$ is denoted by $I_u(t)$. 
Each round a subset of nodes $E(t)$ become eligible to conduct a random walk. 
The number of eligible nodes satisfies $|E(t)| = \eta n$, where $\eta > 0$ is a parameter. 
The eligibility of nodes follows a periodic schedule: $E(t) = E(t + 1/\eta)$ for all time $t\geq 0$.
Also, $\cup_{t' = t}^{t+1/\eta} E(t) = V$ for all $t \geq 0$.  
Time is further divided into epochs, where each epoch lasts for $1/\eta$ rounds (assumed to be an integer). 
Let $e(t) = \lfloor t\eta \rfloor$ be the epoch number at round $t$. 

If a node $u$ is eligible at round $t$, it performs a random walk to update the $(e(t)\bmod k)$-th entry of its out-address table. 
When the random walk terminates (say, at node $v$), the initiator of the walk $u$ makes a sampling request to $v$. 
If $v$ accepts the request, $u$ adds $v$ at the $(e(t)\bmod k)$-th index in its out-address table, while $v$ adds $u$ within its in-address table (the index where it is added in $v$'s in-address table does not matter). 

We assume a node $v$ tries to accept all incoming sampling requests at round $t$ as follows.
At the end of round $t$, after all eligible nodes have terminated their respective random walks, node $v$ removes incoming edges $(u', v)$ such that: (1) $u' \in E(t)$, and (2) $M_{u'}^{e(t)}(t) = v$. 
Let $J_v(t) \subseteq I_v(t)$ be the set of incoming edges at $v$ that are leftover. 
Next, suppose $v$ receives $i\geq 0$ sampling requests. 
If $i \geq k$, node $v$ chooses $k$ requests out of the $i$ requests randomly, and admits them. 
Node $v$ deletes all $J_v(t)$ existing incoming edges. 
If $i < k$, node $v$ accepts all $i$ requests. 
It deletes $\max(0, i-k+|J_v(t)|)$ incoming edges randomly from $J_v(t)$. 

For any $u \in V, i \in \{0,1,\ldots, k-1 \}$ and $t\geq 0$, $M_u^i(t) \in V \cup \{\phi\}$. 
Here $M_u^i(t) = \phi$ means that node $u$ does not have a valid outgoing edge at the $i$-th index in its out-address table at time $t$. 
Note that it is possible for $M_u^i(t) = u$ (i.e., a self-loop) and also for $M_u^i(t) = M_u^{j}(t)$ for $i\neq j$ (i.e., multiple out-edges pointing at the same neighbor). 

If $u \in E(t)$ and $i = e(t)\bmod k$, node $u$ conducts a random walk to update $M_u^i$. 
We assume the random walk terminates at a node sampled uniformly in $V$. 
The randomness used for this sampling is independent of all other randomness in the system. 

\smallskip 
\noindent
{\bf Problem formulation.}
We aim to evaluate Honeybee's performance using two metrics: sample success and sample retention. 
Sample success is the likelihood that an eligible node successfully updates its out-address table during a round. 
Formally, for a node $u$ at time $t$ such that $u \in E(t)$, the sampling success for $u$ at time $t$ is defined as 
\begin{align}
\sigma_{u, t} = \mathbb{E}[\mathbf{1}_{M_u^{e(t)\bmod k}(t) \neq \phi}].
\end{align}
Over a time horizon of $T$ rounds, the average number of successful samples obtained by $u$ per epoch is $\sum_{t=0: u \in E(t)}^{T-1} \sigma_{u,t}/\sum_{t=0:u\in E(t)}^{T-1}1$. 
When a node is successful in sampling during a round, the sample retention metric measures the time the node retains the sample without being evicted by the sampled node. 
Ideally, a newly obtained sample must be retained for $k$ epochs ($k/\eta$ rounds); however, as discussed above, a node can delete some of its incoming edges to make room for new sampling requests over time. 
For a node $u$ that is eligible at round $t$, we define $u$'s sampling retention at $t$ as
\begin{align}
\rho_{u,t} = \mathbb{E}\left[ \sum_{t'= t: u\in E(t)}^{t+k/\eta-1} \mathbf{1}_{M_u^{e(t)\bmod k}(t') \neq \phi} \right].  
\end{align}

\smallskip 
\noindent
{\bf Analysis.}
Our main result is the following theorem. 
\begin{theorem}
For any node $u \in V$ and round $t$ such that $u \in E(t)$ we have 
\begin{align}
    \sigma_{u,t} \geq 1 - \frac{\eta - 1/n}{k}, \\
    \rho_{u,t} \geq \sigma_{u,t}\frac{1-e^{-k}}{1-e^{-\eta}}. 
\end{align}
\end{theorem}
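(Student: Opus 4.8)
The plan is to prove the two inequalities separately, since the sample-success bound on $\sigma_{u,t}$ feeds directly into the retention bound on $\rho_{u,t}$. Both reduce to a balls-into-bins picture: each of the $\eta n$ eligible walkers lands at a node sampled uniformly and independently, so the number of requests arriving at any fixed target is binomial with mean $\approx \eta$, and since $\eta < 1 \le k$, overflow of the $k$-slot in-table is rare.

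\textbf{Bounding $\sigma_{u,t}$.} I would condition on the terminal node $v$ of $u$'s walk, which is uniform on $V$. By the independence of walk terminations, the number of \emph{other} eligible walkers landing at $v$ is $N_v \sim \mathrm{Binomial}(\eta n - 1, 1/n)$, with mean $\eta - 1/n$. Under the acceptance rule, $u$'s fresh request is admitted with probability $1$ whenever the total number of fresh requests $i = N_v + 1$ satisfies $i \le k$, and with probability $k/i$ when $i > k$ (since $v$ then keeps $k$ of the $i$ requests uniformly at random, and old in-edges are evicted rather than new ones). Hence $1 - \sigma_{u,t} = \sum_{j \ge k} P(N_v = j)\,\frac{j+1-k}{j+1}$. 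I would then use the elementary inequality $\frac{j+1-k}{j+1} \le \frac{j}{k}$ for $j \ge k$, which rearranges to $(j-k)(j+1) + k^2 \ge 0$, to obtain $1 - \sigma_{u,t} \le \frac{1}{k}\sum_{j} j\,P(N_v = j) = \frac{\mathbb{E}[N_v]}{k} = \frac{\eta - 1/n}{k}$, giving the first claim.

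\textbf{Bounding $\rho_{u,t}$.} I would write $\rho_{u,t} = \sum_{r=0}^{k/\eta - 1} q_r$, where $q_r$ is the probability that the edge $(u,v)$ created at round $t$ at index $e(t)\bmod k$ is still present at round $t+r$, so that $q_0 = \sigma_{u,t}$. The crucial structural observation is that throughout this window of $k/\eta$ rounds ($k$ epochs) the node $u$ never revisits index $e(t)\bmod k$, so that slot can empty only through eviction by $v$, and once empty it stays empty; the survival events are therefore nested and $q_r \ge \sigma_{u,t}\, s^r$ for a per-round survival lower bound $s$. To get $s \ge e^{-\eta}$, note the edge always sits in the leftover set $J_v$ (it is not reclaimed, since $u$ does not overwrite it): given $i$ fresh requests with $i < k$ and current in-table size $J \le k$, it survives with probability $(k-i)/J \ge 1 - i/k$, while for $i \ge k$ all leftovers are purged. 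Taking expectation over $i \sim \mathrm{Binomial}(\eta n, 1/n)$ (mean $\eta$) gives $s \ge \mathbb{E}[(1 - i/k)^+] \ge 1 - \eta/k \ge e^{-\eta}$. Summing the geometric series $\sum_{r=0}^{k/\eta - 1} e^{-\eta r} = \frac{1 - e^{-k}}{1 - e^{-\eta}}$ then yields $\rho_{u,t} \ge \sigma_{u,t}\,\frac{1 - e^{-k}}{1 - e^{-\eta}}$.

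\textbf{Main obstacle.} The delicate point is the per-round survival factor: the true conditional survival probability depends on the history-dependent occupancy $J_v$ and on the fresh (history-independent) request count $i$, so I must argue that the clean lower bound $1 - i/k$ holds \emph{uniformly over all histories}, using only $J \le k$, so that the per-round factors compose multiplicatively despite the dependence across rounds and across the evolving in-table. A secondary subtlety is justifying $1 - \eta/k \ge e^{-\eta}$: this fails at $k = 1$ but holds for all $k \ge 2$ on $\eta \in [0,1]$ (and comfortably in the regime $k = \Omega(\log n)$ of interest), which I would verify by a short monotonicity argument on $g(\eta) = 1 - \eta/k - e^{-\eta}$.
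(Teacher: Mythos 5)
Your proof is correct, and its first half coincides with the paper's own argument: the paper likewise conditions on the terminal node, models the number of competing requests as $\mathrm{Binomial}(\eta n-1,1/n)$, uses the acceptance probability $k/(j+1)$ for $j\geq k$, and then invokes ``the Markov inequality'' to conclude $1-\sigma_{u,t}\leq(\eta-1/n)/k$; your explicit inequality $\frac{j+1-k}{j+1}\leq\frac{j}{k}$ for $j\geq k$ is exactly the missing elementary step that makes that terse citation rigorous. For the retention bound the skeleton is the same (a per-round survival factor composed over the $k/\eta$ rounds of the window, followed by a geometric sum), but your key estimate is genuinely different. The paper lower-bounds per-round survival by the probability that no eligible walker lands at $v$ at all, namely $(1-1/n)^{\eta n}$, and then replaces this by $e^{-\eta}$ ``for large $n$''; since $(1-1/n)^{\eta n}\leq e^{-\eta}$, that substitution goes the wrong way for a lower bound, so the paper's chain only yields the stated constant asymptotically. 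You instead bound the expected surviving fraction of the leftover in-edges, $\mathbb{E}\left[(1-i/k)^{+}\right]\geq 1-\eta/k\geq e^{-\eta}$ (valid for $k\geq 2$), exploiting that $v$ evicts only $\max(0,\,i-k+|J_v|)$ old edges rather than all of them. This is both tighter (for small $\eta$, $1-\eta/k$ is much closer to $1$ than $e^{-\eta}$) and non-asymptotic, so it actually proves the inequality as stated; the price is the extra care you correctly flag about making the survival bound uniform over the history-dependent occupancy $|J_v|\leq k$ so that the factors compose across rounds --- a dependence issue the paper's cruder ``nobody arrives at $v$'' event sidesteps entirely.
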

\begin{proof}
We first evaluate the sampling success of a node. 
Consider a node $u$ at round $t$ such that $u \in E(t)$. 
Let $\mathcal{E}_u^v(t)$ be the event that $u$'s random walk terminates at $v$ at time $t$.  
We have 
\begin{align}
\sigma_{u, t} &= P(M_u^{e(t) \bmod k}(t) \neq \phi) \notag \\
&= \sum_{v \in V} P(\mathcal{E}_u^v(t)) P(M_u^{e(t) \bmod k} (t) \neq \phi | \mathcal{E}_u^v(t)) \notag \\
&= \sum_{v \in V} \frac{1}{n} P(M_u^{e(t) \bmod k} (t) \neq \phi | \mathcal{E}_u^v(t)), \label{eq:succsamp1}
\end{align}
since we have assumed each node samples a node uniformly at random. 
Next, let $\mathcal{F}_u^{j,v}(t)$ be the event that $j$ other eligible nodes, for $j=0,1,\ldots,\eta n-1$, also terminate their respective random walks at $v$ in round $t$.   
We have 
\begin{align}
P(M_u^{e(t) \bmod k} (t) \neq \phi | \mathcal{E}_u^v(t)) = \sum_{j=0}^{\eta n-1} P(\mathcal{F}_u^{j,v}(t) | \mathcal{E}_u^v(t)) P(M_u^{e(t) \bmod k} (t) \neq \phi | \mathcal{F}_u^{j,v}(t), \mathcal{E}_u^v(t)).  \label{eq:succsamp2}
\end{align}
Since each eligible node performs its sampling independently of the other nodes, the events $\mathcal{F}_u^{j,v}(t)$ and $\mathcal{E}_u^v(t)$ are independent. By assumption, each node samples a node uniformly at random. 
Therefore, 
\begin{align}
P(\mathcal{F}_u^{j,v}(t) | \mathcal{E}_u^v(t)) = P(\mathcal{F}_u^{j,v}(t)) = \binom{\eta n-1}{j}\frac{1}{n^j}\left(\frac{n-1}{n}\right)^{\eta n-1-j}. \label{eq:succsamp3}
\end{align}
When $v$ receives a set of incoming sampling requests, it tries to admit as many of these requests. 
Specifically, if $v$ receives fewer than or equal to $k$ total requests, all of the requests are admitted; and if $v$ receives more than $k$ requests a random subset of these $k$ requests are admitted. 
Therefore, 
\begin{align}
P(M_u^{e(t)\bmod k}(t) \neq \phi | \mathcal{F}_u^{j, v}(t), \mathcal{E}_u^v(t)) &= 
\begin{cases}
1 & \text{for } j=0,1,\ldots,k-1 \\
\binom{j}{k-1}/\binom{j+1}{k} & \text{for } j=k,k+1,\ldots, \eta n - 1 
\end{cases} \notag \\
&= \begin{cases}
1 & \text{for } j=0,1,\ldots,k-1 \\
k/(j+1) & \text{for } j=k,k+1,\ldots, \eta n - 1 
\end{cases}. \label{eq:succsamp4}
\end{align}
Substituting equations~\eqref{eq:succsamp2}, \eqref{eq:succsamp3}, and \eqref{eq:succsamp4} in equation~\eqref{eq:succsamp1}, we get 
\begin{align}
\sigma_{u,t} =& \sum_{v \in V}\frac{1}{n} \left[ \sum_{j=0}^{k-1} \binom{\eta n-1}{j}\frac{1}{n^j}\left(\frac{n-1}{n}\right)^{\eta n-1-j} + \sum_{j=k}^{\eta n-1} \binom{\eta n-1}{j}\frac{1}{n^j}\left(\frac{n-1}{n}\right)^{\eta n-1-j} \frac{k}{j+1} \right] \notag \\
=& \sum_{j=0}^{k-1} \binom{\eta n-1}{j}\frac{1}{n^j}\left(\frac{n-1}{n}\right)^{\eta n-1-j} + \sum_{j=k}^{\eta n-1} \binom{\eta n-1}{j}\frac{1}{n^j}\left(\frac{n-1}{n}\right)^{\eta n-1-j} \frac{k}{j+1}.   \label{eq:succsamp5}
\end{align}
Using the Markov inequality, equation~\eqref{eq:succsamp5} can be lower bounded as 
\begin{align}
\sigma_{u,t} \geq 1 - \frac{\eta-1/n}{k}.    
\end{align}

Next, we evaluate the sample retention metric for Honeybee. 
As before, we consider a node $u$ that is eligible at time $t$. 
The probability that $u$ retains its out-edge at time $t+1$ is given by 
\begin{align}
P(M_u^{e(t)\bmod k}(t+1) \neq \phi) = P(M_u^{e(t)\bmod k}(t+1) \neq \phi, M_u^{e(t)\bmod k}(t) \neq \phi)  \label{eq:retentsamp0}  \\
= P(M_u^{e(t)\bmod k}(t) \neq \phi) P(M_u^{e(t)\bmod k}(t+1) \neq \phi | M_u^{e(t)\bmod k}(t) \neq \phi), \label{eq:retentsamp1}
\end{align}
where equation~\eqref{eq:retentsamp0} follows because the event $M_u^{e(t)\bmod k}(t+1) \neq \phi$ implies the event $M_u^{e(t)\bmod k}(t) \neq \phi$. 
The first term in the above product, i.e., the probability that $u$'s sampling at time $t$ results in a success is given by equation~\eqref{eq:succsamp5}. 
Let $\mathcal{G}_u^v(t+1)$ be the event that none of the eligible nodes at time $t+1$ make a sampling request to node $v$. 
We can now lower bound the second term in equation~\eqref{eq:retentsamp1} as 
\begin{align}
P(M_u^{e(t)\bmod k}(t+1) \neq \phi | M_u^{e(t)\bmod k}(t) \neq \phi) = \notag \\ \sum_{v\in V} P(M_u^{e(t)\bmod k}(t) = v | M_u^{e(t)\bmod k}(t) \neq \phi) P(M_u^{e(t)\bmod k}(t+1) \neq \phi | M_u^{e(t)\bmod k}(t) = v) \geq  \notag \\
\sum_{v\in V} P(M_u^{e(t)\bmod k}(t) = v | M_u^{e(t)\bmod k}(t) \neq \phi) P(M_u^{e(t)\bmod k}(t+1) \neq \phi, \mathcal{G}_u^v(t+1) | M_u^{e(t)\bmod k}(t) = v) \notag \\
= \sum_{v\in V} P(M_u^{e(t)\bmod k}(t) = v | M_u^{e(t)\bmod k}(t) \neq \phi) P( \mathcal{G}_u^v(t+1) | M_u^{e(t)\bmod k}(t) = v) \notag \\
= \sum_{v\in V} P(M_u^{e(t)\bmod k}(t) = v | M_u^{e(t)\bmod k}(t) \neq \phi) (1-1/n)^{\eta n} \notag  \\
= (1-1/n)^{\eta n}. \notag 
\end{align}
Substituting the above in equation~\eqref{eq:retentsamp1}, we get 
\begin{align}
P(M_u^{e(t)\bmod k}(t+1) \neq \phi) \geq \sigma_{u,t} (1-1/n)^{\eta n}. \notag 
\end{align}
By repeating the above steps, for any $\Delta \in \{1,2,\ldots,k/\eta-1  \}$ we have the lower bound
\begin{align}
P(M_u^{e(t)\bmod k}(t+\Delta) \neq \phi) \geq \sigma_{u,t}(1-1/n)^{\eta n \Delta}. \notag  
\end{align}
Therefore, the expected retention can be given as 
\begin{align}
\rho_{u,t} \geq \sum_{\Delta=0}^{k/\eta-1} \sigma_{u,t} (1-1/n)^{\eta n \Delta}. 
\label{eq:retentsamp2}
\end{align}
For large $n$, equation~\eqref{eq:retentsamp2} can be approximated as 
\begin{align}
\rho_{u,t} \geq \sum_{\Delta=0}^{k/\eta -1} \sigma_{u,t} e^{-\eta \Delta} = \sigma_{u,t} \frac{1 - e^{-k}}{1- e^{-\eta}} \approx \sigma_{u,t}/\eta,   
\end{align}
since $1-e^{-k} \rightarrow 0$ for large $k$ (e.g., $k=\log(n)$) and $e^{-\eta} \approx 1 - \eta$ for small $\eta$ (e.g., $\eta = 0.05$). 
\end{proof}


\subsection{Correctness of Honeybee} \label{analysisModel}
In this section, we theoretically demonstrate that Honeybee nodes achieve near-uniform peer sampling when all nodes follow protocol. 

\smallskip
\noindent 
{\bf Model, assumptions, and notation.} 
For the analysis, we assume that one single node performs random walks according to the \OA~protocol to sample peers and update its address table.
The remaining $V\backslash v$ nodes do not perform random walks, but have oracle access for sampling peers uniformly at random whenever required. 
When a node queries the oracle, it returns an address sampled uniformly at random from $V$. 
Time progresses in discrete rounds $t=0,1,2,\ldots$. 
Each node $v \in V$ has a memory $M_v$ to store addresses of $k$ other nodes in the network. 
Nodes' knowledge about other nodes induces a graph on $V$ which we denote by $G$. 
A snapshot of $v$'s memory and the graph $G$ at time $t$ are denoted as $M_v(t)$ and $G(t)$, respectively. 
At $t=0$, all nodes obtain random sets of addresses from the oracle and $G(0)$ forms a random $k$-regular graph. 
For any two nodes $u, u' \in V$, if $u' \in M_u(t)$ then $u \in M_{u'}(t)$ for all $t\geq 0$. 
We assume all nodes follow the \OA~protocol for this analysis due to the enforcement explained in Appendix~\ref{oa_enforcement}.

During each round, node $v$ does a random walk to update one of the addresses in its address table (memory). 
We let $M_v^i$ be the $i$-th index of the address table of node $v$. 
$M_v^i(t)$ is the $i$-th index of the address table of node $v$ at time $t$. 
In round $t$, node $u$ advances the random walk pertaining to the index \((t\mod k)\) in its address table and attempts to update $M_v^{t\mod k}$. 
A walk is $l$-hops long and can have one of two outcomes: success or failure. 
A successful round is when the terminal node of the walk accepts $v$'s request to mutually add each other's addresses within their respective address tables. 
We assume a walk is successful if the terminal node in the walk is not already in $v$'s address table. 
A successful walk results in an update of $M_v^{t \mod k}$ at the end of round $t$. 
In other words, $M_v^{t\mod k}(t) \neq M_v^{t\mod k}(t+1)$. 
If $u_1, u_2, \ldots, u_l$ is the path taken by $v$ on a successful random walk, with $u_1 \in M_v(t)$ and $u_l \notin M_v(t)$, then $u_l$ is removed from $M_{u_{l-1}}$ (and, $u_{l-1}$ is removed from $M_{u_l}$).
Similarly, $u_1$ is removed from $M_v$ while $v$ is removed from $M_{u_1}$. 
To compensate for the lost addresses from their address tables, we assume the nodes $u_1$ and $u_{l-1}$ connect with each other 
and replenish their tables. 
A failed round is when the random walk terminates at a node that is already in $v$'s address table $M_v(t)$. 
When a random walk fails, the address table does not get updated (i.e.,  $M_v(t) = M_v(t+1)$). 

In Fig. \ref{fig:analysisWalkModel}, we show a toy example of a random walk under the random walk model described above. At the end of the random walk, the initiator node sends a request to the destination node to add each other to their address tables. In other words, at the end of the random walk, the initiator node samples a random node from the penultimate node's address table. 
\begin{figure}[!tbp]
  \centering
  {\includegraphics[width=0.4\textwidth]{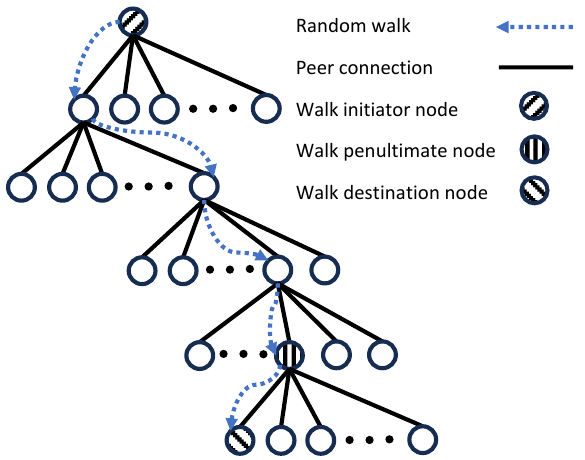}}
  \caption{An example random walk under the model described in \S\ref{analysisModel}. The figure is for illustrative purposes. Some nodes irrelevant to the random walk path are omitted. \OA~is implemented using iterative random walks.}
  \label{fig:analysisWalkModel}
\end{figure}






\smallskip 
\noindent 
{\bf Analysis.}
We define the following two properties for the network.
Our main result shows that the network satisfies these properties for all time rounds $t > 0$.  
\begin{property}\label{property1}
For any two nodes $u, u'$, $M_u(t)$ and $M_{u'}(t)$ are independent and near-uniformly distributed. 
\end{property}
 Property~\ref{property1} says that the address table of any node $u$ is independent of the address table of all nodes and near-uniformly distributed. 
Based on our assumption that all nodes have random addresses sampled from the uniform distribution at round $t=0$, Property~\ref{property1} is trivially satisfied at round $t = 0$. 
\begin{property}\label{property2}
For node $v$, the address table $M_v(t)$ is independent of $M_v(t-k))$ and is near-uniformly distributed. 
\end{property}
We also claim that the address table of node $v$ is independent of $v$'s address table $k$ rounds ago, and is near-uniformly distributed. 
By showing the correctness of properties~\ref{property1} and~\ref{property2}, by induction we can conclude that $v$ generates near-uniform samples that are independent of past samples of $v$. 
\begin{theorem} \label{thm:analysismainresult}
For any time $t$, property 1 and property 2 for node $v$ are satisfied with high probability.   
\end{theorem}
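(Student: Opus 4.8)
The plan is to prove both properties simultaneously by induction on the round $t$, with the mixing time of random walks on expander graphs as the central tool. The base case is $t=0$: Property~\ref{property1} holds because $G(0)$ is drawn as a random $k$-regular graph, so every $M_u(0)$ is exactly uniform and the tables of distinct nodes are mutually independent, while Property~\ref{property2} is vacuous until $t\ge k$ (there is no $M_v(t-k)$ for $t<k$) and I would establish it in the inductive step once a full batch of $k$ refreshes has occurred. For the inductive step I would assume both properties hold at all rounds strictly less than $t$ and show they persist at $t$.

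The heart of the argument is a mixing claim. Under the inductive hypothesis, $G(t-1)$ is distributed, up to a small total-variation error, as a uniformly random $k$-regular graph. Such graphs are, with high probability, expanders: by the standard spectral bound for random regular graphs (Friedman-type / Alon--Boppana bounds), the second-largest eigenvalue magnitude of the walk's transition matrix is bounded away from $1$ by a constant. Consequently a walk of length $l=\Theta(\log n)$ has total-variation distance $o(1)$ to its stationary distribution, which on a $k$-regular graph is exactly uniform. I would use this to conclude two things at once: (i) the terminal node of $v$'s walk in round $t$, and hence the newly written entry $M_v^{t\bmod k}(t)$, is near-uniformly distributed; and (ii) because the walk has mixed, its endpoint is approximately independent of both the walk's starting index and the rest of the configuration, yielding the independence half of Property~\ref{property1}. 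The fresh randomness routing the walk each round, being independent across rounds, reinforces this independence.

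Next I would verify that the local surgery defining the update preserves the invariant. When $v$'s walk $u_1,\ldots,u_l$ succeeds, the edges $(v,u_1)$ and $(u_{l-1},u_l)$ are deleted, $v$ and $u_l$ are joined, and $u_1$ and $u_{l-1}$ are reconnected to restore every degree to $k$, with the remaining nodes in $V\setminus\{v\}$ replenishing any further losses via the uniform oracle. I would argue that, conditioned on near-uniformity at $t-1$, this degree-preserving rewiring maps a near-uniform random $k$-regular graph to another near-uniform random $k$-regular graph (it is a constant-size edge switching, exactly the kind of local move under which the uniform $k$-regular distribution is quasi-stationary), so $G(t)$ again satisfies Property~\ref{property1} up to a controlled increase in the error. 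Finally, for Property~\ref{property2} I would observe that the index written in round $t'$ is $t'\bmod k$, so across rounds $t-k+1,\ldots,t$ every one of the $k$ entries of $M_v$ is overwritten exactly once by a near-uniform, near-independent sample produced by a mixed walk; chaining these refreshes, with a union bound over the $k$ entries to accumulate the $o(1)$ errors, gives that $M_v(t)$ is near-uniform and independent of $M_v(t-k)$.

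The main obstacle I anticipate is the invariant-maintenance step: rigorously controlling how the ``near-uniform random $k$-regular graph'' property degrades under repeated local rewiring, and ensuring the accumulated total-variation error stays $o(1)$ across all rounds $t$ of interest rather than compounding unboundedly. This is delicate because $G(t)$ is a deterministic function of $G(t-1)$ and the walk randomness, so the independence asserted in Property~\ref{property1} is only \emph{approximate} and is purchased entirely through mixing; quantifying the error compounding, and confirming that the spectral gap survives each surgery so that the next round's walk still mixes, is where the real work lies.
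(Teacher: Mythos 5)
Your proposal takes a genuinely different route from the paper, and the difference matters. The paper does not invoke spectral gaps or mixing times at all. Its argument is: (i) model $v$'s walk as a walk on a $k$-regular tree (local tree-likeness of random graphs); (ii) since every address-table entry is itself a uniform sample of $V$ under Property~\ref{property1}, each hop of the walk lands on a uniformly distributed node, so the terminal node is uniform \emph{conditioned on the walk not failing}; (iii) Lemma~\ref{returnProbLemma} bounds the failure probability (terminating at a node already in $M_v(t)$) by a Catalan-number/Stirling computation of the tree return probability, giving $O\left(4^\lambda/(k^\lambda\lambda^{3/2})\right)$; (iv) Lemmas~\ref{theorem1} and~\ref{theorem2} close the induction, where the independence of other nodes' tables is preserved because, in the paper's model, every node other than $v$ has \emph{oracle access} to exact uniform samples and simply re-queries the oracle whenever $v$'s walk disturbs its table. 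Your mixing-on-expanders argument would, if completed, subsume step (iii) (the endpoint being near-uniform already makes the probability of landing in $M_v(t)$ about $k/n$), and it is in some ways more principled, since the paper's claim that per-hop uniformity of table entries yields a uniform endpoint is itself somewhat informal.

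However, the step you flag as ``where the real work lies'' --- showing that the repeated local rewiring keeps $G(t)$ close in total variation to a uniformly random $k$-regular graph, that the spectral gap survives each surgery, and that the per-round errors do not compound unboundedly over many rounds --- is a genuine gap, and it is precisely the difficulty the paper's model assumptions are engineered to remove. Because all nodes in $V\setminus\{v\}$ replenish from an exact uniform oracle, the paper never needs $G(t)$ to remain a near-uniform random regular graph or an expander: the relevant distributional facts are reset exactly at every round, and the only quantity that needs to be controlled is the single walk's failure probability. If you keep the oracle assumption (as you mention in passing), your quasi-stationarity and error-accumulation machinery is unnecessary; if you drop it, you are proving a strictly stronger statement than the paper does, and you have not supplied the argument that would carry it. As written, the crux of your induction is asserted rather than proved, so the proposal does not yet constitute a proof of the theorem even in the paper's simplified model.
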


For a random graph, it is known that the local subgraph around a node is tree-like \cite{dreier2018local}. 
Therefore, we consider node $v$'s random walk on the \OA~network as a random walk on a tree of degree $k$. 
Let $R_{u, t, i}$ be the sequence of nodes visited in the random walk initiated by node $u$ pertaining to the \(i\)-th peer in its address table at round $t$. 
For a random walk, there are \(k^l\) total potential outcomes. 
Suppose \(R_{G,l,k}(u)\) is the event that a random walk of length \(l\) starting at node \(u\) returns to \(u\) as its destination. 
We observe that the probability that a random walk of length \(l\) starting at node \(u\) returns to \(u\) as its destination decreases exponentially with the increase of the random walk length. Since \(G\) is a regular tree, the probability of \(R_{G,l,k}(u)\) arrives at the neighbors of \(u\) is bounded from above by \(P(R_{G,l,k}(u))\). Based on our definitions of successful random walk rounds and failed random walk rounds, we formalize our observations into the following lemma.


\begin{lemma}\label{returnProbLemma}
For a random walk of length \(l=2\lambda\), the probability that it fails is \(O \left( \frac{4^\lambda}{k^\lambda \lambda^{\frac{3}{2}}} \right)\).
\end{lemma}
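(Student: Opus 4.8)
The plan is to reduce the failure probability to the probability $P(R_{G,l,k}(u))$ that the walk returns to its origin, and then to bound this return probability by a direct combinatorial count of closed walks on the $k$-regular tree. First I would use the tree-like structure asserted above and model node $v$'s walk as a simple random walk on an infinite $k$-regular tree rooted at $v$, each hop choosing one of the $k$ neighbors uniformly, so that every length-$2\lambda$ trajectory has probability $1/k^{2\lambda}$. A round fails only when the walk terminates at a node close to the root (inside $M_v(t)$). Using reversibility and Cauchy--Schwarz, $P_{2\lambda}(x,y) \le \sqrt{P_{2\lambda}(x,x)\,P_{2\lambda}(y,y)}$, and vertex-transitivity of the tree makes the right-hand side equal to the return probability $P_{2\lambda}(\mathrm{root},\mathrm{root}) = P(R_{G,l,k}(u))$ for every target $y$. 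Hence the probability of landing on any fixed nearby node is dominated by the return probability, which is exactly the domination claimed in the paragraph preceding the lemma; so it suffices to bound $P(R_{G,l,k}(u))$ up to a constant factor.

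The core step is to count $W_{2\lambda}$, the number of length-$2\lambda$ closed walks from the root, since $P(R_{G,l,k}(u)) = W_{2\lambda}/k^{2\lambda}$. On a tree, each closed walk projects onto its sequence of away-from-root and toward-root steps; for the walk to return to the root while never dipping below it, this sequence must be a Dyck path of semilength $\lambda$, and there are exactly $C_\lambda = \frac{1}{\lambda+1}\binom{2\lambda}{\lambda}$ of them. Every toward-root step is forced (it goes to the unique parent), whereas each of the $\lambda$ away-from-root steps chooses a child and thus has at most $k$ options. Therefore $W_{2\lambda} \le C_\lambda\,k^\lambda$, and dividing by $k^{2\lambda}$ gives $P(R_{G,l,k}(u)) \le C_\lambda/k^\lambda$.

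It remains to extract the polynomial factor via Stirling's formula: since $\binom{2\lambda}{\lambda} = \Theta(4^\lambda/\sqrt{\lambda})$, the extra $1/(\lambda+1)$ in the Catalan number yields $C_\lambda = O(4^\lambda/\lambda^{3/2})$. Combining this with the previous bound gives $P(R_{G,l,k}(u)) = O\!\left(\frac{4^\lambda}{k^\lambda\lambda^{3/2}}\right)$, and by the reduction the failure probability inherits the same order. I would emphasize that the $\lambda^{-3/2}$ decay comes precisely from the Catalan factor $1/(\lambda+1)$ rather than from the central binomial coefficient alone, so keeping $C_\lambda$ (and not a crude $\binom{2\lambda}{\lambda}$ bound) is essential.

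The step I expect to be the main obstacle is the reduction itself: rigorously arguing that ``terminating inside $M_v(t)$'' is captured, up to a constant, by the return event $R_{G,l,k}(u)$. This is delicate because the tree is bipartite (so the parity of $l=2\lambda$ constrains which nodes are even reachable), because $M_v(t)$ drifts as the walk-induced edge swaps add and remove addresses, and because the tree-like approximation of the $k$-regular graph only holds out to distance $O(\log n)$. The combinatorial count and the Stirling estimate, by contrast, are routine once the walk is cast as a simple random walk on the regular tree.
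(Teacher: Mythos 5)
Your proposal is correct and follows essentially the same route as the paper: reduce the failure event to the return probability $P(R_{G,l,k}(u))$ on the $k$-regular tree, count the closed walks via Dyck paths and the Catalan number $C_\lambda$, and extract the $\lambda^{-3/2}$ factor from Stirling's approximation. The only differences are refinements — you justify the domination of nearby-node hits by the return probability via reversibility and Cauchy--Schwarz where the paper merely asserts it, and you use the clean upper bound $C_\lambda k^\lambda$ for the walk count where the paper writes $C_\lambda k(k-1)^{\lambda-1}$; both yield the same $O\!\left(\frac{4^\lambda}{k^\lambda \lambda^{3/2}}\right)$ conclusion.
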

\begin{proof}
    In the regular tree model mentioned before, we can calculate the probability that a random walk of length \(l\) starting at node \(u\) returns to \(u\) as its destination \(P(R_{G,l,k}(u))\) with the potential outcomes for the event \(R_{G,l,k}(u)\), \(n(R_{G,l,k}(u))\), and the total number of potential outcomes, \(k^l\), as follows. 
\begin{align}
P(R_{G,l,k}(u)) &= \frac{n(R_{G,l,k}(u))}{k^l} \label{overall_prob}
\end{align}
A random walk of length \(l\) starting at node \(u\) can return to \(u\) if and only if it consists of \(\frac{l}{2}\) hops outward and \(\frac{l}{2}\) hops inward, and \(l\) is even. The numerator of the right side of Equation (\ref{overall_prob}) can be calculated as follows.
\begin{align}
n(R_{G,l,k}(u)) &= \left[\binom{l}{\frac{l}{2}} - \binom{l}{\frac{l}{2}+1}\right] k (k-1)^{\frac{l}{2}-1}\\
&= \left[\binom{2\lambda}{\lambda} - \binom{2\lambda}{\lambda+1}\right] k (k-1)^{\lambda-1} \label{catalan}
\end{align}
The Catalan number \(C_\lambda=\binom{2\lambda}{\lambda} - \binom{2\lambda}{\lambda+1}\) in Equation (\ref{catalan}) can be expressed as follows.
\begin{align}
C_n &= \binom{2\lambda}{\lambda} - \binom{2\lambda}{\lambda+1}\\
&= \frac{1}{\lambda+1}\binom{2\lambda}{\lambda}\\
&= \frac{1}{\lambda+1} \cdot \frac{(2\lambda)!}{(\lambda!)^2} \label{before_sterling}
\end{align}
Using Stirling's approximation, Equation (\ref{before_sterling}) can be expressed as follows. 
\begin{align}
C_\lambda &= \frac{1}{\lambda+1} \cdot \frac{\sqrt{4\pi \lambda} (\frac{2\lambda}{e})^{2\lambda}(1+O(\frac{1}{\lambda}))}{(\sqrt{2\pi \lambda} (\frac{\lambda}{e})^{\lambda}(1+O(\frac{1}{\lambda})))^2}\\
&= \frac{1}{\lambda(1+O(\frac{1}{\lambda}))} \cdot \frac{4^\lambda}{\sqrt{\pi \lambda}} \cdot \frac{1}{1+O(\frac{1}{\lambda})}\\
&= \frac{4^\lambda}{\sqrt{\pi}\lambda^{\frac{3}{2}}}\left(1+O\left(\frac{1}{\lambda}\right)\right)
\end{align}
Then, by taking \(C_\lambda\) back to Equation (\ref{catalan}), we have the following.
\begin{align}
n(R_{G,l,k}(u)) &= \frac{4^\lambda}{\sqrt{\pi}\lambda^{\frac{3}{2}}}\left(1+O\left(\frac{1}{\lambda}\right)\right) k (k-1)^{\lambda-1}
\end{align}
Hence, we have the probability \(P(R_{G,l,k}(u))\) as follows.
\begin{align}
P(R_{G,l,k}(u)) &= \frac{\frac{4^\lambda}{\sqrt{\pi}\lambda^{\frac{3}{2}}}(1+O(\frac{1}{\lambda})) k (k-1)^{\lambda-1}}{k^{2\lambda}}\\
&= \frac{k (k-1)^{\lambda-1} 4^\lambda(1+O(\frac{1}{\lambda}))}{k^{2\lambda}\sqrt{\pi}\lambda^{\frac{3}{2}}}\\
&= O \left( \frac{k^\lambda 4^\lambda}{k^{2\lambda}\sqrt{\pi}\lambda^{\frac{3}{2}}} \right)\\
&= O \left( \frac{4^\lambda}{k^{\lambda} \lambda^{\frac{3}{2}}} \right)
\end{align}
Specifically, in the \OA~simulation and evaluation, each node stores 24 distinct addresses in its address table, and we have \(k=24\). Thus, in the \OA~network regular tree, \(P(R_{G,l,k}(u))\) can be expressed as follows. 
\begin{align}
P(R_{G,l,k}(u)) &= O \left( \frac{4^\lambda}{24^{\lambda} \lambda^{\frac{3}{2}}} \right)\\
&\approx O \left( \frac{1}{4^{1.2925\lambda} \lambda^{\frac{3}{2}}} \right)
\end{align}
For our analysis, we do not replace \(k\) with a specific number. Therefore, based on our definition of a successful random walk and a failed random walk, the probability that a round fails is \(O \left( \frac{4^\lambda}{k^\lambda \lambda^{\frac{3}{2}}} \right)\), and the probability that a round succeeds is \(1 -  O \left( \frac{4^\lambda}{k^\lambda \lambda^{\frac{3}{2}}} \right)\). In addition, note that these bounds are for situations in which \(l\) is even. \(l\) can be odd in \OA~random walks, and when \(l\) is odd, the random walk does not fail (as discussed in the second paragraph of this proof), further increasing the probability of a round succeeding. 
\end{proof}


We are not ready to prove Theorem~\ref{thm:analysismainresult}. 
We first show that if property~\ref{property1} is true for time $t$, then property~\ref{property2} is true for time $t+k$ with high probability. 
We formalize our observation into the following lemma. 
\begin{lemma}\label{theorem1}
If property~\ref{property1} is true at time $t$, then $P(M_v(t+k) | M_v(t))$ is uniformly distributed with probability \(\left(1 -  O \left( \frac{4^\lambda}{k^\lambda \lambda^{\frac{3}{2}}} \right)\right)^k\).
\end{lemma}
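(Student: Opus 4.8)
The plan is to exploit the cyclic update schedule together with the per-round success probability from Lemma~\ref{returnProbLemma}. First I would observe that as $t'$ ranges over the $k$ consecutive rounds $t, t+1, \ldots, t+k-1$, the updated index $(t' \bmod k)$ takes each value in $\{0,1,\ldots,k-1\}$ exactly once. Hence by round $t+k$ every one of the $k$ entries of $M_v$ has been the target of exactly one random walk, so that if all $k$ of these walks succeed the entire table has been overwritten. This reduces the lemma to two claims: (i) each successful walk replaces its entry with a near-uniform sample that is independent of $M_v(t)$, and (ii) the $k$ walks succeed simultaneously with probability $(1 - O(4^\lambda/(k^\lambda \lambda^{3/2})))^k$.

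For claim (ii) I would invoke Lemma~\ref{returnProbLemma} directly: in round $t'$ the walk advancing index $(t' \bmod k)$ fails precisely when it returns to a node already in $M_v(t')$, an event of probability $O(4^\lambda/(k^\lambda \lambda^{3/2}))$ on the degree-$k$ tree. Because each walk draws fresh, independent randomness per the model, the success events across the $k$ rounds are independent, and their joint probability is the product, which yields the stated bound $(1 - O(4^\lambda/(k^\lambda \lambda^{3/2})))^k$.

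For claim (i) the key tool is Property~\ref{property1} at time $t$. A successful walk $u_1, u_2, \ldots, u_l$ starts at a node $u_1 \in M_v(t)$ but terminates by sampling $u_l$ from the penultimate node $u_{l-1}$'s address table. Since $u_{l-1}$ is an ordinary node, Property~\ref{property1} guarantees that $M_{u_{l-1}}(t)$ is near-uniformly distributed and independent of $M_v(t)$; hence the terminal sample $u_l$ is near-uniform and carries no dependence on the starting table. Repeating this across the $k$ distinct penultimate tables yields $k$ mutually near-independent, near-uniform entries, so that $M_v(t+k)$ is near-uniformly distributed and independent of $M_v(t)$, which is exactly Property~\ref{property2} for $v$ at time $t+k$.

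The main obstacle I anticipate is controlling the bias introduced by conditioning on walk success: a successful walk is by definition one that avoids the nodes currently in $M_v(t)$, which slightly skews the terminal distribution away from exact uniformity. I would argue this perturbation is absorbed by the \emph{near}-uniform qualifier, since the excluded set has size at most $k = O(\log n)$ out of $n$ nodes and the failure probability is itself $o(1)$. A secondary subtlety is ensuring that the sequential rewiring performed by earlier rounds in the window does not induce dependence among the later terminal samples; here the tree-like locality of $G$ together with the independence guarantee of Property~\ref{property1} ensures that each walk effectively reads from a fresh, independent neighborhood, so the $k$ updates remain mutually near-independent.
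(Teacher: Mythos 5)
Your proposal is correct and follows essentially the same route as the paper's proof: the per-round failure bound from Lemma~\ref{returnProbLemma}, independence of the $k$ walks giving the product $\left(1-O\left(\frac{4^\lambda}{k^\lambda\lambda^{3/2}}\right)\right)^k$, and uniformity of the refreshed table conditional on all walks succeeding. You in fact supply more detail than the paper does (the cyclic index schedule, the appeal to Property~\ref{property1} at the penultimate nodes, and the conditioning-on-success bias absorbed into ``near-uniform''), all of which the paper leaves implicit.
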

\begin{proof}
The lemma holds true because, as shown in the proof of lemma~\ref{returnProbLemma}, 
 the random walk of a single round has a success probability of \(1 -  O \left( \frac{4^\lambda}{k^\lambda \lambda^{\frac{3}{2}}} \right)\). With \(k\) rounds of random walks, the probability that none of them fails is simply \(\left(1 -  O \left( \frac{4^\lambda}{k^\lambda \lambda^{\frac{3}{2}}} \right)\right)^k\). Since none of the random walks fails, the address table of the random walk initiator node is uniformly distributed at round $t+k$, which means Property \ref{property2} holds with probability that none of the random walks fails.
\end{proof}

Therefore, 
 after \(k\) random walk rounds, node \(v\) still has an address table that is uniformly distributed with a high probability. 
Next, we show that after \(k\) random walk rounds, the address tables of other nodes are uniformly distributed as well. 
\begin{lemma}\label{theorem2}
If property~\ref{property1} holds at time $t-k$ and property~\ref{property2} holds at time $t$, then property~\ref{property1} holds at time $t$. 
\end{lemma}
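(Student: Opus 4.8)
The plan is to recognize that each successful random walk by node $v$ acts as a degree-preserving edge swap on the graph $G$, and to argue that such swaps preserve the near-uniform distribution over $k$-regular graphs. Property~\ref{property1} at time $t$ then follows by combining this preservation property with Property~\ref{property2} at time $t$ (which controls node $v$'s own table) and Property~\ref{property1} at time $t-k$ (which controls every other table).

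First I would characterize the per-round update as a single switch. When $v$'s walk in a round $t' \in \{t-k, \ldots, t-1\}$ succeeds along a path $u_1, u_2, \ldots, u_l$, the only changes to $G$ are the deletion of the edges $(v, u_1)$ and $(u_{l-1}, u_l)$ together with the insertion of $(v, u_l)$ and $(u_1, u_{l-1})$. This transformation is degree-preserving and is precisely one step of the flip/switch Markov chain whose unique stationary distribution is uniform over $k$-regular graphs. Because the walk's first-hop index and its terminus are selected using fresh public randomness that is independent of the current configuration of $G$, and because the mixing argument underlying Lemma~\ref{theorem1} makes the terminus near-uniform, each such update sends a near-uniform $G$ to a near-uniform $G$, up to the $o(1)$ slack inherited from the return/failure probability of Lemma~\ref{returnProbLemma}.

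Next I would treat the two kinds of nodes separately. For node $v$: over the $k$ rounds in the interval $[t-k, t]$ the index $t' \bmod k$ cycles through all $k$ slots, so every entry of $M_v$ is rewritten exactly once, and Property~\ref{property2} at time $t$ certifies that $M_v(t)$ is near-uniform and independent of $M_v(t-k)$, confirming a complete, memoryless refresh. For any other node $u \neq v$: its table is altered only when it appears as one of $u_1$, $u_{l-1}$, or $u_l$ on some walk, and since these partners are drawn near-uniformly while the tables at $t-k$ are already independent and near-uniform by Property~\ref{property1}, the swaps carry the near-uniform product structure forward to time $t$. Assembling both cases shows that the joint law of $\{M_u(t)\}_{u \in V}$, equivalently the law of $G(t)$, is a near-uniform random $k$-regular graph with the independence required by Property~\ref{property1}.

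The main obstacle I expect is the rigorous verification that the composed edge swaps preserve both near-uniformity \emph{and} the independence structure while the $o(1)$ errors accumulate in a controlled way over the $k$ rounds. In particular, establishing independence between $M_v(t)$ and a neighbor's table $M_{u_l}(t)$ is delicate, because the bilateral constraint $u_l \in M_v(t) \iff v \in M_{u_l}(t)$ is a hard correlation that can never vanish; the resolution is to read \textbf{independent} in the sense compatible with $G(t)$ being a uniform random regular graph (the local tree-like picture of \cite{dreier2018local}) and to show the swaps introduce no correlations beyond this symmetry, so that the per-round error stays $o(1)$ and the $k$-fold union bound still yields the high-probability conclusion.
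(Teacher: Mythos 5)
Your proposal takes a genuinely different route from the paper's, and the route has a gap at its central step. The paper's proof is a conditioning argument: it writes $P(M_u(t+k)\mid M_v(t+k))$ as a sum over $a$ of $P(M_v(t))\,P(M_v(t+k)\mid M_v(t))\,P(M_u(t+k)\mid M_v(t+k), M_v(t)=a)$, handles the first two factors by the hypothesis and by Lemma~\ref{theorem1}, and disposes of the third by splitting nodes $u$ into those untouched by $v$'s walks (table unchanged, hence still uniform) and those touched by them, for which it invokes the model's \emph{oracle}: a disturbed node immediately replenishes its table with a fresh address sampled uniformly at random from $V$, so its table remains uniform by construction. Your proof never uses this oracle-resampling assumption; instead you try to show that the walker-induced edge swap itself preserves the distribution.

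The gap is in that preservation claim. The transformation you describe --- delete $(v,u_1)$ and $(u_{l-1},u_l)$, insert $(v,u_l)$ and $(u_1,u_{l-1})$ --- is degree-preserving, but it is \emph{not} a step of the flip/switch Markov chain, because the pair of edges being switched is not chosen uniformly at random over edge pairs: one edge is always incident to the walker $v$ (at the slot $t'\bmod k$), and the other is the terminal edge of $v$'s walk, so the two are correlated through the walk's trajectory. The uniform distribution over $k$-regular graphs is stationary for the \emph{uniform} switch kernel; stationarity for this walker-centric kernel does not follow from citing that fact and would require its own detailed-balance or coupling argument, which you do not supply and which is not obviously true. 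Consequently your treatment of the affected nodes $u_1$, $u_{l-1}$, $u_l$ rests on an unproved assertion, whereas the paper's (admittedly informal) proof closes exactly this case by appealing to the oracle. Your observation that Property~\ref{property1}'s ``independence'' is in tension with the bilateral constraint $u'\in M_u \Leftrightarrow u\in M_{u'}$ is a fair criticism of the paper's statement, but flagging it does not repair the missing stationarity argument.
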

\begin{proof}
$P(M_u(t+k)| M_v(t+k))$ can be expressed as follows.
\begin{align}
\sum_{a} P(M_v(t)) P(M_v(t+k) | M_v(t)) P(M_u(t+k) | M_v(t+k), M_v(t) = a)\label{theorem6Eq}
\end{align}

The first part in Summation (\ref{theorem6Eq}), $P(M_v(t))$ is uniformly distributed by assumption. 
We have already shown that the second part, $P(M_v(t+k) | M_v(t)$, is uniformly distributed. 
The third part, $P(M_u(t+k) | M_v(t+k), M_v(t))$, involves two kinds of nodes for $u$: (1) nodes that are not affected by $v$'s random walks, and (2) nodes that are affected by $v$'s random walks (i.e., nodes that have their address tables changed due to $v$'s random walks). 
For the first kind of nodes, $P(M_u(t+k)  | M_v(t+k), M_v(t))$ is uniformly distributed because $u$'s address table stays the same. For the second kind of nodes, $P(M_u(t+k)  | M_v(t+k) , M_v(t))$ is still uniformly distributed since $u$ queries the oracle immediately for an address sampled uniformly at random from $V$. Thus, lemma~\ref{theorem2} holds true.
\end{proof}



To conclude, given the address tables of all nodes at round \(t=0\) are uniformly distributed, according to lemma~\ref{theorem2}, Property \ref{property1} is satisfied at all rounds \(t\). 
Since Property \ref{property1} is satisfied at all rounds \(t\), each address sampled by the random walk initiator node is randomly sampled from the uniform distribution, and Property~\ref{property2} is satisfied as well. 
Thus, $P(M_u(t) | M_v(t))$ and $P(M_v(t+k)  | M_v(t))$ are uniformly distributed for all \(t\geq0\) and \(k>0\).
This concludes the proof sketch of Theorem~\ref{thm:analysismainresult}. 

Theorem~\ref{thm:analysismainresult} demonstrates the effectiveness of a single random walk initiated by a node $v$. Assume a \OA~random walk is atomic. That is, at any time $t$, there is only one random walk in graph $G$, and a new random walk starts only after the existing random walk ends. By induction, we can conclude that, for arbitrary positive integers $p$ and $q \geq p$, when $p$ nodes perform a total of $q$ random walks in $G$, they generate $q$ near-uniform samples that are independent of their past samples. 

\end{document}